\newtheorem{theorem}{Theorem}[section]
\newtheorem{prop}{Proposition}
\numberwithin{equation}{section}
\newtheorem{remark}{Remark}[section]
\begin{document}

\date{}
\title{Generalized Gompertz-power series distributions}
\author{ S. Tahmasebi$^1$ , A. A. Jafari$^{2,}$\thanks{Corresponding: aajafari@yazd.ac.ir} \\
{\small $^1$Department of Statistics, Persian Gulf University, Bushehr, Iran}\\
{\small $^{2}$Department of Statistics, Yazd University, Yazd, Iran}
}
\date{}
\maketitle
\begin{abstract}
In this paper, we introduce the generalized Gompertz-power series class of distributions
which is obtained by compounding generalized Gompertz and power series distributions. This compounding procedure follows same way that was previously carried out by
\cite{si-bo-di-co-13}
and
\cite{ba-de-co-11}
 in introducing the compound class of extended Weibull-power series distribution and the Weibull-geometric distribution, respectively. This distribution contains several lifetime models such as generalized Gompertz, generalized Gompertz-geometric, generalized Gompertz-poisson, generalized Gompertz-binomial  distribution, and generalized  Gompertz-logarithmic  distribution as special cases. The hazard rate function of the new class of distributions can be increasing, decreasing and bathtub-shaped. We obtain several properties of this distribution such as its probability density function, Shannon entropy, its mean residual life and failure rate functions, quantiles and moments. The maximum likelihood estimation procedure via a EM-algorithm is presented, and sub-models of the distribution are studied in details.
\end{abstract}

{\bf Keywords:} EM algorithm, Generalized  Gompertz distribution, Maximum likelihood estimation, Power series distributions.

\section{Introduction}

The exponential distribution is commonly used in many applied problems, particularly in lifetime data analysis \citep{lawless-03}. A generalization of this distribution is the Gompertz distribution. It is a lifetime distribution and is often applied to describe the distribution of adult life spans by actuaries and demographers.
 The Gompertz distribution is considered for the analysis of survival in some sciences such as  biology, gerontology, computer, and marketing science.  Recently, \cite{gu-ku-99} defined the generalized exponential distribution
 and in similar manner,
  \cite{el-al-al-13} introduced the generalized Gompertz (GG) distribution. A random variable $X$ is said to have a GG distribution
 denoted by ${\rm GG}(\alpha,\beta,\gamma)$, if its
cumulative distribution function (cdf) is
\begin{eqnarray}\label{eq.FG}
G(x)=[1-e^{-\frac{\beta}{\gamma}(e^{\gamma x}-1)}]^{\alpha},\;\;\alpha,\beta>0,\;\;\gamma>0; \;\;x\geq0.
\end{eqnarray}
and the probability density function (pdf) is
\begin{eqnarray}\label{eq.fG}
{\rm g}(x)= \alpha \beta e^{\gamma x}e^{-\frac{\beta}{\gamma}(e^{\gamma x}-1)}[1-e^{\frac{-\beta}{\gamma}(e^{\gamma x}-1)}]^{\alpha-1}.
\end{eqnarray}

The GG distribution  is a flexible distribution that can be skewed to the right and to the left, and 
the well-known distributions are special cases of this distribution: the generalized exponential  proposed by \cite{gu-ku-99} when $\gamma\rightarrow0^+$, the Gompertz distribution when $\alpha=1$, and the exponential distribution when $\alpha=1$ and $\gamma\rightarrow0^+$.

In this paper, we compound the generalized Gompertz  and power series distributions,  and  introduce a new class of distribution. This procedure follows similar way that was previously carried out by some authors: The exponential-power series distribution is introduced by
\cite{ch-ga-09}
which is concluded the exponential-geometric
\citep{ad-di-05, ad-lo-98},
 exponential-Poisson
 \citep{kus-07},
 and exponential-logarithmic
 \citep{ta-re-08}
  distributions; the Weibull-power series distributions  is introduced by
  \cite{mo-ba-11}
   and is a generalization of the exponential-power series distribution;
the generalized exponential-power series distribution is introduced by
\cite{ma-ja-12}
which is concluded the Poisson-exponential
\citep{ca-lo-fr-ba-11,lo-ca-ba-11}
complementary exponential-geometric
\citep{lo-ro-ca-11},
and the complementary exponential-power series
\citep{fl-bo-ca-13}
distributions; linear failure rate-power series distributions \citep{ma-ja-14}.

The remainder of our paper is organized as follows: In Section \ref{sec.dis}, we give the probability density and failure rate functions of the new distribution. Some properties such as quantiles, moments, order statistics, Shannon entropy and mean residual life  are given in Section \ref{sec.pro}. In Section \ref{sec.spe}, we consider four special cases of this new distribution. We discuss estimation by maximum likelihood and provide an expression for Fisher's information matrix in Section \ref{sec.est}.
A simulation study is performed in Section \ref{sec.sim}.
An application is given in the Section \ref{sec.exa}.

\vspace{2cm}

\section{The generalized Gompertz-power series model}
\label{sec.dis}

A discrete random variable, $N$  is a member of power series distributions (truncated at zero) if its probability mass function is given by
\begin{equation}\label{eq.ps}
p_n=P(N=n)=\frac{a_{n}\theta^{n}}{C(\theta)},\;\; n=1,2,\dots,
\end{equation}
where $a_{n}\geq0$ depends only on $n$, $C(\theta)=\sum_{n=1}^{\infty}a_{n}\theta^{n}$, and $\theta\in(0,s)$ ($s$ can be $\infty$) is such  that $C(\theta)$ is finite.
Table \ref{tab.ps}
summarizes some particular cases of the truncated (at zero) power series distributions (geometric, Poisson, logarithmic and binomial). Detailed properties of power series distribution can be found in
\cite{noack-50}.
Here, $C'(\theta )$, $C''(\theta)$ and $C'''(\theta)$ denote the first, second and third derivatives of $C(\theta)$ with respect to $\theta$, respectively.

\begin{table}[ht]

\begin{center}
\caption{Useful quantities for some power series distributions.}\label{tab.ps}
\begin{tabular}{|l| c c c c c c| }\hline
Distribution
& $a_n$ & $C(\theta)$ & $C^{\prime}(\theta)$ & $C^{\prime\prime}(\theta)$ & $C^{\prime\prime\prime}(\theta)$ &   $s$ \\ \hline
Geometric & $1$ & $\theta (1-\theta)^{-1}$ & $(1-\theta)^{-2}$ & $2(1-\theta)^{-3}$ & $6(1-\theta)^{-4}$ &  $1$ \\
Poisson & $n!^{-1}$ & $e^{\theta}-1$ & $e^{\theta}$ & $e^{\theta}$ & $e^{\theta}$ &  $\infty$ \\
Logarithmic & $n^{-1}$ & $-\log(1-\theta)$ & $(1-\theta)^{-1}$ & $(1-\theta)^{-2}$ & $2(1-\theta)^{-3}$  & $1$ \\
Binomial & $\binom {m} {n}$
& $(1+\theta)^m-1$ & $\frac{m}{(\theta+1)^{1-m}}$ & $\frac{m(m-1)}{(\theta+1)^{2-m}}$ & $\frac{m(m-1)(k-2)}{(\theta+1)^{3-m}}$ &  $\infty$ \\
\hline
\end{tabular}

\end{center}
\end{table}

 We define  generalized Gompertz-Power Series (GGPS) class of distributions denoted as ${\rm GGPS}(\alpha,\beta,\gamma,\theta)$ with cdf
 \begin{equation}\label{FGP}
F(x)=\sum_{n=1}^{\infty} \frac{a_{n}(\theta G(x))^{n}}{C(\theta)}=
\frac{C(\theta G(x))}{C(\theta)}=\frac{C(\theta t^{\alpha})}{C(\theta)}, \;\; x>0,
\end{equation}
where $t=1-e^{-\frac{\beta}{\gamma}(e^{\gamma x}-1)}$.  The pdf of this distribution  is given by
\begin{equation}\label{fGP}
f(x)
=
\frac{\theta \alpha \beta}{C(\theta)}e^{\gamma x}(1-t)t^{\alpha-1}C'\left(\theta t^{\alpha}\right).
\end{equation}

This class of distribution is obtained by compounding the Gompertz distribution and power series class of distributions as follows.
Let $N$ be a  random variable denoting the number of failure causes which it is a member of power series distributions (truncated at zero). For given $N$, let $X_{1},X_{2},\dots,X_{N}$ be a independent random sample of size $N$ from a ${\rm GG}(\alpha,\beta,\gamma)$ distribution.
Let $X_{(N)}=\max_{1\leq i\leq N}X_{i}$. Then, the conditional cdf of $X_{(N)}\mid N=n$ is given by
 $$
G_{X_{(N)}\mid N=n}(x)=[1-e^{-\frac{\beta}{\gamma}(e^{\gamma x}-1)}]^{n\alpha},
$$
which  has ${\rm GG}(n\alpha,\beta,\gamma)$ distribution.
Hence, we obtain
$$
P(X_{(N)}\leq x,N=n)=\frac{a_{n}(\theta G(x))^{n}}{C(\theta)}=
\frac{a_{n}\theta^{n}}{C(\theta)}[1-e^{-\frac{\beta}{\gamma}(e^{\gamma x}-1)}]^{n\alpha}.
$$
Therefore, the marginal cdf of $X_{(N)}$  has GGPS distribution.  This class of distributions can be applied to reliability problems. Therefore,  some of its properties are investigated in the following.

\begin{prop}
 The pdf of GGPS class can be expressed as infinite linear combination of pdf of order distribution, i.e. it can be written as
\begin{eqnarray}
f(x)=\sum_{n=1}^{\infty} p_n \ {\rm g}_{(n)}(x;n\alpha,\beta,\gamma),
\end{eqnarray}
where ${\rm g}_{(n)}(x;n\alpha,\beta,\gamma)$ is the pdf of ${\rm GG}(n\alpha,\beta,\gamma)$.
\end{prop}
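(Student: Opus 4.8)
The plan is to start directly from the closed form of the density in~\eqref{fGP} and expand the factor $C'(\theta t^{\alpha})$ in its defining power series. Since $C(\theta)=\sum_{n\ge1}a_n\theta^n$ has radius of convergence $s$ and, for every $x>0$, $0\le t<1$ forces $\theta t^{\alpha}<\theta<s$, term-by-term differentiation is legitimate and gives $C'(\theta t^{\alpha})=\sum_{n\ge1} n a_n \theta^{n-1} t^{\alpha(n-1)}$. Substituting this into~\eqref{fGP} and pulling the constants into the sum yields
\[
f(x)=\frac{\theta\alpha\beta}{C(\theta)}\,e^{\gamma x}(1-t)\,t^{\alpha-1}\sum_{n\ge1} n a_n \theta^{n-1} t^{\alpha(n-1)}
=\sum_{n\ge1}\frac{a_n\theta^n}{C(\theta)}\;\bigl(n\alpha\beta\,e^{\gamma x}(1-t)\,t^{n\alpha-1}\bigr).
\]

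Next I would identify the two factors in each summand. The first, $a_n\theta^n/C(\theta)$, is exactly $p_n$ from~\eqref{eq.ps}. For the second, I use $1-t=e^{-\frac{\beta}{\gamma}(e^{\gamma x}-1)}$ to write
\[
n\alpha\beta\,e^{\gamma x}(1-t)\,t^{n\alpha-1}
= (n\alpha)\,\beta\,e^{\gamma x}\,e^{-\frac{\beta}{\gamma}(e^{\gamma x}-1)}\bigl[1-e^{-\frac{\beta}{\gamma}(e^{\gamma x}-1)}\bigr]^{\,n\alpha-1},
\]
which is precisely the ${\rm GG}(n\alpha,\beta,\gamma)$ density, i.e.\ ${\rm g}(x)$ in~\eqref{eq.fG} with $\alpha$ replaced by $n\alpha$; this is the factor denoted ${\rm g}_{(n)}(x;n\alpha,\beta,\gamma)$ in the statement. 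Combining, $f(x)=\sum_{n\ge1}p_n\,{\rm g}_{(n)}(x;n\alpha,\beta,\gamma)$, as claimed.

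An alternative route I would mention is to differentiate the mixture representation of the cdf already obtained in the text: summing $P(X_{(N)}\le x,N=n)=\frac{a_n\theta^n}{C(\theta)}[1-e^{-\frac{\beta}{\gamma}(e^{\gamma x}-1)}]^{n\alpha}=p_n\,G(x;n\alpha,\beta,\gamma)$ over $n$ gives $F(x)=\sum_{n\ge1}p_n\,G(x;n\alpha,\beta,\gamma)$, and differentiating termwise reproduces the same series. Either way, the only genuine point to verify is the legitimacy of the term-by-term operation, but since everything is governed by a power series in $\theta t^{\alpha}$ evaluated strictly inside its radius of convergence, both the series and its derivative converge uniformly on compact $x$-sets and the interchange is justified. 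Thus the main ``obstacle'' is really just the bookkeeping of matching the constant $n\alpha$ and the exponent $n\alpha-1$ against the GG density~\eqref{eq.fG}; there is no analytic difficulty beyond that.
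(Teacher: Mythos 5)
Your proposal is correct and follows essentially the same route as the paper: expand $C'(\theta t^{\alpha})=\sum_{n\geq 1}n a_n(\theta t^{\alpha})^{n-1}$ inside \eqref{fGP}, absorb the constants to recognize $p_n$ and the ${\rm GG}(n\alpha,\beta,\gamma)$ density in each term. The extra remarks on convergence and the alternative cdf route are fine but not needed beyond what the paper does.
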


\begin{proof}
Consider $t=1-e^{-\frac{\beta}{\gamma}(e^{\gamma x}-1)}$. So
\begin{eqnarray*}
f(x)&=&\frac{\theta \alpha \beta}{C(\theta)}e^{\gamma x}(1-t)t^{\alpha-1}C'\left(\theta t^{\alpha}\right)
=\frac{\theta \alpha \beta}{C(\theta)}e^{\gamma x}(1-t)t^{\alpha-1}
\sum\limits_{n=1}^{\infty}n a_{n}(\theta t^{\alpha})^{n-1}\\
&=&\sum\limits_{n=1}^{\infty}\frac{a_{n}\theta^{n}}{C(\theta)}n\alpha\beta (1-t)e^{\gamma x} t^{n\alpha-1}
=\sum\limits_{n=1}^{\infty} p_{n} {\rm g}_{(n)}(x;n\alpha,\beta,\gamma).
\end{eqnarray*}
\end{proof}

\begin{prop}\label{prop.1}
The limiting distribution of ${\rm GGPS}(\alpha,\beta,\gamma,\theta)$ when $\theta\rightarrow 0^{+}$ is
\[ {\mathop{\lim }_{\theta \rightarrow 0^{+}} F(x)}=[1-e^{-\frac{\beta}{\gamma}(e^{\gamma x}-1)}]^{c\alpha},\]
which is a GG distribution with parameters $c\alpha$, $\beta$, and $\gamma$, where $c=\min\{n\in \mathbb{N}: a_{n}>0\}$.
\end{prop}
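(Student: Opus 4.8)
The plan is to work directly with the closed form $F(x)=C(\theta G(x))/C(\theta)$ recorded in \eqref{FGP}, where $G(x)=t^{\alpha}=[1-e^{-\frac{\beta}{\gamma}(e^{\gamma x}-1)}]^{\alpha}$. Since $C(0)=\sum_{n\ge 1}a_n0^{n}=0$, both the numerator and the denominator vanish as $\theta\to 0^{+}$, so this is a $0/0$ indeterminate form; the idea is to factor out the common lowest-order power of $\theta$ before passing to the limit.

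First I would invoke the definition $c=\min\{n\in\mathbb{N}:a_n>0\}$, which means $a_n=0$ for $n<c$ while $a_c>0$, to write
\[
C(\theta)=\sum_{n=c}^{\infty}a_n\theta^{n}=\theta^{c}\sum_{n=c}^{\infty}a_n\theta^{\,n-c},
\qquad
C\bigl(\theta G(x)\bigr)=\bigl(\theta G(x)\bigr)^{c}\sum_{n=c}^{\infty}a_n\bigl(\theta G(x)\bigr)^{n-c}.
\]
Dividing, the factor $\theta^{c}$ cancels and we obtain
\[
F(x)=\bigl(G(x)\bigr)^{c}\cdot
\frac{\sum_{n=c}^{\infty}a_n\bigl(\theta G(x)\bigr)^{n-c}}
{\sum_{n=c}^{\infty}a_n\theta^{\,n-c}}.
\]

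Next I would argue that both series in this last fraction are power series in $\theta$ convergent on $(0,s)$: for the denominator this is immediate, and for the numerator it follows because $0\le G(x)\le 1$ forces $\theta G(x)\le\theta<s$, so the argument stays inside the interval of convergence. Each such series is continuous at $\theta=0$, and both take the value $a_c$ there, so the ratio tends to $a_c/a_c=1$ as $\theta\to 0^{+}$. Hence
\[
\lim_{\theta\to 0^{+}}F(x)=\bigl(G(x)\bigr)^{c}
=\bigl[1-e^{-\frac{\beta}{\gamma}(e^{\gamma x}-1)}\bigr]^{c\alpha},
\]
which is exactly the cdf of ${\rm GG}(c\alpha,\beta,\gamma)$ by \eqref{eq.FG}.

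There is no real obstacle here; the only point needing a little care is the termwise passage to the limit inside the series, i.e. verifying that $\theta G(x)$ never leaves the domain on which $C$ and its "deflated" counterpart $\sum_{n\ge c}a_n z^{\,n-c}$ are well behaved — and this is precisely where the bound $G(x)\le 1$ enters. An alternative I would mention only as a remark is to apply L'Hôpital's rule $c$ times to $C(\theta G(x))/C(\theta)$ in the variable $\theta$, using $C^{(k)}(0)=0$ for $k<c$ and $C^{(c)}(0)=c!\,a_c\neq 0$; it gives the same answer but is computationally heavier.
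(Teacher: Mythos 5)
Your proof is correct and follows essentially the same route as the paper: both cancel the common factor $\theta^{c}$ from numerator and denominator and then let the remaining power series tend to their constant terms $a_{c}$, leaving $t^{c\alpha}$. Your version merely adds a bit more care in justifying the termwise limit (via $\theta G(x)\le\theta<s$ and continuity of the deflated series at $0$), which the paper takes for granted.
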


\begin{proof}
Consider $t=1-e^{-\frac{\beta}{\gamma}(e^{\gamma x}-1)}$. So
\begin{eqnarray*}
 {\mathop{\lim }_{\theta\rightarrow 0^{+}} F(x)}&=&\mathop{\lim }_{\theta\rightarrow 0^{+}}\frac{C(\lambda t^\alpha)}{C(\theta)}={\mathop{\lim }_{\lambda\rightarrow 0^{+}}\frac{\sum\limits_{n=1}^{\infty}a_{n}\theta^{n}t^{n\alpha}}{\sum\limits_{n=1}^{\infty}a_{n}\theta^{n}}} \\
&=&{\mathop{\lim }_{\theta\rightarrow 0^{+}}\frac{a_{c}t^{c\alpha}+\sum\limits_{n=c+1}^{\infty}a_{n}\theta^{n-c}t^{n\alpha}}{a_{c}+\sum\limits_{n=c+1}^{\infty}a_{n}\theta^{n-c}}}
=t^{c\alpha}.
\end{eqnarray*}
\end{proof}

\begin{prop}
 The limiting distribution of ${\rm GGPS}(\alpha,\beta,\gamma,\theta)$ when $\gamma\rightarrow 0^+$ is
\[ \lim_{\gamma \rightarrow 0^{+}}  F(x)=\frac{C(\theta(1-e^{-\beta x})^{\alpha})}{C(\theta)},\]
i.e.  the cdf of the generalized exponential-power series class of distribution introduced by
\cite{ma-ja-12}.
\end{prop}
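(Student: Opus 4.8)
The plan is to work directly from the closed form of the cdf in \eqref{FGP}, namely $F(x) = C(\theta t^{\alpha})/C(\theta)$ with $t = 1 - e^{-\frac{\beta}{\gamma}(e^{\gamma x}-1)}$, and to push the limit $\gamma \to 0^+$ inside the function $C$. Since the denominator $C(\theta)$ does not depend on $\gamma$, the whole argument reduces to understanding the behaviour of $\theta t^{\alpha}$ as $\gamma \to 0^+$ and then invoking continuity of $C$.

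First I would compute $\lim_{\gamma\to 0^+} t$. Writing $\frac{\beta}{\gamma}(e^{\gamma x}-1) = \beta x \cdot \frac{e^{\gamma x}-1}{\gamma x}$ and using the elementary limit $\lim_{u\to 0}\frac{e^{u}-1}{u} = 1$ (equivalently, one application of L'Hôpital's rule in $\gamma$), one gets $\frac{\beta}{\gamma}(e^{\gamma x}-1) \to \beta x$, hence $t \to 1 - e^{-\beta x}$ for every fixed $x>0$, and therefore $\theta t^{\alpha} \to \theta(1-e^{-\beta x})^{\alpha}$.

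Next I would justify interchanging the limit with $C$. For every $\gamma>0$ and $x\ge 0$ we have $e^{\gamma x}\ge 1$, so $\frac{\beta}{\gamma}(e^{\gamma x}-1)\ge 0$ and thus $0\le t<1$; consequently $\theta t^{\alpha}\in[0,\theta)\subset[0,s)$, which lies strictly inside the interval of convergence of the power series $C(\theta)=\sum_{n\ge1}a_{n}\theta^{n}$. A power series is continuous on its interval of convergence, so $C$ is continuous at the point $\theta(1-e^{-\beta x})^{\alpha}$, and hence
\[
\lim_{\gamma\to 0^+} C(\theta t^{\alpha}) = C\!\left(\theta(1-e^{-\beta x})^{\alpha}\right).
\]
Dividing by $C(\theta)$ yields $\lim_{\gamma\to 0^+}F(x) = C\!\left(\theta(1-e^{-\beta x})^{\alpha}\right)/C(\theta)$, which is exactly the cdf of the generalized exponential-power series class of \cite{ma-ja-12}.

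I expect the only genuinely delicate point to be this interchange of the limit and $C$, and it is handled by the observation that $\theta t^{\alpha}$ never leaves the range $[0,\theta]$, together with continuity of power series inside their radius of convergence. Everything else is the same elementary limit that already makes the GG distribution degenerate to the generalized exponential distribution as $\gamma\to 0^+$ (the case $a_1=1$, $C(\theta)=\theta$); indeed, one could alternatively phrase the proof as: the GG cdf $G(x)=t^{\alpha}$ converges pointwise to the generalized exponential cdf $(1-e^{-\beta x})^{\alpha}$, and the claim follows by composing with the continuous map $u\mapsto C(\theta u)/C(\theta)$.
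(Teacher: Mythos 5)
Your argument is correct and is essentially the paper's own proof, which simply observes that the GG distribution degenerates to the generalized exponential distribution as $\gamma\to 0^+$ and declares the rest obvious; you have merely supplied the details the paper omits (the elementary limit $\frac{\beta}{\gamma}(e^{\gamma x}-1)\to\beta x$ and the continuity of $C$ on $[0,\theta]\subset[0,s)$ that licenses passing the limit inside). The continuity justification is a worthwhile addition, but the route is the same.
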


\begin{proof}
When $\gamma\rightarrow 0^+$, the generalized Gompertz distribution becomes to generalized exponential distribution. Therefore, proof is obvious.
\end{proof}

\begin{prop}
The hazard rate function of the GGPS class of distributions is
\begin{eqnarray}\label{hGP}
 h(x)= \frac{\theta \alpha \beta e^{\gamma x}(1-t)t^{\alpha-1} C'(\theta t^{\alpha})}{C(\theta)-C(\theta t^{\alpha})},
\end{eqnarray}
where $t=1-e^{\frac{-\beta}{\gamma}(e^{\gamma x}-1)}$.
\end{prop}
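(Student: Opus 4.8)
The plan is to derive the hazard rate function directly from its definition $h(x) = f(x)/(1 - F(x))$, using the expressions for the pdf and cdf already established in equations \eqref{fGP} and \eqref{FGP}. First I would write $1 - F(x) = 1 - \frac{C(\theta t^{\alpha})}{C(\theta)} = \frac{C(\theta) - C(\theta t^{\alpha})}{C(\theta)}$, where $t = 1 - e^{-\frac{\beta}{\gamma}(e^{\gamma x}-1)}$ as before. Next I would recall from \eqref{fGP} that $f(x) = \frac{\theta \alpha \beta}{C(\theta)} e^{\gamma x}(1-t) t^{\alpha-1} C'(\theta t^{\alpha})$.

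Then the quotient is immediate: the factor $C(\theta)$ in the denominator of $f(x)$ cancels against the $C(\theta)$ in the denominator of $1 - F(x)$, leaving
\[
h(x) = \frac{f(x)}{1-F(x)} = \frac{\theta \alpha \beta e^{\gamma x}(1-t) t^{\alpha-1} C'(\theta t^{\alpha})}{C(\theta) - C(\theta t^{\alpha})},
\]
which is exactly \eqref{hGP}. Since the statement is essentially an algebraic rearrangement of already-proven formulas, there is no substantive obstacle; the only point requiring a word of care is the harmless cancellation of the common factor $C(\theta)$ and the bookkeeping that $1-t = e^{-\frac{\beta}{\gamma}(e^{\gamma x}-1)}$ so that the survival function is strictly positive on $x > 0$, making the quotient well-defined. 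I would present this as a two-line computation.
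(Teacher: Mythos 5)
Your proposal is correct and follows exactly the same route as the paper, which simply invokes the definition $h(x)=f(x)/(1-F(x))$ together with \eqref{FGP} and \eqref{fGP}; you have merely written out the cancellation of $C(\theta)$ explicitly. Nothing further is needed.
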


\begin{proof}
Using \eqref{FGP}, \eqref{fGP} and definition of  hazard rate function as $h(x)=f(x)/(1-F(x))$, the proof is obvious.
\end{proof}

\begin{prop}
For the pdf in \eqref{fGP}, we have
$$
\mathop{\lim }_{x \rightarrow 0^+}  f(x)=\left\{\begin{array}{ll}
\infty &   0<\alpha<1 \\
\frac{C'(0)\theta\beta }{C(\theta)} & \alpha=1 \\
0 & \alpha>1, \\  \end{array}
\right. \;\;\;\; \;\;\;  \lim_{x\rightarrow \infty}f(x)= 0.
$$
\end{prop}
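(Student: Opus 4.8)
The plan is to read off both limits directly from the closed form \eqref{fGP}, keeping the abbreviation $t=t(x)=1-e^{-\frac{\beta}{\gamma}(e^{\gamma x}-1)}$, so that $1-t=e^{-\frac{\beta}{\gamma}(e^{\gamma x}-1)}$ and
\[
f(x)=\frac{\theta\alpha\beta}{C(\theta)}\,e^{\gamma x}(1-t)\,t^{\alpha-1}\,C'\!\left(\theta t^{\alpha}\right).
\]
The strategy is to identify the limit of each of the four factors $e^{\gamma x}$, $1-t$, $t^{\alpha-1}$, $C'(\theta t^{\alpha})$ separately, multiply them, and isolate the one place where an indeterminate form arises.

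First I would handle $x\to0^{+}$. Here $e^{\gamma x}\to1$, the exponent $\frac{\beta}{\gamma}(e^{\gamma x}-1)\to0$, hence $1-t\to1$ and $t\to0^{+}$; moreover $\theta t^{\alpha}\to0$ remains inside the interval of convergence, so $C'(\theta t^{\alpha})\to C'(0)$ by continuity of $C'$ on $[0,s)$. Thus $f(x)$ behaves asymptotically like $\frac{\theta\alpha\beta C'(0)}{C(\theta)}\,t^{\alpha-1}$, and the trichotomy is precisely that of $\lim_{t\to0^{+}}t^{\alpha-1}$, namely $+\infty$ when $0<\alpha<1$, equal to $1$ when $\alpha=1$, and $0$ when $\alpha>1$; in the borderline case the surviving constant is $\frac{C'(0)\theta\beta}{C(\theta)}$, as stated. (Note $C'(0)=a_{1}$, which is strictly positive for each entry of Table~\ref{tab.ps}, so the first branch genuinely yields $+\infty$.)

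Next I would handle $x\to\infty$. Now $t\to1$, so $t^{\alpha-1}\to1$ and $C'(\theta t^{\alpha})\to C'(\theta)$, both finite, while $e^{\gamma x}\to\infty$ and $1-t=e^{-\frac{\beta}{\gamma}(e^{\gamma x}-1)}\to0$. The only real obstacle is the indeterminate product $e^{\gamma x}(1-t)$, and I expect this to be the crux of the argument. I would resolve it with the substitution $u=e^{\gamma x}\to\infty$, which rewrites the product as $u\,e^{\beta/\gamma}\,e^{-\beta u/\gamma}$. Since exponential decay dominates linear growth, $u\,e^{-\beta u/\gamma}\to0$ (by L'H\^opital's rule, or by the elementary estimate $u\le(2\gamma/\beta)\,e^{\beta u/(2\gamma)}$ valid for large $u$), so $e^{\gamma x}(1-t)\to0$ and hence $f(x)\to0$. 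This settles the last assertion and finishes the proof. A slightly longer alternative would be to pass through the linear-combination representation $f=\sum_{n}p_{n}\,{\rm g}_{(n)}(x;n\alpha,\beta,\gamma)$ and take termwise limits, but that forces one to justify an interchange of limit and infinite sum, which the direct computation above avoids.
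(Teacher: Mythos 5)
Your proof is correct and follows essentially the same route as the paper's: a direct factor-by-factor limit computation on \eqref{fGP}, with the trichotomy coming from $\lim_{x\to 0^+}t^{\alpha-1}$ and the remaining factors tending to finite constants. You are in fact slightly more careful than the paper, which for $x\to\infty$ records only $\lim_{x\to\infty}t=1$ and leaves the indeterminate product $e^{\gamma x}(1-t)$ unaddressed, whereas your substitution $u=e^{\gamma x}$ resolves it explicitly; your aside that the $0<\alpha<1$ branch needs $C'(0)=a_1>0$ is also a fair caveat that the paper glosses over.
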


\begin{proof}
The proof is a forward calculation using the following limits
$$
\mathop{\lim }_{x \rightarrow 0^+}  t^{\alpha-1}=\left\{\begin{array}{ll}
\infty &   0<\alpha<1 \\
1 & \alpha=1 \\
0 & \alpha>1, \\  \end{array}
\right. \;\;\;\; \;\;\;  \lim_{x\rightarrow 0^+} t^\alpha= 0, \;\;\;\; \;\;\;  \lim_{x\rightarrow \infty} t=1.
$$
\end{proof}

\begin{prop}
For the hazard rate function  in \eqref{hGP}, we have
\[
\mathop{\lim }_{x \rightarrow 0^+}  h(x)=\left\{\begin{array}{ll}
\infty &   0<\alpha<1 \\
\frac{C'(0)\theta\beta }{C(\theta)} & \alpha=1 \\
0 & \alpha>1, \\  \end{array}
\right. \;\;\;\; \;\;\;\mathop{\lim }_{x \rightarrow \infty}
h(x)=\left\{\begin{array}{ll}
\infty &   \gamma>0 \\
\beta & \gamma\rightarrow 0
\end{array}
\right.
  \]
\end{prop}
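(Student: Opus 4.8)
The proof splits according to the two one-sided limits, and in each case the point is to recognise and dispose of an indeterminacy. \emph{Behaviour as $x\to 0^{+}$:} since $h(x)=f(x)/(1-F(x))$ and, by \eqref{FGP}, $F(x)=C(\theta t^{\alpha})/C(\theta)$ with $t\to 0^{+}$ as $x\to 0^{+}$, the continuity of the power series $C$ on $(0,s)$ together with $C(0)=0$ gives $F(0^{+})=0$, hence $1-F(x)\to 1$. Therefore $\lim_{x\to 0^{+}}h(x)=\lim_{x\to 0^{+}}f(x)$, and the first display is exactly the preceding proposition on the limit of the pdf.

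\emph{Behaviour as $x\to\infty$:} here $t\to 1^{-}$, so the numerator of \eqref{hGP} tends to $0$ (because $1-t=e^{-\frac{\beta}{\gamma}(e^{\gamma x}-1)}\to 0$) and the denominator $C(\theta)-C(\theta t^{\alpha})$ tends to $0$ as well (because $C(\theta t^{\alpha})\to C(\theta)$). To resolve this $0/0$ form I would expand the denominator near $t=1$: by the mean value theorem $C(\theta)-C(\theta t^{\alpha})=\theta(1-t^{\alpha})\,C'(\xi_{x})$ for some $\xi_{x}$ between $\theta t^{\alpha}$ and $\theta$, so $\xi_{x}\to\theta$ and $C'(\xi_{x})\to C'(\theta)\in(0,\infty)$; moreover $1-t^{\alpha}\sim\alpha(1-t)$ as $t\to 1$. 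Substituting these into \eqref{hGP}, the factor $1-t$ cancels and one obtains
\[
h(x)=\beta e^{\gamma x}\,\frac{t^{\alpha-1}C'(\theta t^{\alpha})}{C'(\xi_{x})}\,\bigl(1+o(1)\bigr),
\]
in which the displayed fraction tends to $1$. Hence $h(x)$ is asymptotically $\beta e^{\gamma x}$; this diverges to $\infty$ when $\gamma>0$, whereas in the generalized exponential--power series sub-model obtained as $\gamma\to 0^{+}$ (where $e^{\gamma x}\equiv 1$) it converges to $\beta$, which gives the second display.

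The only genuine work is the $0/0$ resolution at infinity, i.e.\ controlling the ratio $(1-t)\big/\bigl(C(\theta)-C(\theta t^{\alpha})\bigr)$; this rests on $C$ being continuously differentiable on $(0,s)$ with $C'(\theta)>0$, which holds automatically since $C(\theta)=\sum_{n\ge 1}a_{n}\theta^{n}$ has non-negative coefficients, not all zero, and converges at $\theta$. One should also note that the ``$\gamma\to 0$'' case in the statement is to be read as evaluation in the limiting generalized-exponential model rather than as an iterated limit in $x$ and $\gamma$.
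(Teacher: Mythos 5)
Your proof is correct. The first half (reducing $\lim_{x\to 0^{+}}h(x)$ to $\lim_{x\to 0^{+}}f(x)$ via $1-F(x)\to 1$ and quoting the pdf proposition) is exactly the paper's argument. For the limit at infinity the overall plan is the same --- everything hinges on the indeterminate ratio $e^{\gamma x}(1-t)\big/\bigl(C(\theta)-C(\theta t^{\alpha})\bigr)$ --- but you resolve it differently: the paper applies L'H\^{o}pital's rule in the variable $x$, differentiating numerator and denominator to get $\bigl[\beta e^{\gamma x}-\gamma\bigr]\big/\bigl[\theta\alpha\beta C'(\theta)\bigr]$ in the limit, whereas you localize at $t=1$ using the mean value theorem for $C$ on $[\theta t^{\alpha},\theta]$ together with $1-t^{\alpha}\sim\alpha(1-t)$. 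Your route avoids differentiating the composite $C(\theta t^{\alpha})$ in $x$ and yields the slightly stronger and more informative conclusion $h(x)\sim\beta e^{\gamma x}$ (the hazard is asymptotically Gompertz), from which both cases of the display are immediate; the paper's route is more mechanical but requires checking that the differentiated denominator is asymptotically $\theta\alpha\beta C'(\theta)e^{\gamma x}(1-t)$, which is the same localization in disguise. Your closing remarks --- that $C'(\theta)>0$ follows from nonnegative coefficients not all zero, and that the ``$\gamma\to 0$'' case must be read as evaluation in the limiting generalized-exponential submodel rather than as an iterated limit --- are both correct and address points the paper leaves implicit.
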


\begin{proof}
Since $\lim_{x\rightarrow 0^+} (1-F(x))=1$, we have
$\lim_{x\rightarrow 0^+} h(x)=\lim_{x\rightarrow 0^+} f(x)$.

For $\lim_{x\rightarrow \infty} h(x)$, the proof is satisfied using the limits
\begin{eqnarray*}
&&\lim_{x\rightarrow \infty} C'(\theta t^\alpha)=C'(\theta),\ \ \ \  \  \ \lim_{x\rightarrow \infty} t^{\alpha-1}=1, \\
&& \lim_{x\rightarrow \infty}
\frac{ e^{\gamma x}(1-t)}{C(\theta)-C(\theta t^{\alpha})}=
\lim_{x \rightarrow \infty}\frac{e^{\gamma x}(1-t)[\beta e^{\gamma x}-\gamma]}{\theta\beta\alpha C'(\theta)e^{\gamma x}(1-t)}
=\left\{\begin{array}{ll}
\infty &   \gamma>0 \\
\frac{1}{\theta\alpha C'(\theta)} & \gamma\rightarrow 0.
\end{array}
\right.
\end{eqnarray*}
\end{proof}

As a example, we consider $C\left(\theta \right)=\theta +{\theta }^{20}$. The plots of  pdf and  hazard rate function of GGPS for parameters $\beta=1, \gamma=.01, \theta=1.0$, and $\alpha=0.1, 0.5, 1.0, 2.0$ are given in Figure \ref{dhbim}. This pdf is bimodal when $\alpha=2.0$, and the values  of modes are 0.7 and 3.51.

\begin{figure}
\centering
\includegraphics[width=7.75cm,height=6.5cm]{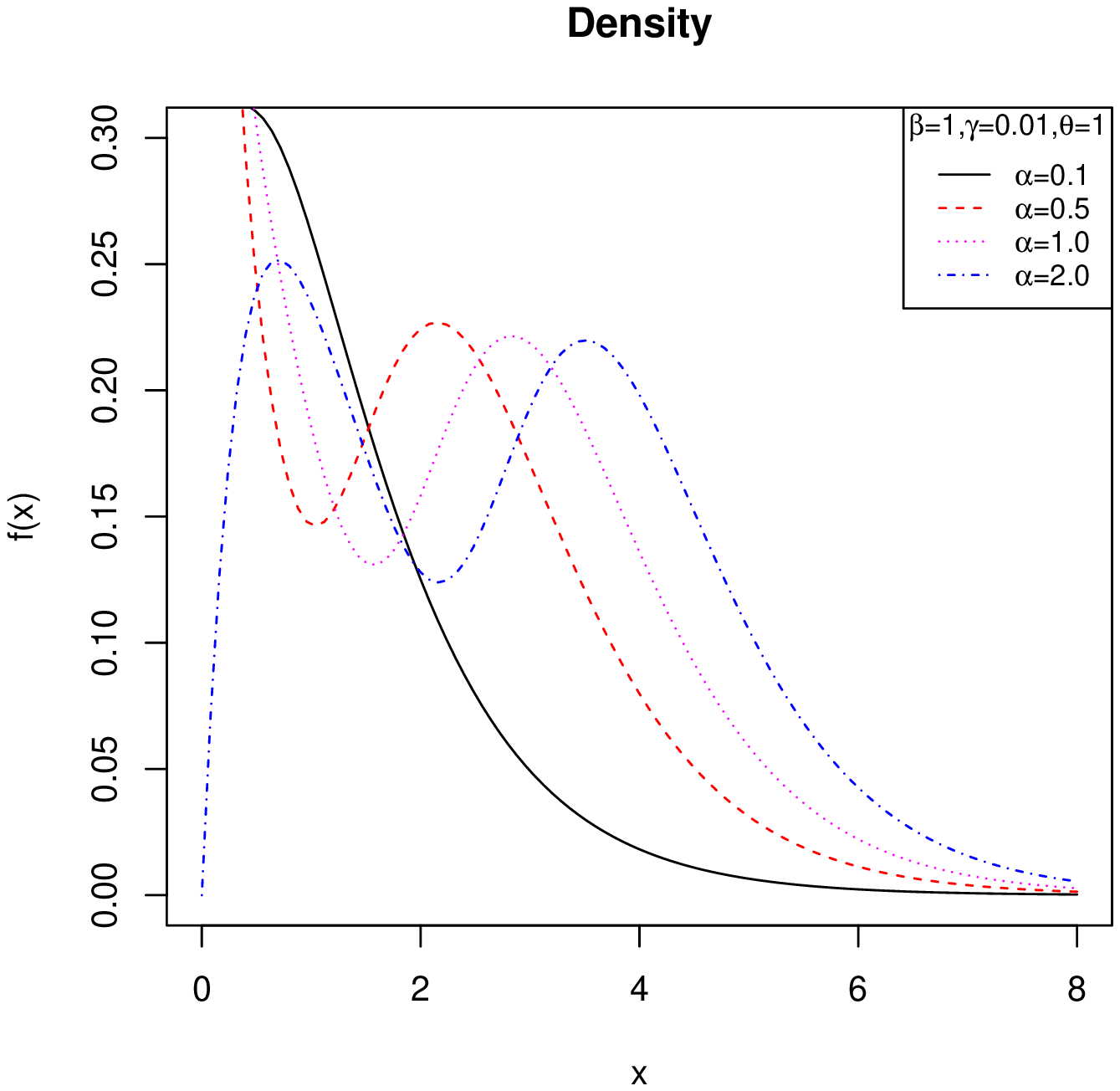}
\includegraphics[width=7.75cm,height=6.5cm]{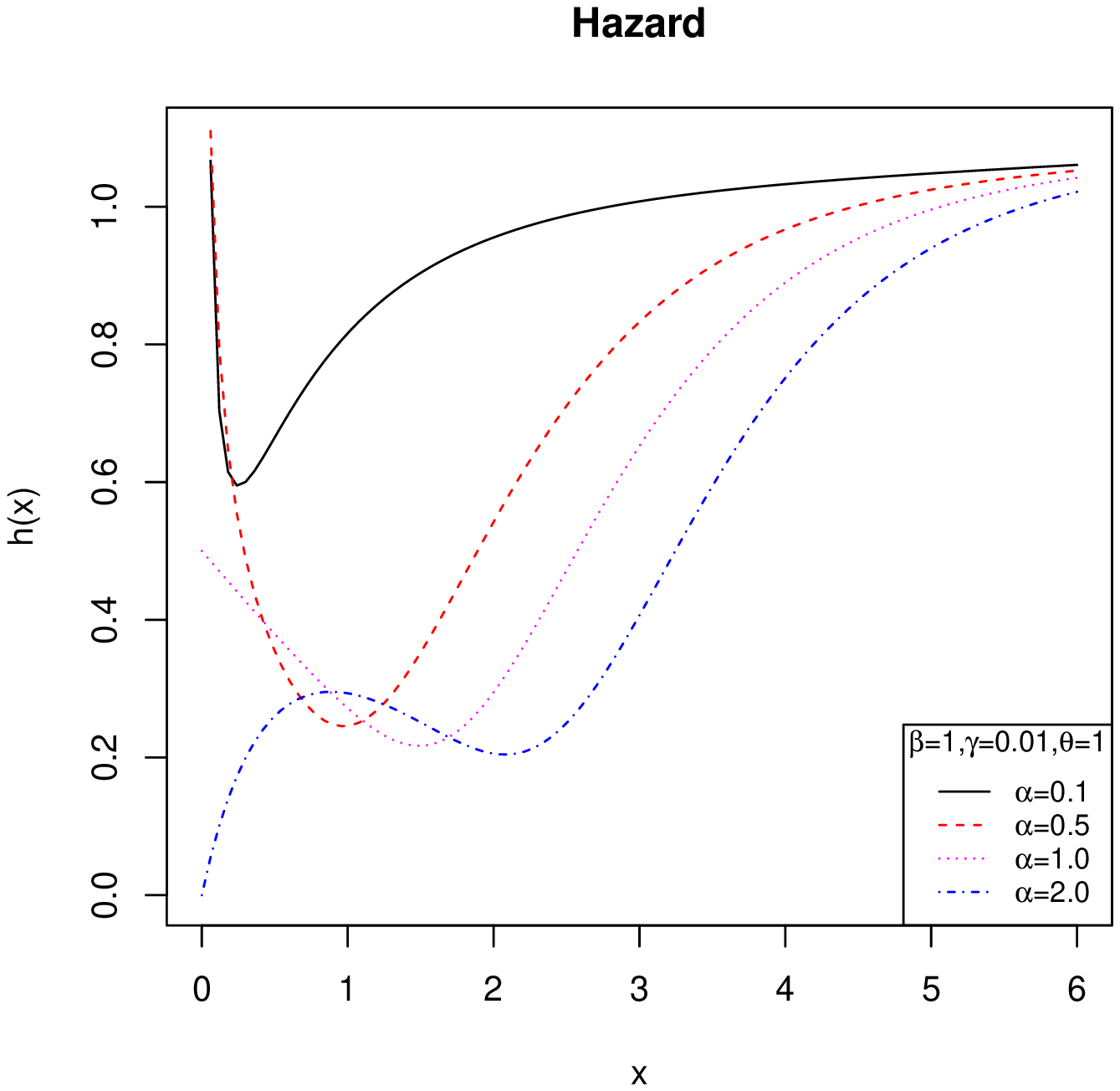}
\vspace{-0.8cm}
\caption{Plots of pdf and hazard rate functions of GGPS with $C\left(\theta \right)=\theta +{\theta }^{20}$. } \label{dhbim}
\end{figure}

\section{Statistical properties}
\label{sec.pro}
In this section, some properties of GGPS distribution such as quantiles, moments, order statistics, Shannon entropy and mean residual life  are obtained.

\subsection{Quantiles and Moments}
The quantile $q$ of GGPS is given by
$$
x_{q}=G^{-1}(\frac{C^{-1}(qC(\theta))}{\theta}),\;\;\;\;\;0<q<1,
$$
where $G^{-1}(y)=\frac{1}{\gamma}\log[1-\frac{\gamma \log(1-y^{\frac{1}{\gamma}})}{\beta}]$ and $C^{-1}(.)$ is the inverse function of $C(.)$.
This  result helps in simulating data from the GGPS distribution with generating  uniform distribution data.

For checking the consistency of the simulating data set form GGPS distribution, the histogram for a generated data set with size 100
and the exact pdf of GGPS   with $C\left(\theta \right)=\theta +{\theta }^{20}$, and  parameters $\alpha=2$, $\beta=1$, $\gamma=0.01$, $\theta=1.0$,  are displayed in Figure \ref{Fig.gd}
(left). Also, the empirical cdf and the exact cdf are given in Figure \ref{Fig.gd} (right).

\begin{figure}
\centering
\includegraphics[width=7.75cm,height=6.5cm]{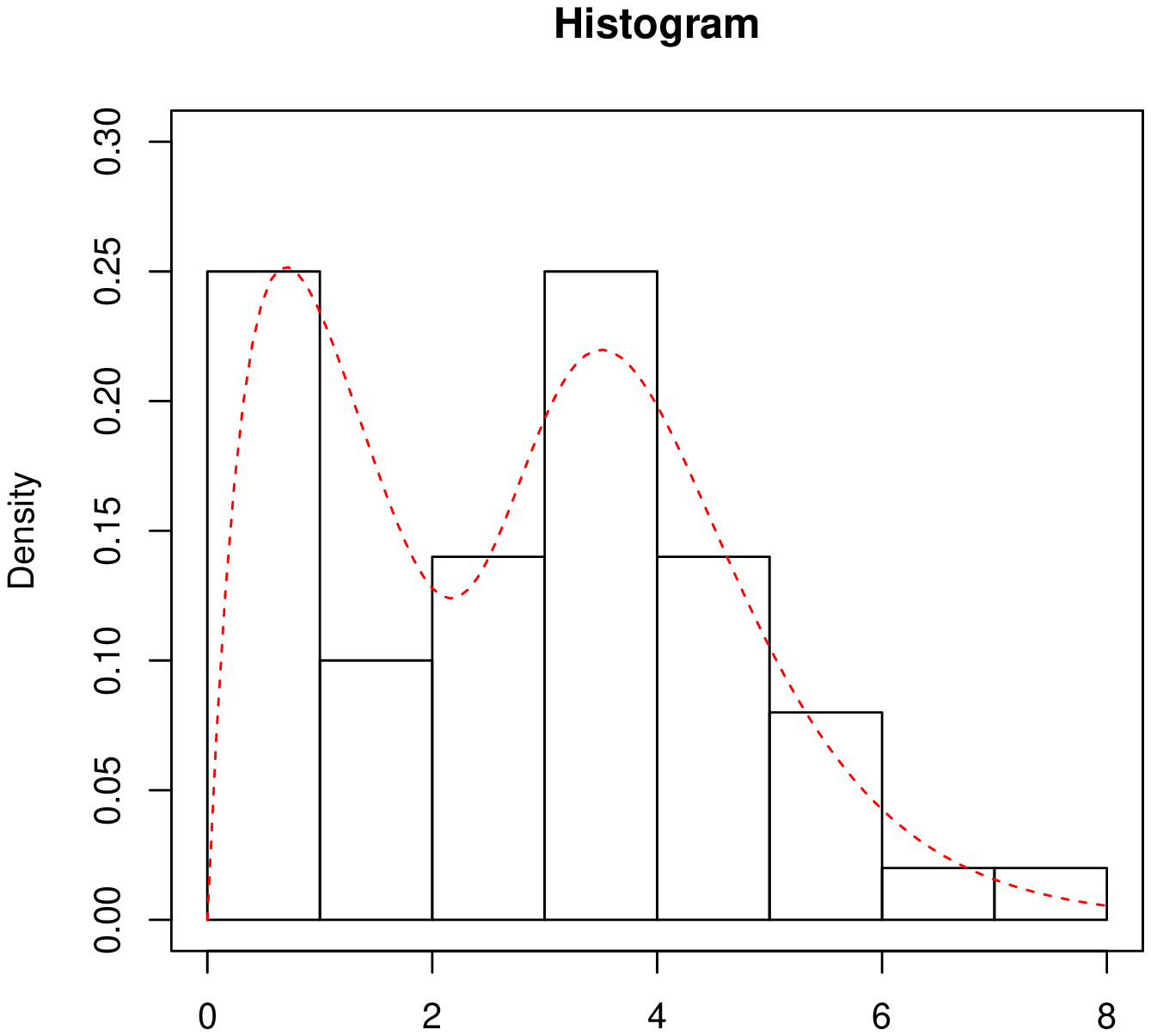}
\includegraphics[width=7.75cm,height=6.5cm]{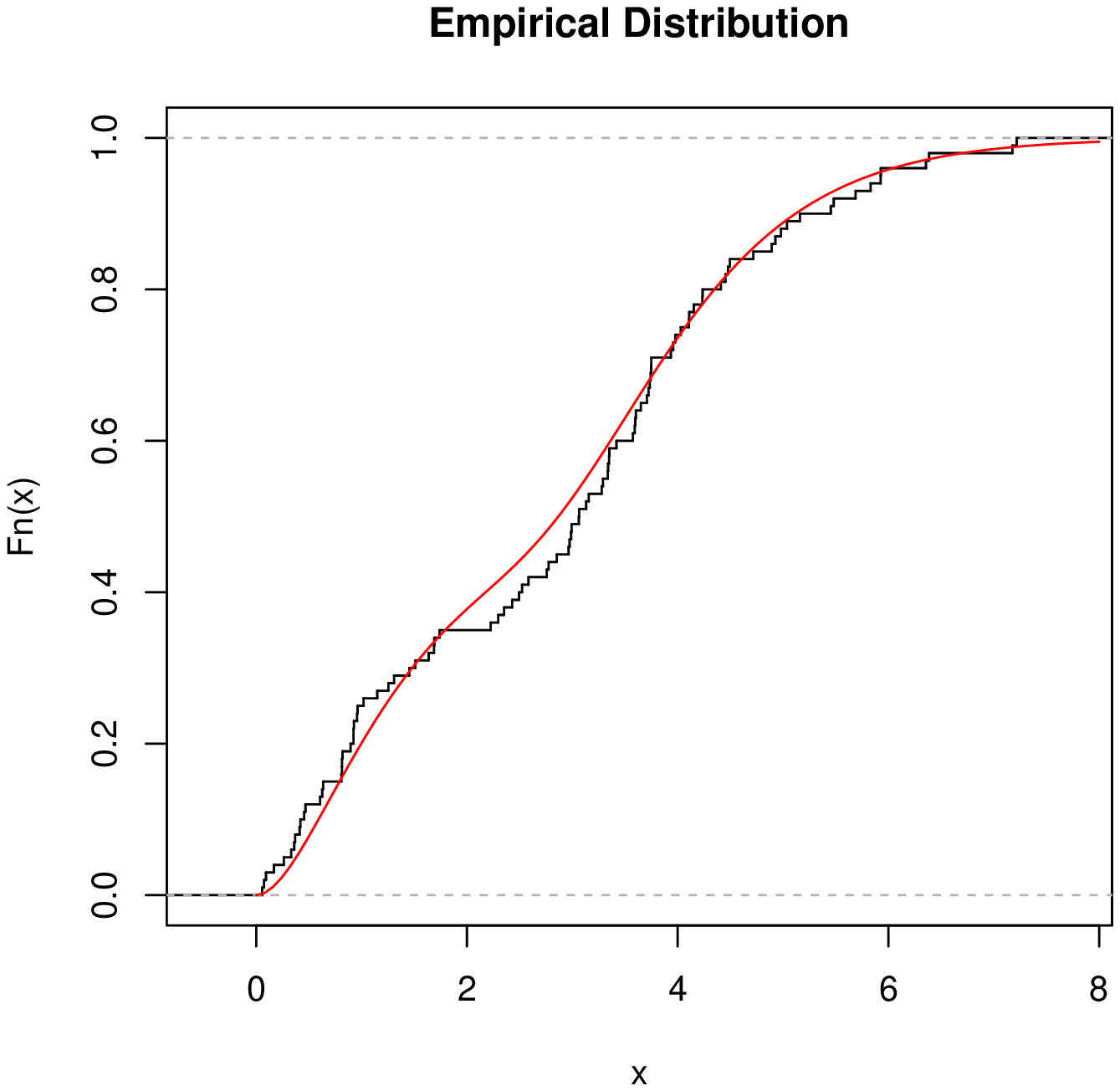}
\vspace{-0.8cm}
\caption{The histogram of a generated data set with the  pdf (left) and the empirical
 cdf with cdf (right) of GGPS distribution. } \label{Fig.gd}
\end{figure}

Consider $X\sim {\rm GGPS}(\alpha,\beta,\gamma,\theta)$. Then the Laplace transform of the GGPS class can be expressed as
\begin{eqnarray}
L(s)=E(e^{-sX})=\sum_{n=1}^{\infty} P(N=n)L_{n}(s),
\end{eqnarray}
where $L_{n}(s)$ is the Laplace transform of ${\rm GG}(n\alpha,\beta,\gamma)$ distribution given as
\begin{eqnarray}
L_{n}(s)&=&\int_{0}^{+\infty} e^{-sx}n \alpha \beta e^{\gamma x}e^{-\frac{\beta}{\gamma}(e^{\gamma x}-1)}[1-e^{\frac{-\beta}{\gamma}(e^{\gamma x}-1)}]^{n\alpha-1}dx \nonumber\\
&=&n\alpha\beta\int_{0}^{+\infty} e^{(\gamma-s)x}e^{-\frac{\beta}{\gamma}(e^{\gamma x}-1)}\sum_{j=0}^{\infty}\binom{n\alpha-1}{j}(-1)^{j}e^{\frac{-\beta}{\gamma}j(e^{\gamma x}-1)}dx \nonumber
\\
&=&n\alpha\beta\sum_{j=0}^{\infty}\binom{n\alpha-1}{j}(-1)^{j}e^{\frac{\beta}{\gamma}(j+1)} \int_{0}^{+\infty} e^{(\gamma-s)x}e^{\frac{-\beta}{\gamma}(j+1)e^{\gamma x}}dx\nonumber\\
&=&n\alpha\beta\sum_{j=0}^{\infty}\binom{n\alpha-1}{j}(-1)^{j}e^{\frac{\beta}{\gamma}(j+1)} \int_{0}^{+\infty} e^{(\gamma-s)x}\sum_{k=0}^{\infty}\frac{(-1)^{k}(\frac{\beta}{\gamma}(j+1))^{k}e^{\gamma kx}}{\Gamma(k+1)}dx\nonumber\\
&=&n\alpha\beta\sum_{j=0}^{\infty}\sum_{k=0}^{\infty}\binom{n\alpha-1}{j}\frac{(-1)^{j+k}e^{\frac{\beta}{\gamma}(j+1)}
[\frac{\beta}{\gamma}(j+1)]^{k}}{\Gamma(k+1)(s-\gamma-\gamma k)},\;\; \;s>\gamma.
\end{eqnarray}
Now, we obtain the moment generating function of GGPS.
\begin{eqnarray}
M_{X}(t)&=&E(e^{tX})
=\sum_{n=1}^{\infty} P(N=n)L_{n}(-t)\nonumber\\
&=&\alpha\beta \sum_{n=1}^{\infty}
\frac{a_{n}\theta^{n}}{C(\theta)}\sum_{k=0}^{\infty}\sum_{j=0}^{\infty}\frac{n\binom{n\alpha-1}{j}(-1)^{j+k+1}e^{\frac{\beta}{\gamma}(j+1)}(\frac{\beta}{\gamma}(j+1))^{k}}{\Gamma(k+1)(t+\gamma+\gamma k)}\nonumber\\
&=&\alpha\beta E_N[\sum_{k=0}^{\infty}\sum_{j=0}^{\infty}\frac{N \binom{N\alpha-1}{j}(-1)^{j+k+1}e^{\frac{\beta}{\gamma}(j+1)}(\frac{\beta}{\gamma}(j+1))^{k}}{\Gamma(k+1)(t+\gamma+\gamma k)}],
\end{eqnarray}
where $N$ is a random variable from the power series family with the probability mass function in \eqref{eq.ps} and $E_N[U]$ is expectation of $U$ with respect to random variable $N$.

We can use $M_{X}(t)$ to obtain the non-central moments, $\mu_{r}=E[X^{r} ]$. But from the direct calculation, we have
\begin{eqnarray}\label{eq.mur}
\mu_{r}
&=&\sum_{n=1}^{\infty}\frac{a_{n}\theta^{n}}{C(\theta)}\sum_{k=0}^{\infty}\sum_{j=0}^{\infty}\frac{n\alpha\beta \binom{n\alpha-1}{j}(-1)^{j+k+r+1}e^{\frac{\beta}{\gamma}(j+1)}(\frac{\beta}{\gamma}(j+1))^{k}\Gamma(r+1)}{\Gamma(k+1)(\gamma+\gamma k)^{r+1}}\nonumber\\
&=&\alpha\beta E_N[\sum_{k=0}^{\infty}\sum_{j=0}^{\infty}\frac{N \binom{N\alpha-1}{j}(-1)^{j+k+r+1}e^{\frac{\beta}{\gamma}(j+1)}(\frac{\beta}{\gamma}(j+1))^{k}\Gamma(r+1)}{\Gamma(k+1)(\gamma+\gamma k)^{r+1}}].
\end{eqnarray}

\begin{prop}
For non-central moment function in \ref{eq.mur}, we have
$$
\lim_{\theta \rightarrow 0^{+}}\mu_{r}=E[Y^{r}],
$$
where $Y$ has ${\rm GG}(c\alpha,\beta,\gamma)$ and
$c=\min\{n\in \mathbb{N}: a_{n}>0\}$.

\end{prop}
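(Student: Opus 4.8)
The plan is to reduce everything to the mixture representation of $f$ together with the elementary asymptotics of the power-series weights used already in the proof of Proposition~\ref{prop.1}. By the first Proposition of Section~\ref{sec.dis} we have $f(x)=\sum_{n=1}^{\infty}p_n\,{\rm g}_{(n)}(x;n\alpha,\beta,\gamma)$ with $p_n=a_n\theta^n/C(\theta)$; multiplying by $x^r$ and integrating term by term (legitimate because every summand is nonnegative) gives
\[
\mu_r=\sum_{n=1}^{\infty}p_n\,\mu_r^{(n)}=\frac{1}{C(\theta)}\sum_{n\ge c}a_n\theta^{n}\mu_r^{(n)},
\qquad \mu_r^{(n)}:=E[Y_n^{\,r}],\ \ Y_n\sim{\rm GG}(n\alpha,\beta,\gamma).
\]
Each $\mu_r^{(n)}$ is finite (the tail of ${\rm GG}(n\alpha,\beta,\gamma)$ is lighter than exponential), and the whole series equals $C(\theta)\,\mu_r$, which is finite for every $\theta\in(0,s)$ by the expression \eqref{eq.mur}. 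First I would isolate the leading term: since $a_n=0$ for $n<c$ and $a_c>0$, one has $C(\theta)=a_c\theta^{c}\bigl(1+o(1)\bigr)$ as $\theta\to0^{+}$, hence $p_c=a_c\theta^{c}/C(\theta)\to1$ and $p_c\,\mu_r^{(c)}\to\mu_r^{(c)}=E[Y^{r}]$ with $Y\sim{\rm GG}(c\alpha,\beta,\gamma)$.

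It then remains to show the tail $\sum_{n>c}p_n\mu_r^{(n)}$ vanishes, and this is the only delicate point. Fix any $\theta_1\in(0,s)$ and let $0<\theta<\theta_1$. For $n>c$ we have $\theta^{\,n-c}\le(\theta/\theta_1)\,\theta_1^{\,n-c}$, so
\[
0\le\sum_{n>c}a_n\theta^{n}\mu_r^{(n)}
=\theta^{c}\sum_{n>c}a_n\theta^{\,n-c}\mu_r^{(n)}
\le\frac{\theta^{c+1}}{\theta_1^{\,c+1}}\sum_{n>c}a_n\theta_1^{\,n}\mu_r^{(n)}
\le\frac{\theta^{c+1}}{\theta_1^{\,c+1}}\,C(\theta_1)\,M,
\]
where $M$ denotes the (finite) $r$-th moment of ${\rm GGPS}(\alpha,\beta,\gamma,\theta_1)$. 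Dividing by $C(\theta)=a_c\theta^{c}(1+o(1))$ gives $0\le\sum_{n>c}p_n\mu_r^{(n)}\le \dfrac{C(\theta_1)M}{a_c\theta_1^{\,c+1}}\,\theta\,(1+o(1))\to0$. Combining with the previous paragraph, $\mu_r=p_c\mu_r^{(c)}+\sum_{n>c}p_n\mu_r^{(n)}\to E[Y^{r}]$, which is the claim. Equivalently, one may phrase the tail estimate as dominated convergence on the counting measure: for $\theta<\theta_1$ the ratio $p_n\mu_r^{(n)}/p_c=(a_n/a_c)\,\theta^{\,n-c}\mu_r^{(n)}$ is dominated by the summable sequence $(a_n/a_c)\,\theta_1^{\,n-c}\mu_r^{(n)}$ and converges pointwise to $\mathbf 1\{n=c\}\mu_r^{(c)}$, whence $\mu_r/p_c\to\mu_r^{(c)}$ and $p_c\to1$.

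I expect the main obstacle to be precisely this interchange of limit and infinite sum; everything else is bookkeeping. A second, less computational route would be to start from Proposition~\ref{prop.1}, which already gives $X_\theta\Rightarrow Y$ as $\theta\to0^{+}$, and then upgrade convergence in distribution to convergence of $r$-th moments by verifying uniform integrability of $\{X_\theta^{r}\}$ for $\theta$ in a right-neighborhood of $0$; the uniform bound needed there is supplied by exactly the same comparison with the value $M$ of the moment at a fixed $\theta_1$, so the two approaches rest on the same estimate. I would present the direct series argument, since it keeps the constants explicit and uses only the asymptotics $C(\theta)\sim a_c\theta^{c}$ and the mixture identity for $\mu_r$.
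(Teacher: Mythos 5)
Your argument is correct and follows essentially the same route as the paper's: write $\mu_r$ as the power-series mixture $\sum_n p_n\mu_r^{(n)}$ of ${\rm GG}(n\alpha,\beta,\gamma)$ moments, divide numerator and denominator by $\theta^{c}$, and observe that only the $n=c$ term survives as $\theta\to0^{+}$. The only difference is that you explicitly justify the vanishing of the tail $\sum_{n>c}p_n\mu_r^{(n)}$ by comparison with a fixed $\theta_1$, a limit/sum interchange the paper's proof passes over silently (and where it also writes $E[Y^{r}]$ for what should be the $n$-dependent moment of ${\rm GG}(n\alpha,\beta,\gamma)$).
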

\begin{proof}
If $Y$ has ${\rm GG}(c\alpha,\beta,\gamma)$, then
$$
E[Y^{r}]=\sum\limits_{k=0}^{\infty}\sum\limits_{j=0}^{\infty}\frac{c\alpha\beta \binom{c\alpha-1}{j}(-1)^{j+k+r+1}e^{\frac{\beta}{\gamma}(j+1)}(\frac{\beta}{\gamma}(j+1))^{k}\Gamma(r+1)}{\Gamma(k+1)(\gamma+\gamma k)^{r+1}}.
$$
Therefore,
\begin{eqnarray*}
\lim_{\theta \rightarrow 0^{+}}\mu_{r}&=&\lim_{\theta \rightarrow 0^{+}}\frac{\sum\limits_{n=1}^{\infty}a_{n}\theta^{n}E[Y^{r}]}{\sum\limits_{n=1}^{\infty}a_{n}\theta^{n}}\\
&=&\lim_{\theta \rightarrow 0^{+}}\frac{a_{c}E[Y^{r}]+\sum\limits_{n=c+1}^{\infty}a_{n}\theta^{n-c}E[Y^{r}]}{a_{c}+\sum\limits_{n=c+1}^{\infty}a_{n}\theta^{n-c}}\\
&=&E[Y^{r}].
\end{eqnarray*}
\end{proof}

\subsection{Order statistic}

Let $X_{1},X_{2},\dots,X_{n}$  be an independent random sample of size $n$ from ${\rm GGPS}(\alpha,\beta,\gamma,\theta)$. Then, the pdf of the $i$th order statistic, say $X_{i:n}$, is given by
$$
f_{i:n}(x)=\frac{n!}{(i-1)!(n-i)!}f(x)[\frac{C(\theta t^{\alpha})}{C(\theta)}]^{i-1}[1-\frac{C(\theta t^{\alpha})}{C(\theta)}]^{n-i},
$$
where $f$ is the pdf given in (\ref{fGP}) and $t=1-e^{-\frac{\beta}{\gamma}(e^{\gamma x}-1)}$. Also, the cdf of $X_{i:n}$ is given by
$$
F_{i:n}(x)=\frac{n!}{(i-1)!(n-i)!}\sum_{k=0}^{n-i}\frac{(-1)^{k} \binom{n-i}{k}}{k+i+1}[\frac{C(t^{\alpha})}{C(\theta)}]^{k+i}.
$$
An analytical expression for  $r$th non-central moment of  order statistics $X_{i:n}$ is obtained as
\begin{eqnarray*}
E[X_{i:n}^{r}]&=& r\sum_{k=n-i+1}^{n}(-1)^{k-n+i-1}\binom{k-1}{n-i}\binom{n}{k}\int_{0}^{+\infty}x^{r-1}S(x)^{k}dx\\
&=& r\sum_{k=n-i+1}^{n}\frac{(-1)^{k-n+i-1}}{[C(\theta)]^{k}}\binom{k-1}{n-i}\binom{n}{k}\int_{0}^{+\infty}x^{r-1}[C(\theta)-C(\theta t^{\alpha})]^{k}dx,
\end{eqnarray*}
where $S(x)=1-F(x)$ is the survival function of GGPS distribution.

\subsection{Shannon entropy and mean residual life }
 If $X$ is a none-negative continuous random variable with pdf $f$, then Shannon's entropy of $X$ is defined  by
 \cite{shan-48}
  as
\begin{equation*}
H(f)=E[-\log f(X)]=-\int_{0}^{+\infty} f(x)\log (f(x))dx,
\end{equation*}
and this  is usually referred to as the continuous entropy (or differential entropy). An explicit expression of
Shannon entropy for GGPS  distribution is obtained as
\begin{eqnarray}
H(f)&=&E\{-\log[\frac{\theta \alpha \beta}{C(\theta)}e^{\gamma X}(e^{-\frac{\beta}{\gamma}(e^{\gamma X}-1)})(1-e^{-\frac{\beta}{\gamma}(e^{\gamma X}-1)})^{\alpha-1}C'\left(\theta (1-e^{-\frac{\beta}{\gamma}(e^{\gamma X}-1)})^{\alpha}\right)]\}\nonumber\\
&=&-\log[\frac{\theta\beta\alpha}{C(\theta)}]-\gamma E(X)+\frac{\beta}{\gamma}E(e^{\gamma X})-\frac{\beta}{\gamma}\nonumber\\
&&-(\alpha-1)E[\log(1-e^{-\frac{\beta}{\gamma}(e^{\gamma X}-1)})]-E[\log(C'\left(\theta (1-e^{-\frac{\beta}{\gamma}(e^{\gamma X}-1)})^{\alpha}\right))]\nonumber\\
&=&-\log[\frac{\theta\beta\alpha}{C(\theta)}]-\gamma\mu_{1}+\frac{\beta}{\gamma}M_{X}(\gamma)-\frac{\beta}{\gamma}
-(\alpha-1)\sum_{n=1}^{\infty}P(N=n)\int_{0}^{1}\ n\alpha t^{n\alpha-1}\log(t) dt\nonumber\\ &&-\sum_{n=1}^{\infty}P(N=n)\int_{0}^{1}nu^{n-1}\log(C'(\theta u))du\nonumber\\
&=&-\log[\frac{\theta\beta\alpha}{C(\theta)}]-\gamma\mu_{1}+\frac{\beta}{\gamma}M_{X}(\gamma)-\frac{\beta}{\gamma}
+\frac{(\alpha-1)}{\alpha}E_{N}[\frac{1}{N}] -E_{N}[A(N,\theta)],
\end{eqnarray}
where $A(N,\theta)=\int_{0}^{1}Nu^{N-1}\log(C'(\theta u))du$, $N$ is a random variable from the power series family with the probability mass function in \eqref{eq.ps}, and $E_N[U]$ is expectation of $U$ with respect to random variable $N$. In reliability theory and survival analysis, $X$ usually denotes a duration such as the lifetime.
The residual lifetime of the system when it is still operating at time $s$, is $X_{s} =X-s\mid X>s$ which has pdf
$$f(x;s)=\frac{f(x)}{1-F(s)}=\frac{\theta g(x)C'(\theta G(x))}{C(\theta)-C(\theta G(s))}, \;\; x\geq s>0.  $$
 Also, the mean residual lifetime  of $X_{s}$ is given by
\begin{eqnarray*}
m(s)=E[X-s|X>s]&=&\frac{\int_{s}^{+\infty}(x-s)f(x)dx}{1-F(s)}\\
&=&\frac{\int_{s}^{+\infty}x f(x)dx}{1-F(s)}-s\\
&=& \frac{C(\theta)E_N[Z(s,N)]}{C(\theta)-C(\theta[1-e^{-\frac{\beta}{\gamma}(e^{\gamma s}-1)}]^{\alpha})}-s,
\end{eqnarray*}
where $Z(s,n)=\int_{s}^{+\infty}x {\rm g}_{(n)}(x;n\alpha,\beta,\gamma) dx$,  and ${\rm g}_{(n)}(x;n\alpha,\beta,\gamma)$ is the pdf of ${\rm GG}(n\alpha,\beta,\gamma)$.

\section{Special cases of GGPS distribution}
\label{sec.spe}
In this Section, we consider four special cases of the GGPS distribution. To simplify, we consider
$t=1-e^{-\frac{\beta}{\gamma}(e^{\gamma x}-1)}$, $x>0$, and $A_j=\binom{n\alpha-1}{j}$.

\subsection{Generalized  Gompertz-geometric distribution}
The geometric distribution (truncated at zero) is a special case of power series distributions with $a_{n}=1$ and $C(\theta)=\frac{\theta}{1-\theta} \ (0<\theta<1)$. The  pdf and hazard rate function of generalized  Gompertz-geometric (GGG) distribution is given respectively by
\begin{eqnarray}
f(x)&=& \frac{(1-\theta)\alpha \beta e^{\gamma x}(1-t)t^{\alpha-1}}{(\theta t^{\alpha}-1)^{2}}, \ \ \ x>0, \label{eq.fGG}\\
h(x)&=&\frac{(1-\theta)\alpha \beta e^{\gamma x}(1-t)t^{\alpha-1}}{(1-\theta t^{\alpha})(1-t^{\alpha})}, \ \ \ x>0. \label{eq.hGG}
\end{eqnarray}

\begin{remark} Consider
\begin{eqnarray}\label{eq.fGGe}
f_M(x)= \frac{\theta^* \alpha \beta e^{\gamma x}(1-t)t^{\alpha-1}}{((1-\theta^*)t^{\alpha}-1)^{2}}, \ \ \ x>0,
\end{eqnarray}
where $\theta^{\ast}=1-\theta$. Then $f_M(x)$  is pdf for all $\theta^*>0$ \citep[see][]{ma-ol-97}.
Note that when $\alpha=1$ and $\gamma\rightarrow 0^+$, the pdf of extended exponential geometric (EEG) distribution
\citep{ad-di-05}
 is concluded from \eqref{eq.fGGe}.
  The EEG hazard function is monotonically increasing for $\theta^*>1$; decreasing for $0<\theta^*<1$ and constant for $\theta^*=1$.
\end{remark}

\begin{remark}
If $\alpha=\theta^*=1$, then the pdf in  \eqref{eq.fGGe} becomes the pdf of Gompertz distribution. Note that the hazard rate function of Gompertz distribution is $h(x)=\beta e^{\gamma x}$ which is increasing.
\end{remark}

The plots of pdf and hazard rate function of GGG for different values of $\alpha$, $\beta$, $\gamma$ and $\theta^*$  are given in Figure \ref{fig.GG}.

\begin{theorem}  Consider the GGG hazard function in
\eqref{eq.hGG}. Then, for $\alpha\geq 1$, the hazard function is increasing and for $0<\alpha<1$, is
decreasing and bathtub shaped.
\end{theorem}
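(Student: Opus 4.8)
The plan is to reduce everything to the behaviour of a one‑variable function evaluated along the curve $t=1-e^{-\frac{\beta}{\gamma}(e^{\gamma x}-1)}$. First I would record that $x\mapsto t$ is a strictly increasing smooth bijection of $(0,\infty)$ onto $(0,1)$, with $t'(x)=\beta e^{\gamma x}(1-t)>0$, $t(0^{+})=0$ and $t(\infty)=1$. Rewriting the hazard rate in \eqref{eq.hGG} yields the factorisation
\[
h(x)=\beta e^{\gamma x}\,\rho(t),\qquad \rho(t):=\frac{(1-\theta)\,\alpha\,(1-t)\,t^{\alpha-1}}{(1-\theta t^{\alpha})(1-t^{\alpha})},\quad t\in(0,1),
\]
so that $h$ is the product of the strictly increasing positive factor $\beta e^{\gamma x}$ with the composition $\rho\circ t$; note $\rho(t)\to 1$ as $t\to 1^{-}$, while $\rho(0^{+})$ equals $+\infty$, $1-\theta$, or $0$ according as $\alpha<1$, $\alpha=1$, $\alpha>1$.

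Assume first $\alpha\ge 1$. I would show $\rho$ is strictly increasing on $(0,1)$. Logarithmic differentiation gives
\[
\frac{d}{dt}\log\rho(t)=-\frac{1}{1-t}+\frac{\alpha-1}{t}+\frac{\theta\alpha t^{\alpha-1}}{1-\theta t^{\alpha}}+\frac{\alpha t^{\alpha-1}}{1-t^{\alpha}} .
\]
Collecting the first, second and fourth terms over the common positive denominator $t(1-t)(1-t^{\alpha})$, the numerator simplifies to $G(t):=(\alpha-1)-\alpha t+t^{\alpha}$; since $G(1)=0$ and $G'(t)=\alpha(t^{\alpha-1}-1)\le 0$ on $(0,1)$, we get $G\ge 0$ there, and the remaining term $\theta\alpha t^{\alpha-1}/(1-\theta t^{\alpha})$ is strictly positive. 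Hence $(\log\rho)'>0$, so $\rho\circ t$ is strictly increasing in $x$, and $h$, being the product of two positive strictly increasing functions of $x$, is strictly increasing. (The boundary values $h(0^{+})$ and $h(\infty)$ from the earlier propositions are consistent with this.)

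Now assume $0<\alpha<1$. Here $\rho\circ t$ is no longer monotone, so I would instead use the standard criterion of Glaser (1980) for the density $f$ in \eqref{eq.fGG}. With $\eta(x):=-f'(x)/f(x)=-\tfrac{d}{dx}\log f(x)$, a direct computation (using $\log(1-t)=-\tfrac{\beta}{\gamma}(e^{\gamma x}-1)$) gives
\[
\eta(x)=\beta e^{\gamma x}-\gamma+(1-\alpha)\,\frac{\beta e^{\gamma x}(1-t)}{t}-\frac{2\theta\alpha\beta\, t^{\alpha-1}e^{\gamma x}(1-t)}{1-\theta t^{\alpha}} ,
\]
whence $\eta(0^{+})=+\infty$ (from the $(1-\alpha)\beta e^{\gamma x}(1-t)/t$ term) and $\eta(\infty)=+\infty$ for $\gamma>0$. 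The aim is to prove that, according to the parameters, $\eta$ is either strictly decreasing on $(0,\infty)$ or bathtub‑shaped (strictly decreasing, then strictly increasing, with a single turning point). Since $\lim_{x\to 0^{+}}f(x)=\infty$ has already been established, Glaser's theorem then delivers the asserted dichotomy: $h$ is monotone decreasing in the first case (which can occur only in the degenerate regime $\gamma\to0^{+}$, where $h(\infty)=\beta<\infty$) and bathtub‑shaped in the second. Concretely, writing $\eta'(x)$ over a common positive denominator, this comes down to showing that the resulting numerator, viewed as a function of $t\in(0,1)$, keeps a constant sign or changes sign exactly once.

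The reparametrisation, the $\alpha\ge 1$ computation, and the boundary limits (which also follow from the propositions already proved for $h(0^{+})$ and $h(\infty)$) are routine. The main obstacle is the last step in the case $0<\alpha<1$: establishing that $\eta'$ has at most one sign change — i.e. that the numerator of $\eta'$ is single‑signed or unimodal on $(0,1)$ — so that Glaser's lemma applies and no extraneous oscillation of the failure rate can occur.
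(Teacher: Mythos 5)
Your argument for $\alpha\ge 1$ is correct and complete: the factorisation $h(x)=\beta e^{\gamma x}\rho(t)$ with $\rho(t)=(1-\theta)\alpha(1-t)t^{\alpha-1}/[(1-\theta t^{\alpha})(1-t^{\alpha})]$, together with the reduction of $(\log\rho)'$ to the term $G(t)=(\alpha-1)-\alpha t+t^{\alpha}\ge 0$ plus the positive term $\theta\alpha t^{\alpha-1}/(1-\theta t^{\alpha})$, is a clean and fully rigorous proof of monotonicity. It is in fact a different (and more self-contained) route than the paper's, which changes variables to $u=1-e^{-\frac{\beta}{\gamma}(e^{\gamma x}-1)}$ and applies Glaser's criterion to $\tau(u)=-f'(u)/f(u)$ of the transformed density; your version avoids Glaser entirely for this case and does not rely on shape-preservation under the reparametrisation.

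The case $0<\alpha<1$, however, is not proved: you state that "the main obstacle is the last step \dots establishing that $\eta'$ has at most one sign change," and you never establish it. That step \emph{is} the theorem in this regime --- without it nothing rules out a hazard with several waves. The paper closes exactly this gap (in the $u$-variable) by a two-part argument: the limits $\lim_{u\to 0}\tau'(u)=-\infty$ and $\lim_{u\to 1}\tau'(u)>0$ give at least one zero of $\tau'$, and the convexity claim $\tau''(u)>0$ gives uniqueness, so $\tau$ is first decreasing then increasing and Glaser's theorem applies. You would need to supply the analogous uniqueness/unimodality argument for the numerator of $\eta'(x)$ (or, more in the spirit of your first half, work with $\rho$ and control the extra factor $\beta e^{\gamma x}$). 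A second, smaller issue: even granting that $\eta$ is bathtub-shaped, Glaser's classification requires the endpoint datum $\epsilon=\lim_{x\to 0^{+}}f(x)$ (here $\epsilon=\infty$ for $0<\alpha<1$) to decide between the decreasing and the bathtub branches; your sketch asserts "the asserted dichotomy" without carrying out this final identification, so the conclusion that the two possible shapes are precisely "decreasing" and "bathtub" is also left unjustified.
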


\begin{proof} See Appendix A.1.
\end{proof}

The first and second non-central moments of GGG are given by
\begin{eqnarray*}
&&E(X)=\alpha\beta (1-\theta)\sum_{n=1}^{\infty} n\theta^{n-1}\sum_{k=0}^{\infty}\sum_{j=0}^{\infty}\frac{ A_j(-1)^{j+k}e^{\frac{\beta}{\gamma}(j+1)}(\frac{\beta}{\gamma}(j+1))^{k}}{\Gamma(k+1)(\gamma+\gamma k)^{2}},
\\
&&E(X^{2})=2\alpha\beta (1-\theta)\sum_{n=1}^{\infty} n\theta^{n-1}\sum_{k=0}^{\infty}\sum_{j=0}^{\infty}\frac{ A_j(-1)^{j+k+3}e^{\frac{\beta}{\gamma}(j+1)}(\frac{\beta}{\gamma}(j+1))^{k}}{\Gamma(k+1)(\gamma+\gamma k)^{3}}.
\end{eqnarray*}

\begin{figure}
\centering
\includegraphics[scale=0.35]{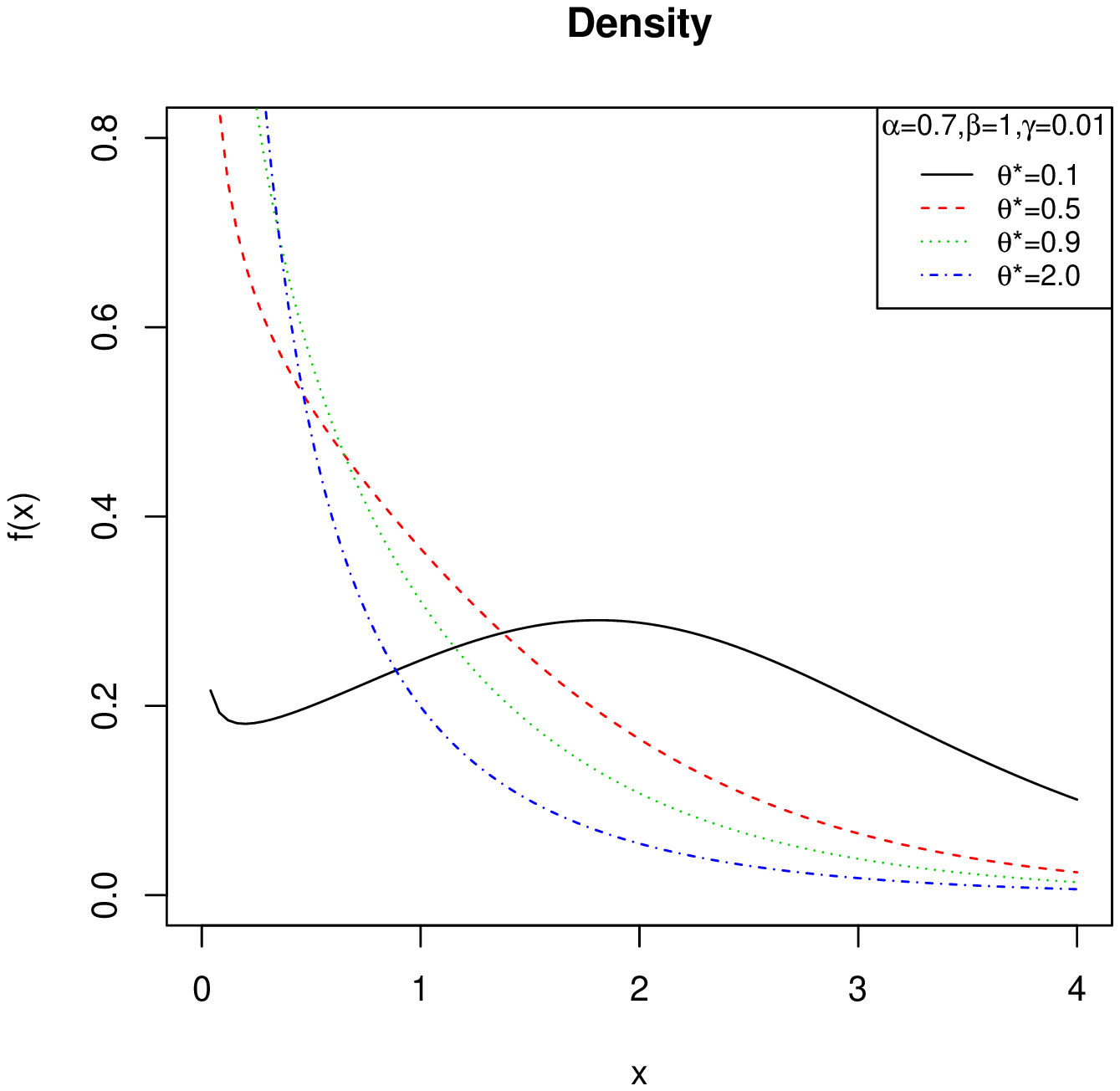}
\includegraphics[scale=0.35]{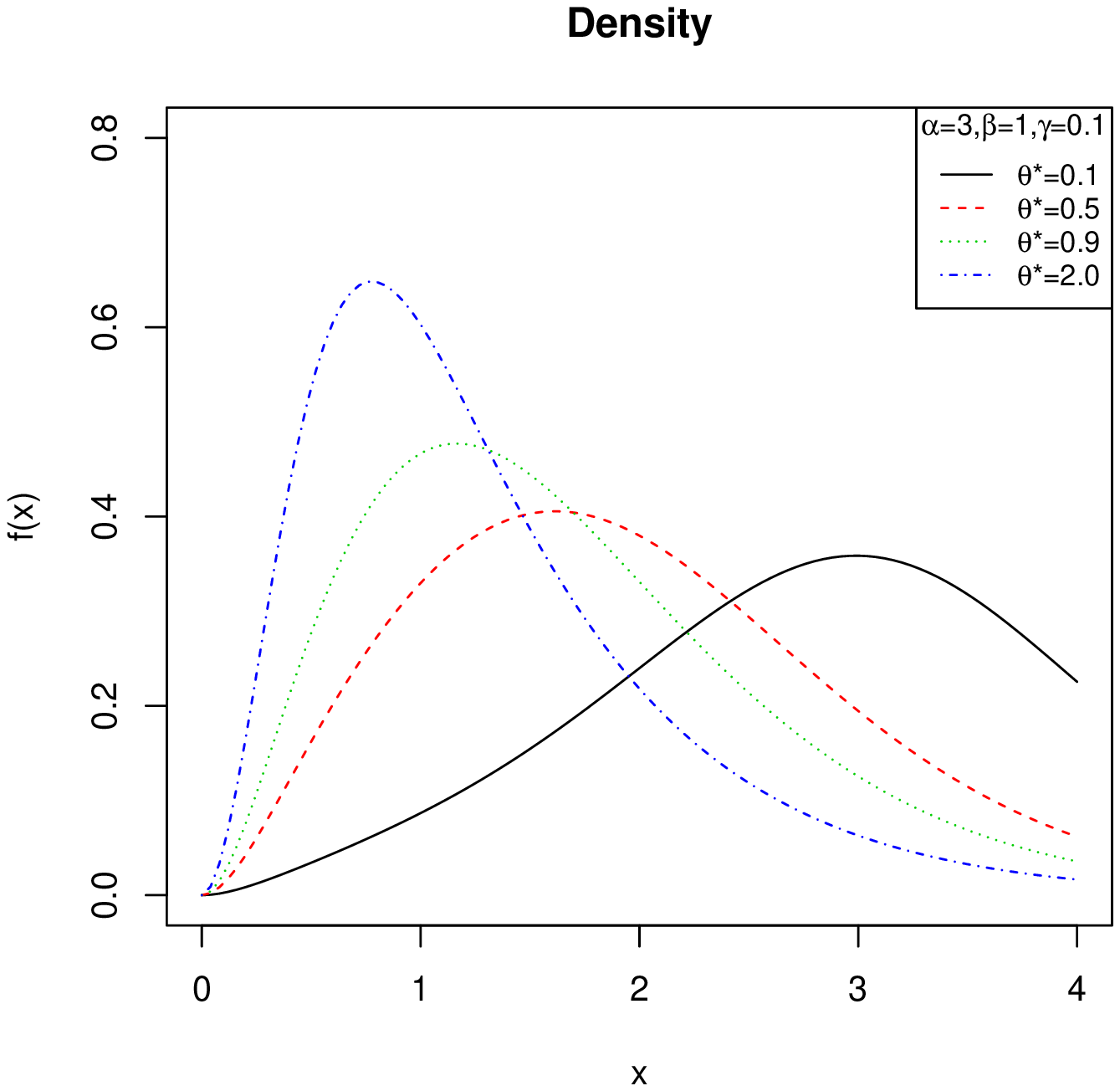}
\includegraphics[scale=0.35]{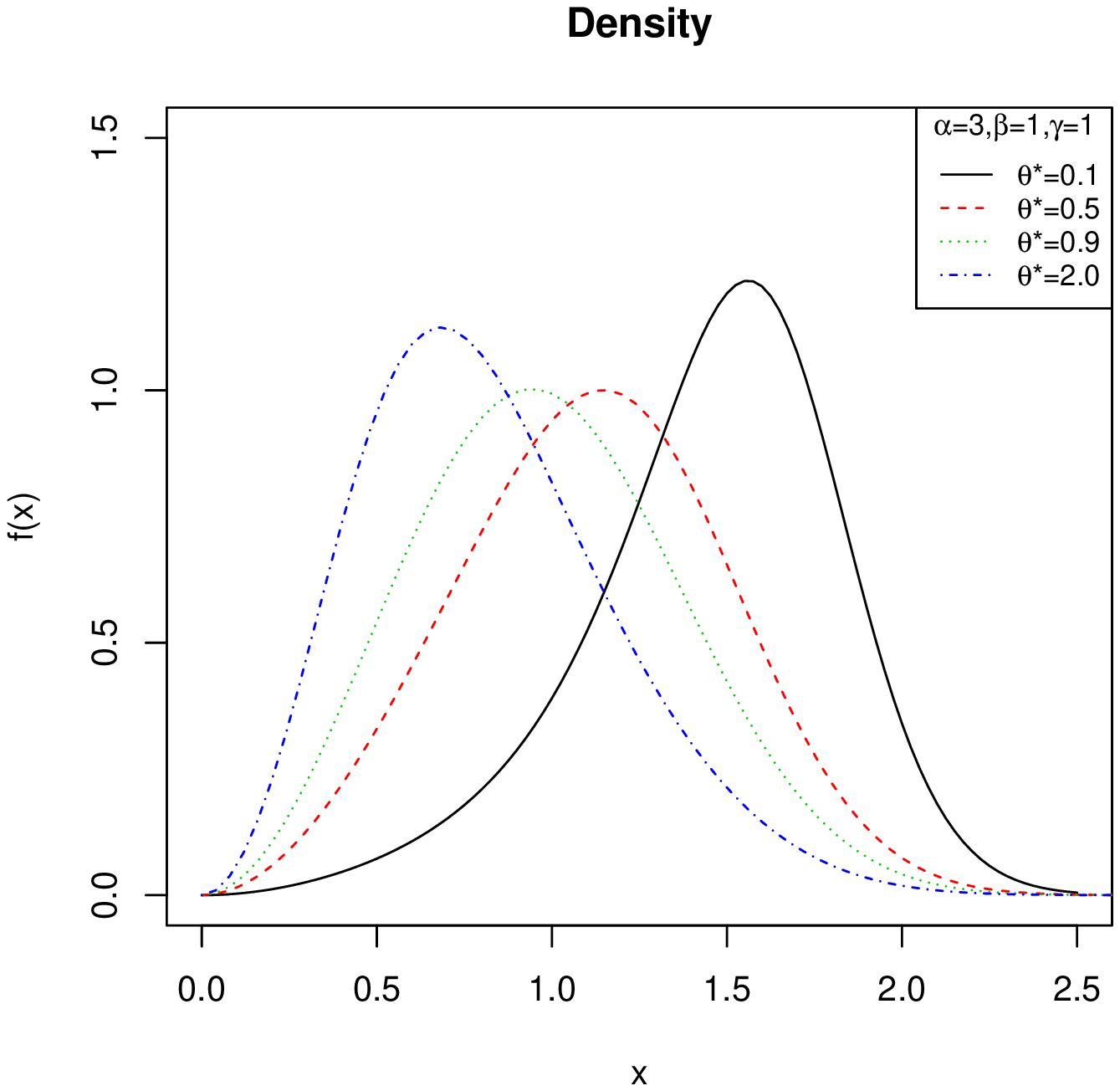}
\includegraphics[scale=0.35]{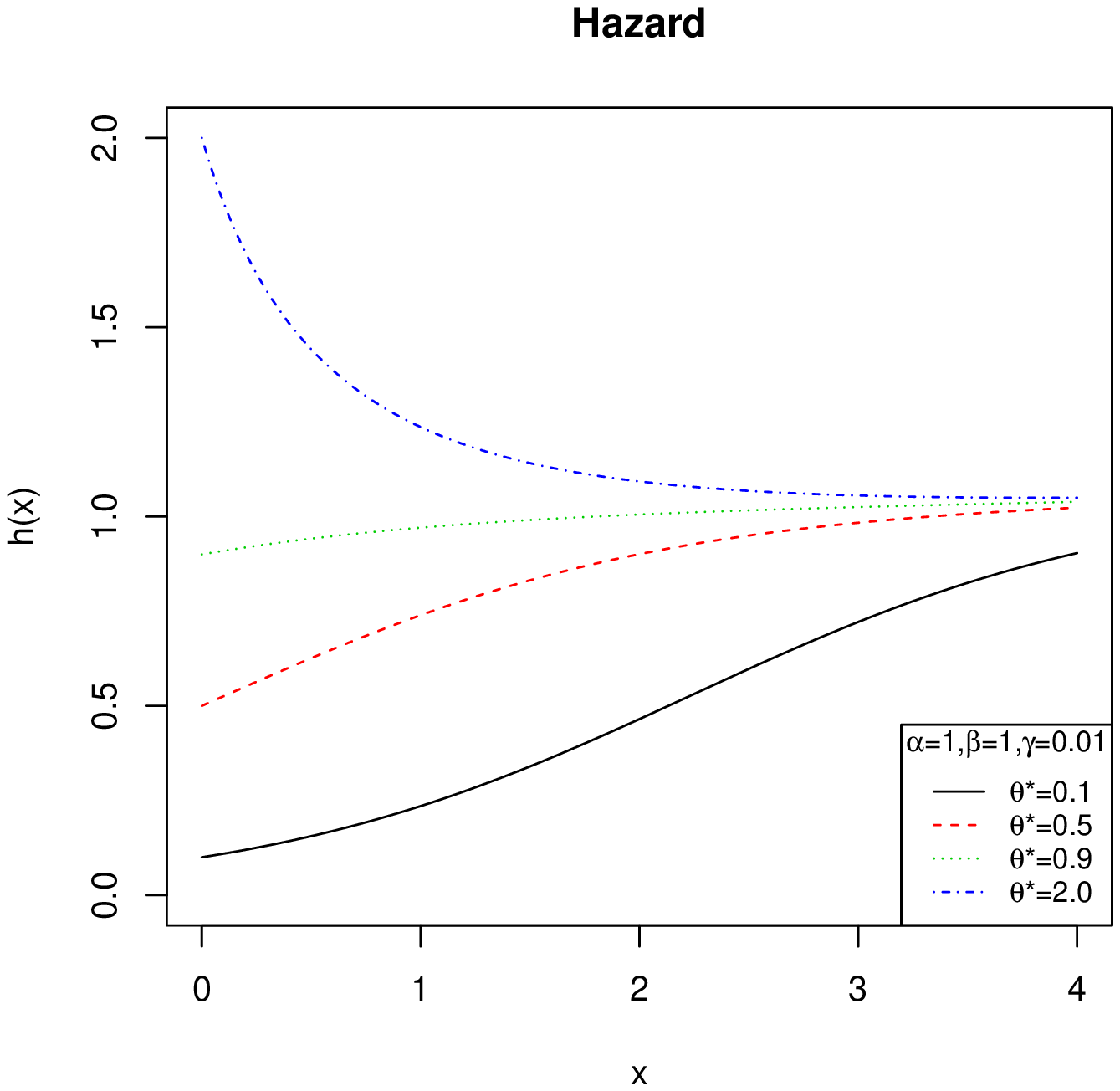}
\includegraphics[scale=0.35]{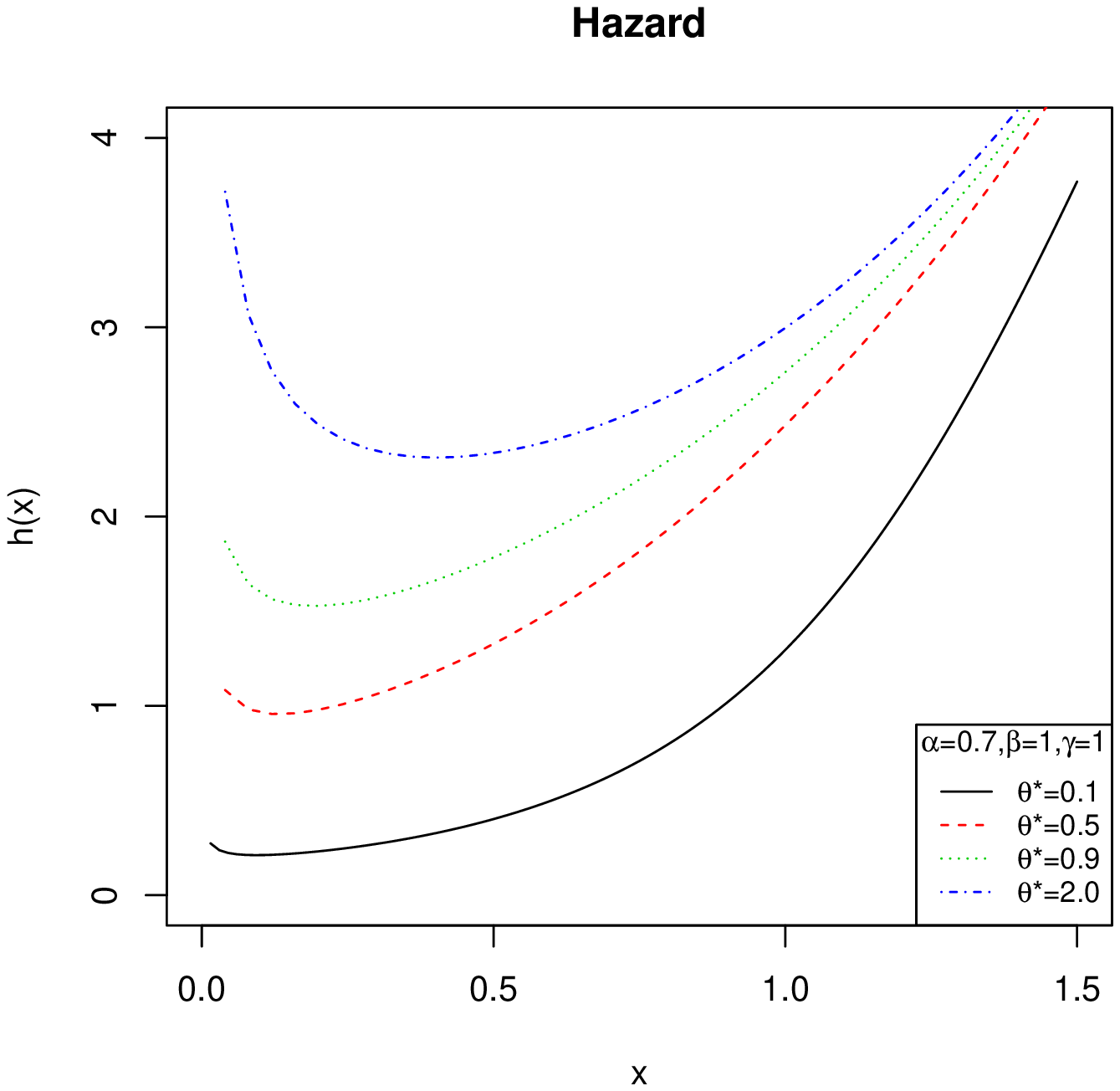}
\includegraphics[scale=0.35]{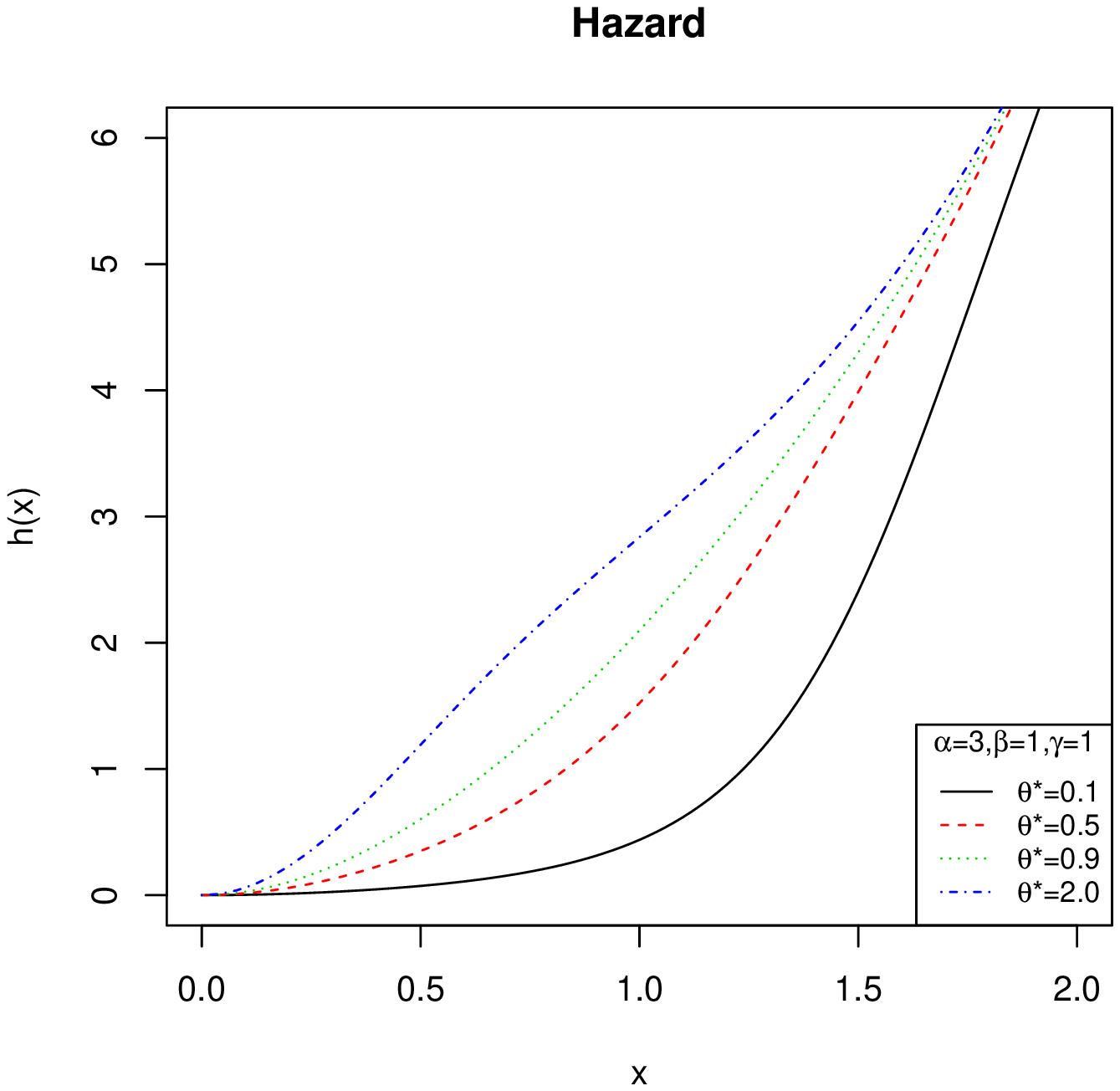}

\vspace{-0.8cm}
\caption{Plots of pdf and hazard rate function of GGG for  different values $\alpha$, $\beta $, $\gamma$ and $\theta^*$.}\label{fig.GG}
\end{figure}

\subsection{Generalized Gompertz-Poisson distribution}

The Poisson distribution (truncated at zero) is a special case of power series distributions with $a_{n}=\frac{1}{n!}$ and $C(\theta)=e^{\theta}-1$  $(\theta>0)$. The pdf and hazard rate function of generalized  Gompertz-Poisson (GGP) distribution are given respectively by
\begin{eqnarray}
f(x)&=& \theta\alpha \beta e^{\gamma x-\theta}(1-t)t^{\alpha-1} e^{\theta t^{\alpha}}, \ \ \ x>0\\
 \label{eq.hgp}
h(x)&=&\frac{\theta\alpha \beta e^{\gamma x}(1-t)t^{\alpha-1}e^{\theta t^{\alpha}}}{e^{\theta}-e^{\theta t^{\alpha}}}, \ \ \ \ \ x>0.
\end{eqnarray}

\begin{theorem} \label{thm.hgp}  Consider the GGP hazard function in (\ref{eq.hgp}). Then, for $\alpha\geq1$, the hazard function is increasing and for $0<\alpha<1$, is decreasing and bathtub shaped.
\end{theorem}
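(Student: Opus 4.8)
The plan is to proceed as in Appendix~A.1 (the GGG case) and appeal to Glaser's theorem (Glaser, 1980), which ties the shape of a failure rate $h=f/(1-F)$ to the monotonicity of $\eta(x):=-f'(x)/f(x)=-\tfrac{d}{dx}\log f(x)$: if $\eta$ is increasing on $(0,\infty)$ then $h$ is increasing; if $\eta$ is decreasing then $h$ is decreasing; and if $\eta$ is bathtub shaped (decreasing, then increasing) then $h$ is either increasing or bathtub shaped, the ambiguity being settled by $\lim_{x\to0^{+}}f(x)$ --- the value $\infty$ forcing the bathtub shape and the value $0$ forcing the increasing shape.

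First I would differentiate $\log f(x)=\text{const}+\gamma x-\theta+\log(1-t)+(\alpha-1)\log t+\theta t^{\alpha}$ with $t=1-e^{-\frac{\beta}{\gamma}(e^{\gamma x}-1)}$. Using $t'=\beta e^{\gamma x}(1-t)$ and $t''/t'=\gamma-\beta e^{\gamma x}$ one obtains
\[
\eta(x)=\beta e^{\gamma x}-\gamma-(\alpha-1)\frac{\beta e^{\gamma x}(1-t)}{t}-\theta\alpha\beta e^{\gamma x}(1-t)\,t^{\alpha-1},
\]
and one more differentiation writes $\eta'(x)$ through $e^{\gamma x}$, $t$ and $t'$ alone. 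Then I would record the endpoint behaviour: as $x\to0^{+}$ one has $t\to0^{+}$, $1-t\to1$, so (comparing the orders $t^{-1}$ and $t^{\alpha-1}$ of the last two terms) $\eta(0^{+})=-\infty$ if $\alpha>1$, $\eta(0^{+})$ is finite if $\alpha=1$, and $\eta(0^{+})=+\infty$ if $0<\alpha<1$; and as $x\to\infty$, $t\to1$, $t'\to0$, so $\eta(\infty)=+\infty$ when $\gamma>0$ while $\eta(\infty)=\beta$ in the limit $\gamma\to0^{+}$.

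The key reduction is that $e^{\gamma x}$ may be eliminated: from the definition of $t$, $\beta e^{\gamma x}=\beta-\gamma\log(1-t)>0$ for $t\in(0,1)$, so every occurrence of $e^{\gamma x}$ in $\eta'(x)$ can be rewritten in $t$, giving $\eta'(x)=\bigl(\beta-\gamma\log(1-t)\bigr)\,\Phi(t)$ for a function $\Phi$ on $(0,1)$. For $\alpha\ge1$ the aim is $\Phi(t)>0$ on $(0,1)$, whence $\eta$ is increasing and, by Glaser's theorem, $h$ is increasing: here the leading term of $\Phi$ together with the nonnegative $(\alpha-1)$-part should dominate the remaining term, and near $t=0$ the $\bigl((1-t)/t\bigr)^{2}$-blow-up dominates the competing $t^{\alpha-2}$-term (relevant only for $1<\alpha<2$), while near $t=1$ one has $\eta'\sim\gamma\beta e^{\gamma x}\to\infty$. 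For $0<\alpha<1$ the aim is instead that $\Phi$ has exactly one sign change on $(0,1)$, from $-$ to $+$, i.e.\ $\eta$ is bathtub shaped; since $\eta(0^{+})=\eta(\infty)=+\infty$ (for $\gamma>0$) this makes $\eta$ genuinely U-shaped, and since $f(0^{+})=\infty$ for $0<\alpha<1$ (shown above), Glaser's theorem yields that $h$ is bathtub shaped. In the sub-limit $\gamma\to0^{+}$ (the generalized exponential--Poisson model) the right tail of $h$ flattens to the constant $\beta$ and the bathtub may collapse to a purely decreasing $h$, which accounts for the wording ``decreasing and bathtub shaped'' in the statement.

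The main obstacle is the sign analysis of $\Phi$: it is a sum of terms of competing signs, some blowing up and some vanishing at $t=0$, so showing $\Phi$ has no zero (case $\alpha\ge1$) or exactly one zero (case $0<\alpha<1$) requires either a deliberate regrouping and a monotonicity/convexity study of $\Phi$, or tight asymptotics at both endpoints together with a count of the stationary points of $\Phi$. The genuinely delicate point is the borderline case $\alpha=1$ (and, more generally, any regime where $f(0^{+})$ is finite and positive while $\eta$ happens to be bathtub shaped): there Glaser's dichotomy does not decide the shape, and one must argue directly from the identity $h'/h=h-\eta$ that $h-\eta>0$ throughout --- e.g.\ by checking $h'(0^{+})\ge0$ and ruling out an interior zero of $h'$ via the relation $h''=-h\,\eta'$ at any such zero --- in order to conclude that $h$ is increasing as claimed.
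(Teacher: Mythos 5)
Your overall strategy (Glaser's theorem applied to $\eta(x)=-f'(x)/f(x)$, endpoint asymptotics of $\eta$, and the observation that $\lim_{x\to0^+}f(x)=\infty$ for $0<\alpha<1$ settles Glaser's ambiguity in favour of the bathtub shape) is sound and consistent with what the paper does for the whole family. But the decisive step is missing: for $\alpha\ge1$ the entire content of the theorem is the inequality $\Phi(t)>0$ on $(0,1)$, and for $0<\alpha<1$ it is that $\Phi$ changes sign exactly once. You state these as ``aims,'' say the positive terms ``should dominate'' the negative one, and explicitly label the sign analysis as ``the main obstacle'' to be resolved by an unspecified regrouping or stationary-point count. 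As written, nothing is proved beyond the limits of $\eta$ at the two endpoints, which by themselves are compatible with $\eta'$ (hence $h'$) changing sign several times. So the proposal is a plan with the core inequality unverified, not a proof.

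The paper avoids this difficulty by a change of variable that you do not exploit. Passing to $u=t=1-e^{-\frac{\beta}{\gamma}(e^{\gamma x}-1)}\in(0,1)$, the GGP hazard collapses to
\begin{equation*}
h=\frac{\theta\alpha\beta\,u^{\alpha-1}e^{\theta u^{\alpha}}}{e^{\theta}-e^{\theta u^{\alpha}}},
\qquad
\frac{d}{du}\log h=\frac{\alpha-1}{u}+\alpha\theta e^{\theta}\,\frac{u^{\alpha-1}}{e^{\theta}-e^{\theta u^{\alpha}}},
\end{equation*}
a two-term expression in which both terms are manifestly nonnegative (indeed the second is strictly positive) when $\alpha\ge1$; monotonicity is then immediate, with no competition between blowing-up and vanishing terms to adjudicate. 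For $0<\alpha<1$ the same expression tends to $-\infty$ as $u\to0^{+}$ and is eventually positive, and a convexity argument pins down a unique sign change, after which your $f(0^{+})=\infty$ observation applies. If you want to salvage your route in the original variable $x$, you must either actually carry out the sign analysis of $\Phi$ or reduce to the $u$-variable expression above; note also that your extra care about the borderline $\alpha=1$ is unnecessary once the $u$-form is used, since there $\frac{d}{du}\log h>0$ outright.
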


\begin{proof} See Appendix A.2.
\end{proof}

The first and second non-central moments of GGP can be computed as
\begin{eqnarray*}
&&E(X)=\frac{\alpha\beta}{e^{\theta}-1}\sum_{n=1}^{\infty} \frac{\theta^{n}}{(n-1)!}\sum_{k=0}^{\infty}\sum_{j=0}^{\infty}\frac{ A_j(-1)^{j+k}e^{\frac{\beta}{\gamma}(j+1)}(\frac{\beta}{\gamma}(j+1))^{k}}{\Gamma(k+1)(\gamma+\gamma k)^{2}},\\
&&E(X^{2})=\frac{2\alpha\beta}{e^{\theta}-1}\sum_{n=1}^{\infty}  \frac{\theta^{n}}{(n-1)!}\sum_{k=0}^{\infty}\sum_{j=0}^{\infty}\frac{ A_j(-1)^{j+k+3}e^{\frac{\beta}{\gamma}(j+1)}(\frac{\beta}{\gamma}(j+1))^{k}}{\Gamma(k+1)(\gamma+\gamma k)^{3}}.
\end{eqnarray*}
The plots of pdf and hazard rate function of GGP for different values of $\alpha$, $\beta$, $\gamma$ and $\theta$  are given in Figure \ref{fig.GP}.

\begin{figure}
\centering
\includegraphics[scale=0.35]{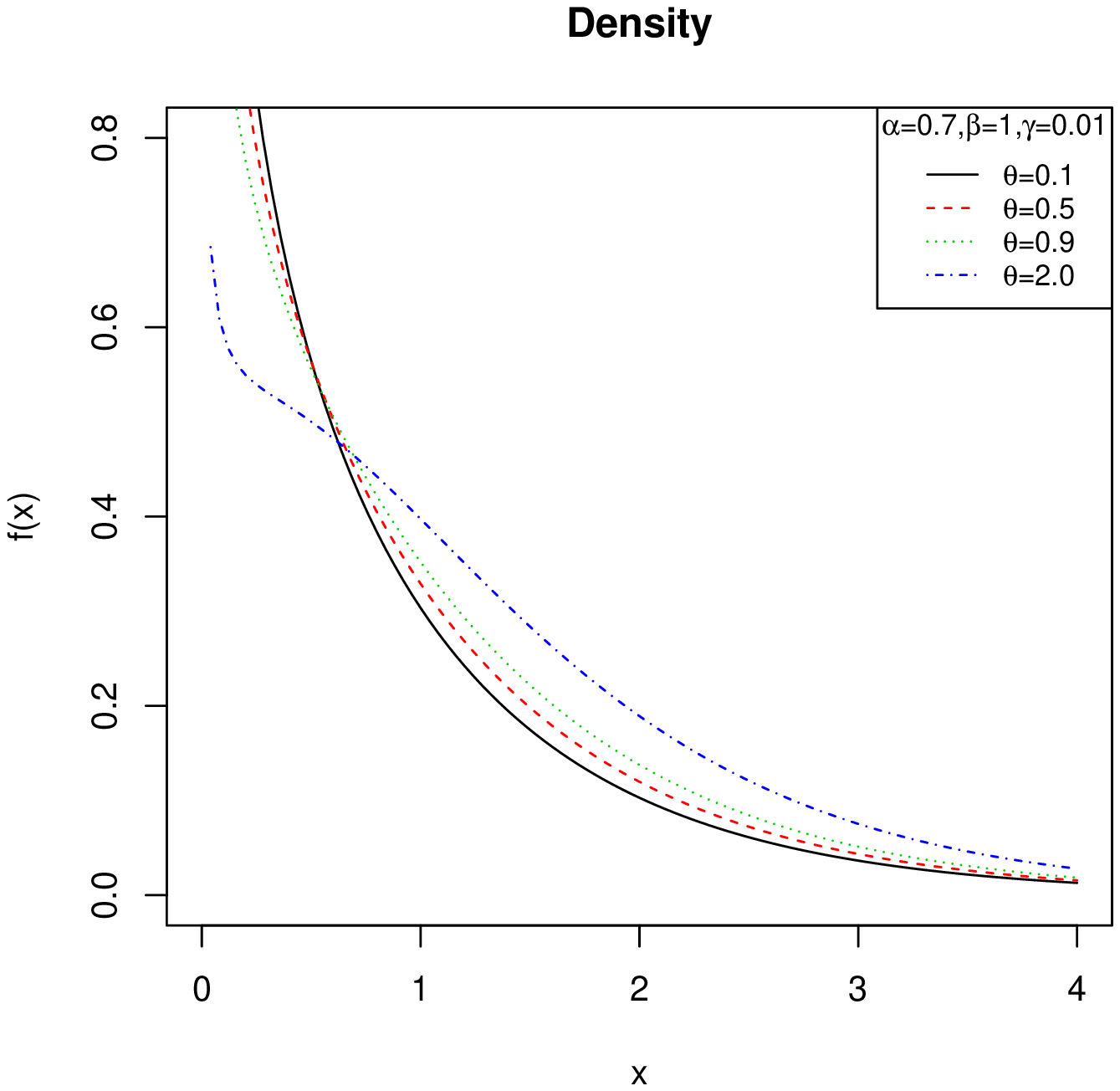}
\includegraphics[scale=0.35]{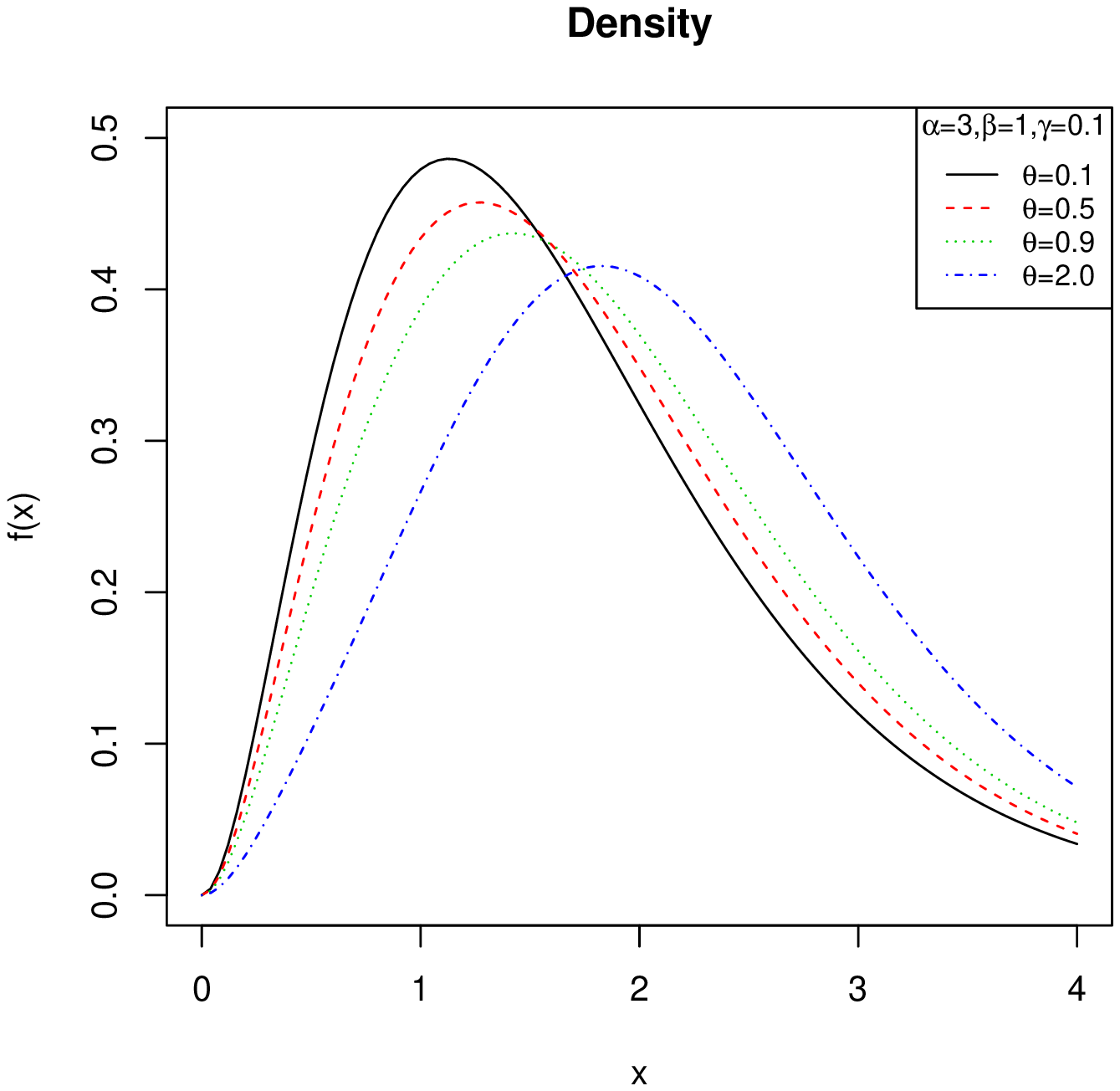}
\includegraphics[scale=0.35]{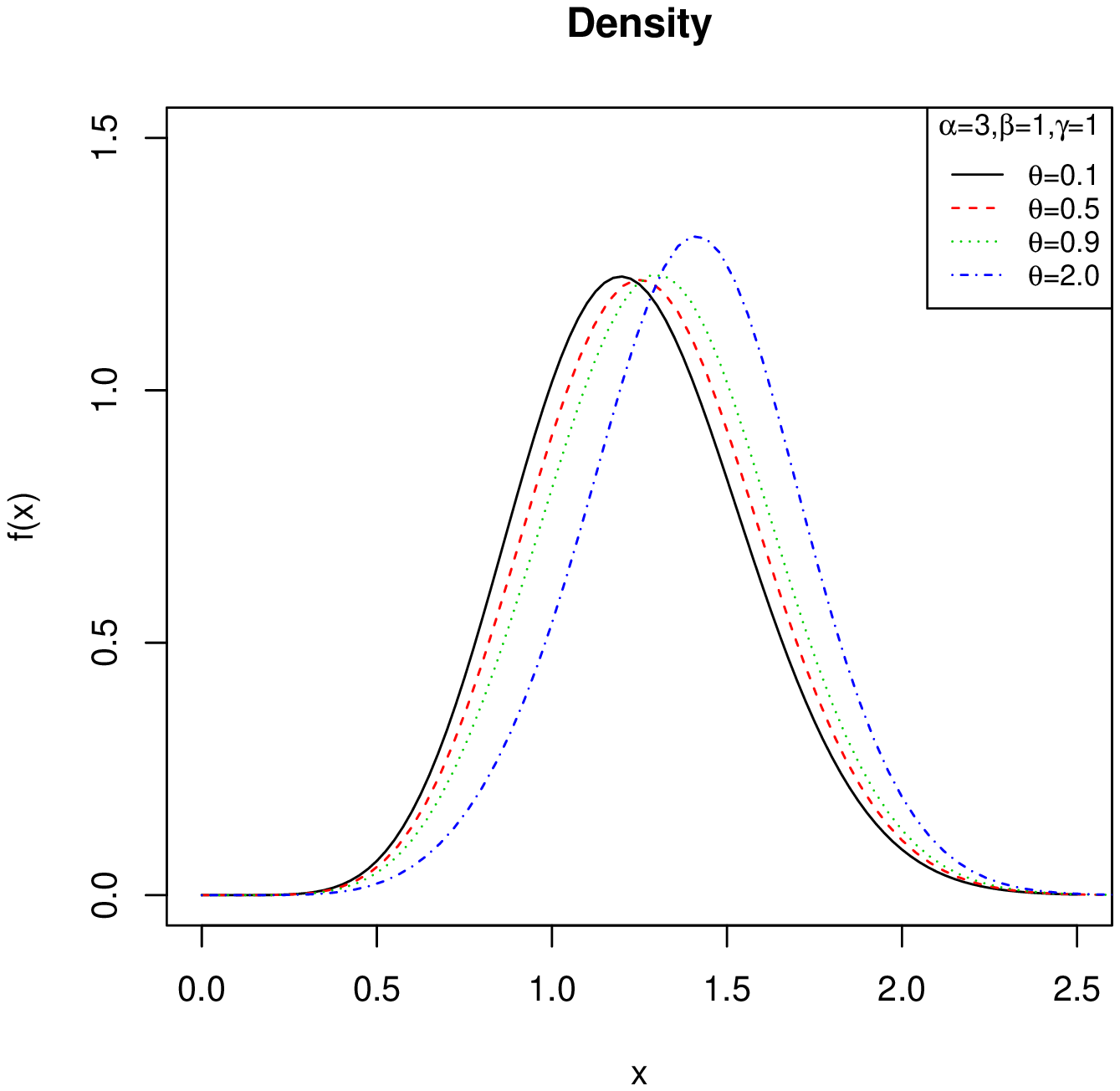}
\includegraphics[scale=0.35]{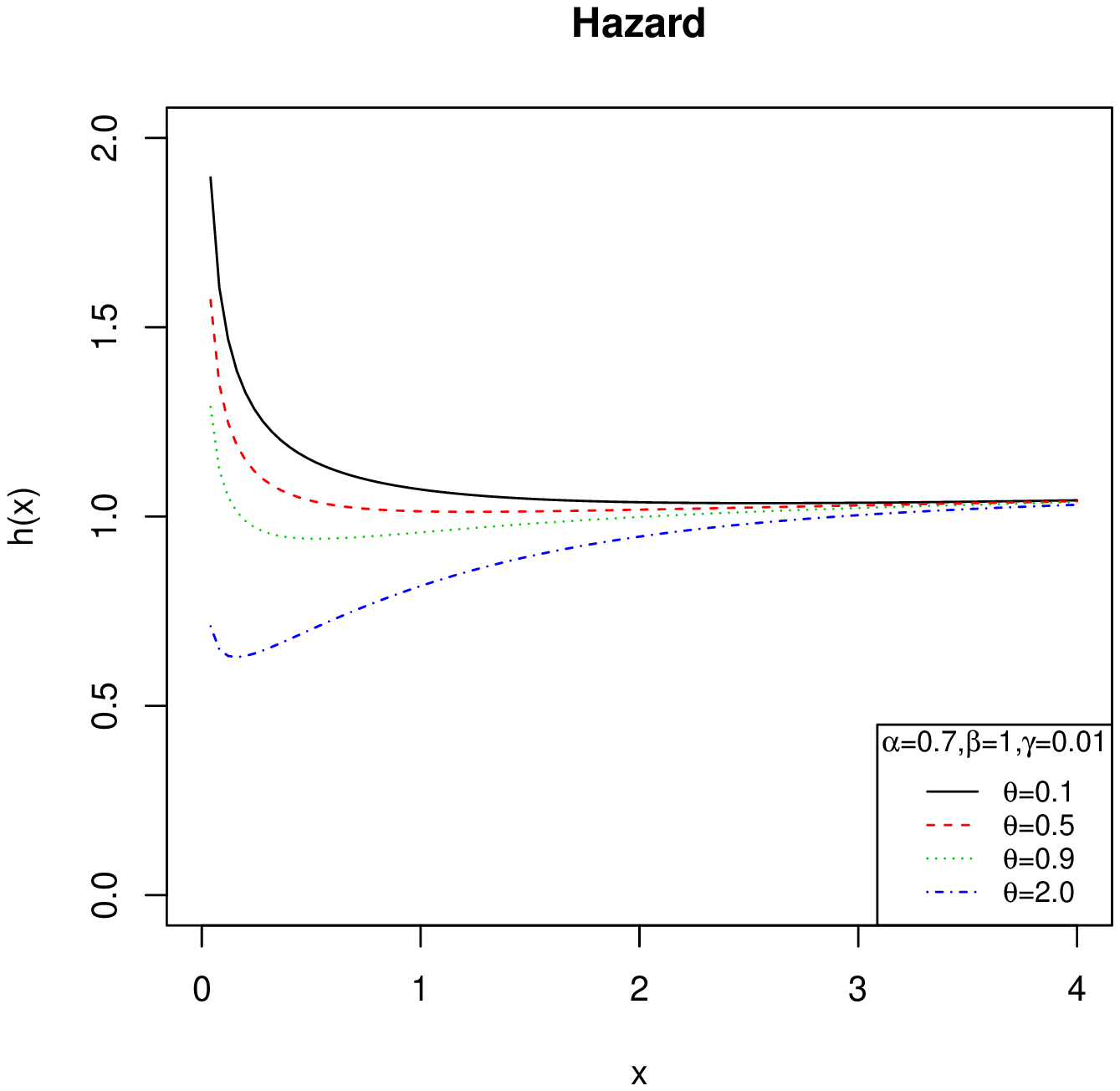}
\includegraphics[scale=0.35]{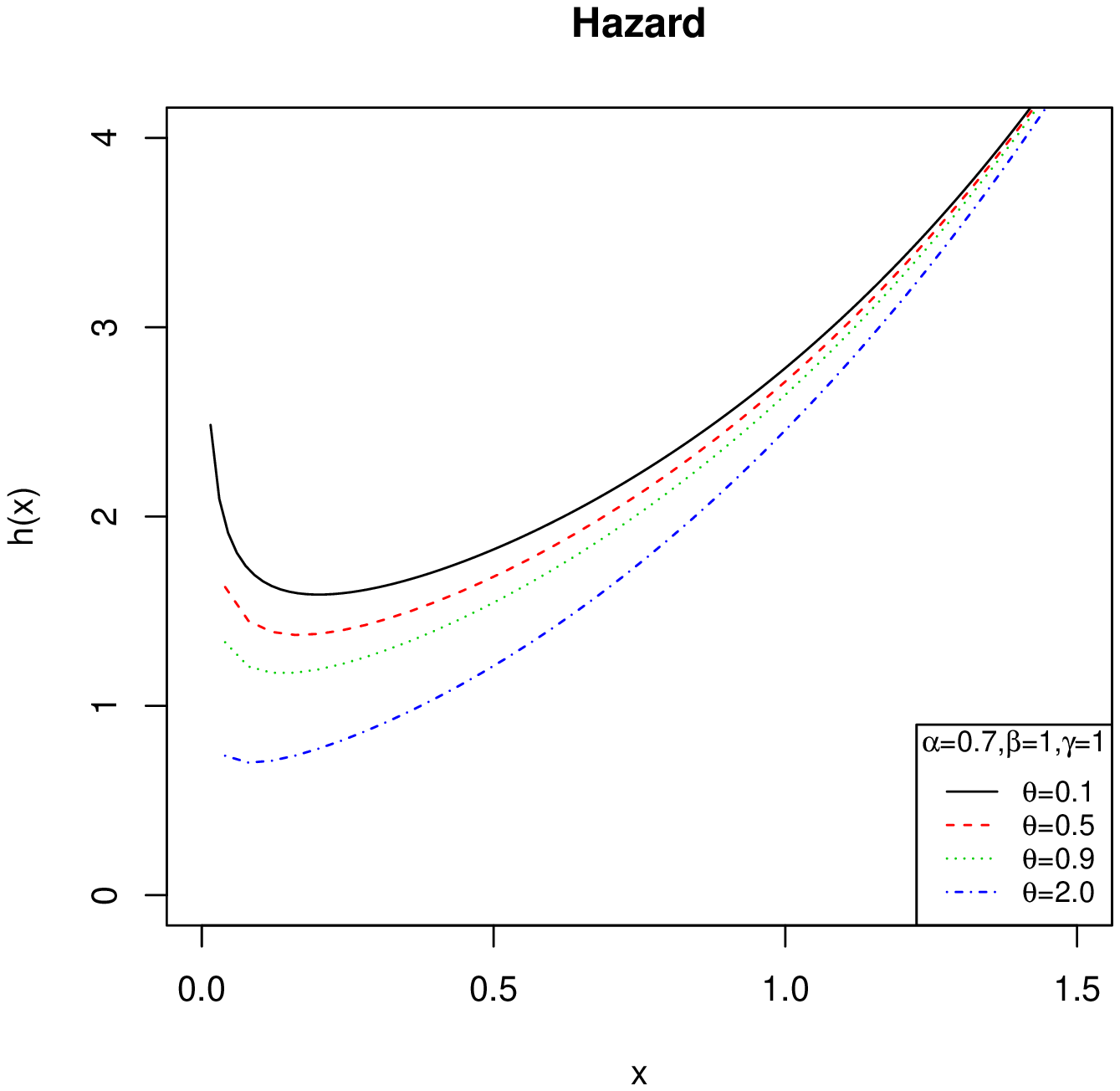}
\includegraphics[scale=0.35]{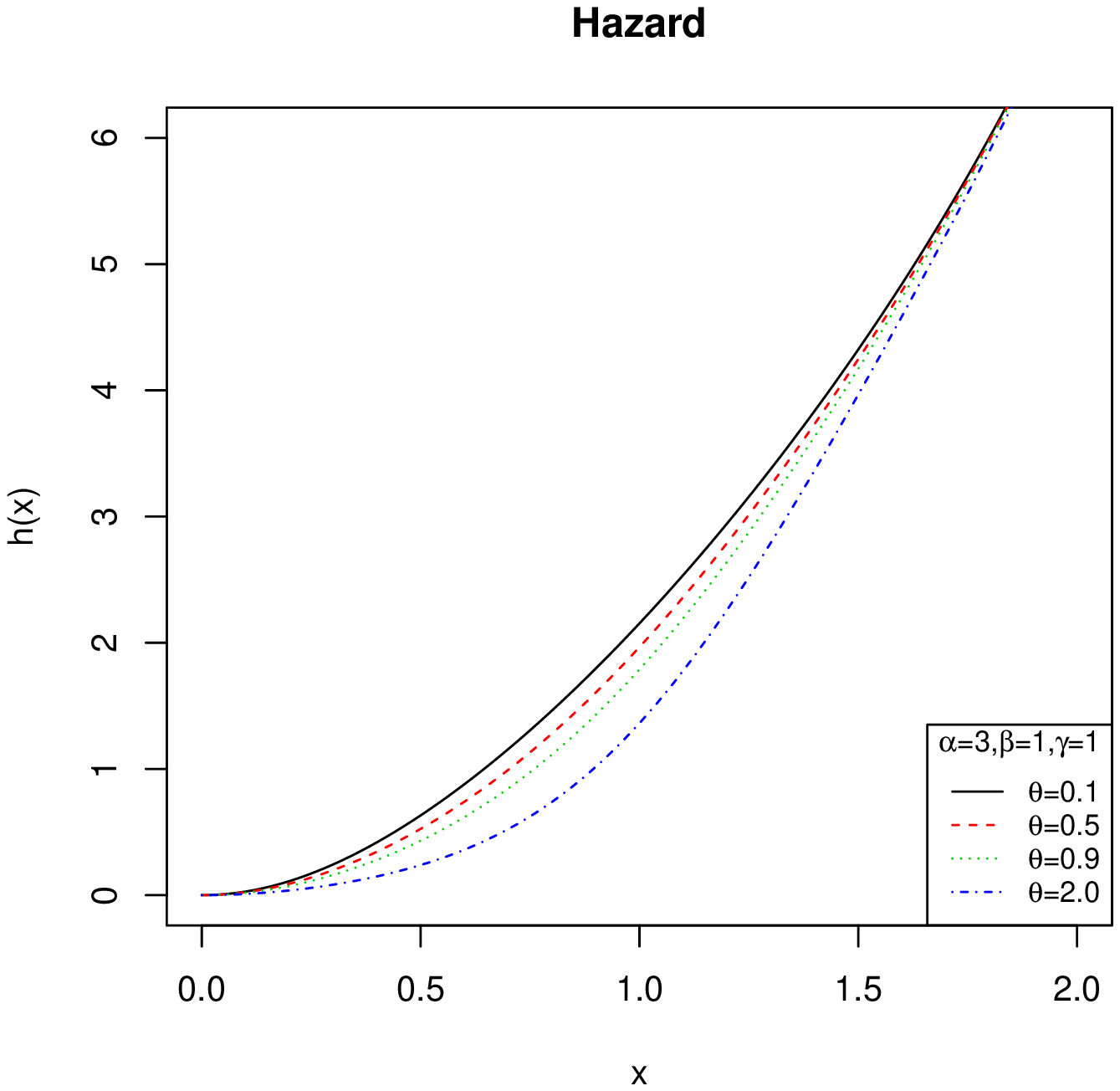}

\vspace{-0.8cm}
\caption{Plots of pdf and hazard rate function of GGP for  different values $\alpha$,  $\beta $, $\gamma$ and $\theta$.}\label{fig.GP}
\end{figure}

\subsection{Generalized Gompertz-binomial distribution}

The binomial distribution (truncated at zero) is a special case of power series distributions with $a_{n}=\binom{m}{n}$ and $C(\theta)=(\theta+1)^{m}-1 \ (\theta>0),$ where $m$ $(n\leq m)$  is the number of replicas. The pdf and hazard rate function of generalized
 Gompertz-binomial (GGB) distribution are given respectively by
\begin{eqnarray}
f(x)&=&m \theta\alpha \beta e^{\gamma x}(1-t)t^{\alpha-1}\frac{(\theta t^{\alpha}+1)^{m-1}}{(\theta+1)^{m}-1}, \ \ \ x>0,\\
 \label{eq.hgb}
h(x)&=&\frac{m \theta\alpha \beta e^{\gamma x}(1-t)t^{\alpha-1} (\theta t^{\alpha}+1)^{m-1}}{(\theta+1)^{m}-(\theta t^{\alpha}+1)^{m}}, \ \ \ x>0.
\end{eqnarray}

The plots of pdf and hazard rate function of GGB for $m=4$, and different values of $\alpha$,  $\beta$, $\gamma$ and $\theta$  are given in Figure \ref{fig.GB}. We can find that the GGP distribution can be obtained as limiting of GGB distribution if $m\theta\rightarrow \lambda>0$, when $m\rightarrow \infty$.

\begin{theorem}  Consider the GGB hazard function in (\ref{eq.hgb}). Then, for $\alpha\geq1$, the hazard function is increasing and for $0<\alpha<1$, is
decreasing and bathtub shaped.
\end{theorem}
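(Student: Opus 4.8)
The plan is to use Glaser's criterion based on the function $\eta(x)=-f'(x)/f(x)$, following the pattern of the proofs of the two preceding theorems (Appendices A.1 and A.2), and then to verify that the binomial factor does not disturb the relevant sign pattern. Starting from the GGB density in \eqref{fGP} and differentiating its logarithm, using $\frac{d}{dx}\bigl(1-e^{-\frac{\beta}{\gamma}(e^{\gamma x}-1)}\bigr)=\beta e^{\gamma x}(1-t)$, one first obtains
\[
\eta(x)=\beta e^{\gamma x}-\gamma-(\alpha-1)\frac{\beta e^{\gamma x}(1-t)}{t}-(m-1)\frac{\theta\alpha\beta e^{\gamma x}(1-t)t^{\alpha-1}}{1+\theta t^{\alpha}}.
\]
This is the generic GGPS expression for $\eta$ with the factor $C''(\theta t^{\alpha})/C'(\theta t^{\alpha})$ specialised to $(m-1)/(1+\theta t^{\alpha})$, which is nonnegative, bounded on $(0,1)$ and monotone in $t$ -- precisely the qualitative features shared by the constant factor used in the GGP proof and by $2/(1-\theta t^{\alpha})$ used in the GGG proof.

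For $\alpha\ge 1$ there is a shortcut bypassing $\eta$. From \eqref{fGP} and \eqref{hGP} one has $h(x)=h_{0}(x)\,\psi\bigl(G(x)\bigr)$, where $h_{0}$ is the hazard rate of ${\rm GG}(\alpha,\beta,\gamma)$ and $\psi(u)=\theta(1-u)C'(\theta u)/\bigl(C(\theta)-C(\theta u)\bigr)$. For the binomial $C$, putting $r=(1+\theta)/(1+\theta u)>1$ gives $\psi(u)=m/\sum_{i=0}^{m-1}r^{i}$, which is increasing in $u\in(0,1)$; since $G$ is increasing and, for $\alpha\ge1$, $h_{0}$ is increasing (established in \cite{el-al-al-13}; for $\alpha=1$ it is the Gompertz hazard $\beta e^{\gamma x}$), $h$ is a product of positive increasing functions, hence increasing. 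Alternatively one checks $\eta'(x)>0$ directly: with the substitution $z=\frac{\beta}{\gamma}(e^{\gamma x}-1)$ -- so that $1-t=e^{-z}$, $\phi:=(1-t)/t=(e^{z}-1)^{-1}$ satisfies $\phi'=-\phi(1+\phi)$, and $\beta e^{\gamma x}=\beta+\gamma z$ -- one writes $\eta=(\beta+\gamma z)\bigl[1-\phi\,\rho(z)\bigr]-\gamma$ with $\rho(z)=(\alpha-1)+\frac{(m-1)\theta\alpha t^{\alpha}}{1+\theta t^{\alpha}}\ge0$ nondecreasing, and the resulting expression for $\eta'$ is seen to be positive; Glaser's criterion then gives an increasing $h$.

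For $0<\alpha<1$ the composition argument breaks down, since $h_{0}$ is no longer monotone, so one argues through $\eta$ as in Appendix A.2. Now $\rho(0)=\alpha-1<0$; near $x=0$ one finds $\eta(x)\sim(1-\alpha)/x\to+\infty$ while $h(x)\sim c\,x^{\alpha-1}$ with $c>0$, so $h/\eta\to0$ and hence $h'=h\,(h-\eta)<0$ near $0$, whereas $\lim_{x\to\infty}h(x)=\infty$ for $\gamma>0$. The substantive step is to show that $\eta'$ changes sign at most once on $(0,\infty)$, i.e. $\eta$ is either decreasing or bathtub-shaped; Glaser's criterion then forces $h$ to be decreasing or bathtub-shaped, and the boundary behaviour just recorded pins it to bathtub-shaped (the purely decreasing case being the degenerate/limiting one), as claimed.

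The main obstacle is exactly this sign-change count for $\eta'$. Differentiating $\eta=(\beta+\gamma z)[1-\phi\rho]-\gamma$ gives $\eta'=\gamma\,(1-\phi\rho)+(\beta+\gamma z)\,\phi\,\bigl[(1+\phi)\rho-\rho'\bigr]$, which mixes the growing factor $\beta+\gamma z$ with the decaying $\phi$ and with rational functions of $t$ and of $\theta t^{\alpha}$, and these must be regrouped so that the number of zeros becomes visible. Because the binomial contribution enters only through the bounded nondecreasing quantity $\rho$ -- the same structural role played by the geometric and Poisson terms in Appendices A.1--A.2 -- the grouping and estimates used there carry over essentially verbatim, which is why the conclusion coincides with those of the two preceding theorems.
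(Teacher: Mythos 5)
Your proposal is correct in substance and, for the $\alpha\ge 1$ half, takes a genuinely different route from the paper. The paper's own ``proof'' of this theorem is a one-line deferral (``omitted, since $\theta>0$ \dots similar to the proof of Theorem~\ref{thm.hgp}''), and the GGP argument it points to works in the transformed variable $u=t$ with $\eta(u)=\log h(u)$, checking the sign of $\eta'(u)$ at the endpoints and then \emph{asserting} that $\tau''(u)>0$ so that $\eta'$ has exactly one zero. Your factorization $h=h_{0}\cdot(\psi\circ G)$ with $\psi(u)=\theta(1-u)C'(\theta u)/\bigl(C(\theta)-C(\theta u)\bigr)$, together with the identity $\psi(u)=m/\sum_{i=0}^{m-1}r^{i}$, $r=(1+\theta)/(1+\theta u)$ (the telescoping of $(1+\theta)^{m}-(1+\theta u)^{m}$ checks out), is a cleaner and essentially computation-free proof of monotonicity for $\alpha\ge1$, resting only on the cited fact that the ${\rm GG}$ hazard is increasing there; it also makes transparent why the same conclusion holds for the geometric, Poisson and logarithmic compounds (any $C$ with $\psi$ increasing), which the paper's case-by-case derivative computations obscure. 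For $0<\alpha<1$ you follow the paper's template --- Glaser's criterion, the boundary behaviour $\eta(x)\sim(1-\alpha)/x$, $h'=h(h-\eta)<0$ near $0$, $h(x)\to\infty$ at infinity to exclude the purely decreasing alternative --- and you candidly flag that the single remaining step, that $\eta'$ changes sign at most once, is only asserted to carry over. Be aware that this is precisely the step the paper also never verifies (Appendices A.1 and A.2 say ``we can show that $\tau''(u)>0$'' without showing it), so your proposal is no less complete than the source; but if a self-contained argument is wanted, that sign-change count for the binomial $C$ is the one computation still owed.
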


\begin{proof} The proof is omitted, since $\theta > 0$ and therefore the proof is similar to the proof of Theorem \ref{thm.hgp}.
\end{proof}

The first and second non-central moments of GGB are given by
\begin{eqnarray*}
&&E(X)=\frac{\alpha\beta}{(\theta+1)^{m}-1}\sum_{n=1}^{\infty}\theta^{n}n \binom{m}{n}\sum_{k=0}^{\infty}\sum_{j=0}^{\infty}\frac{
  A_j(-1)^{j+k}e^{\frac{\beta}{\gamma}(j+1)}(\frac{\beta}{\gamma}(j+1))^{k}}{\Gamma(k+1)(\gamma+\gamma k)^{2}},\\
&&E(X^{2})= \frac{2\alpha\beta}{(\theta+1)^{m}-1}\sum_{n=1}^{\infty}
\theta^{n}n \binom{m}{n}\sum_{k=0}^{\infty}\sum_{j=0}^{\infty}\frac{
  A_j(-1)^{j+k+3}e^{\frac{\beta}{\gamma}(j+1)}(\frac{\beta}{\gamma}(j+1))^{k}}{\Gamma(k+1)(\gamma+\gamma k)^{3}}.                                      \end{eqnarray*}

\begin{figure}
\centering
\includegraphics[scale=0.35]{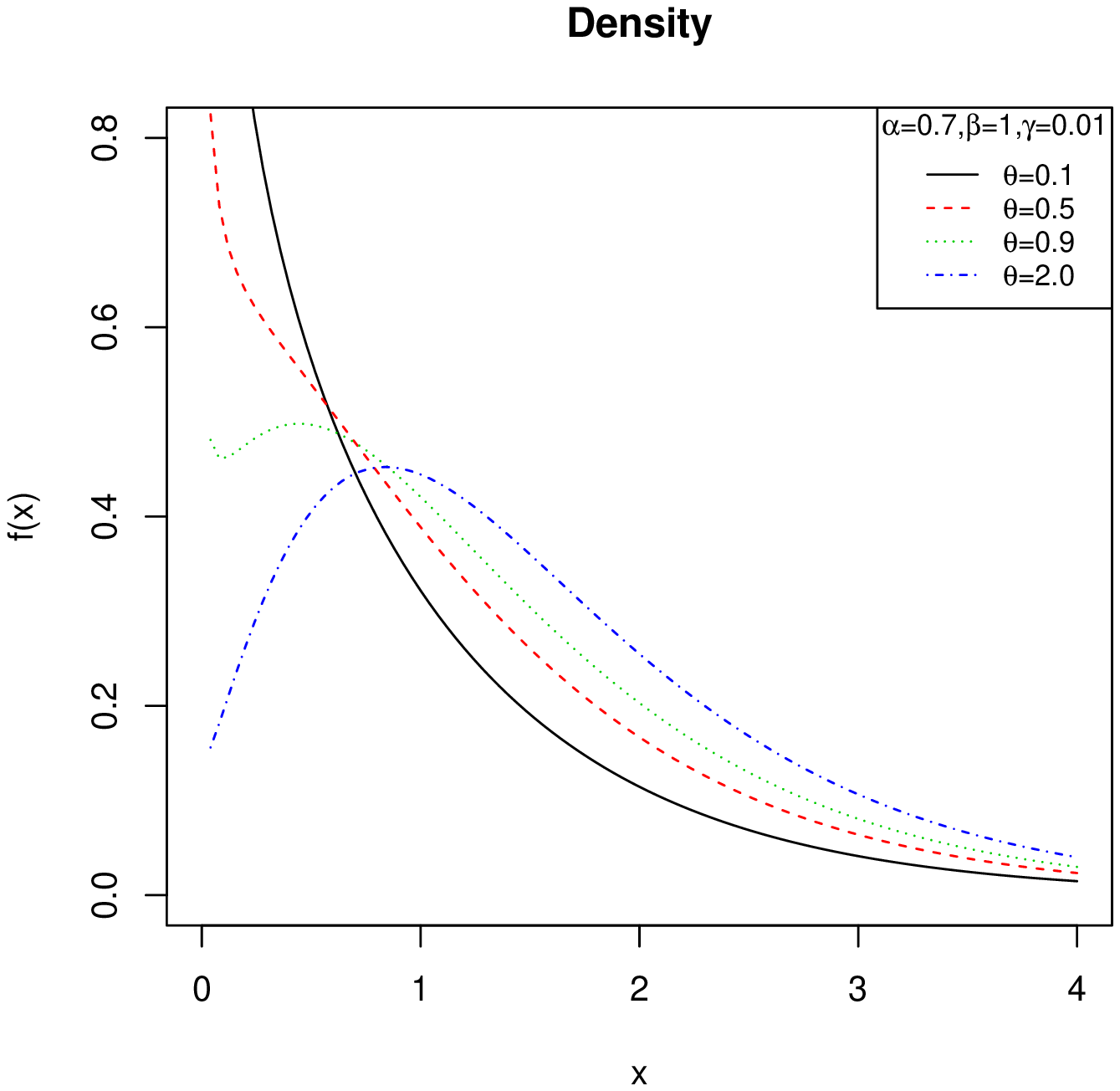}
\includegraphics[scale=0.35]{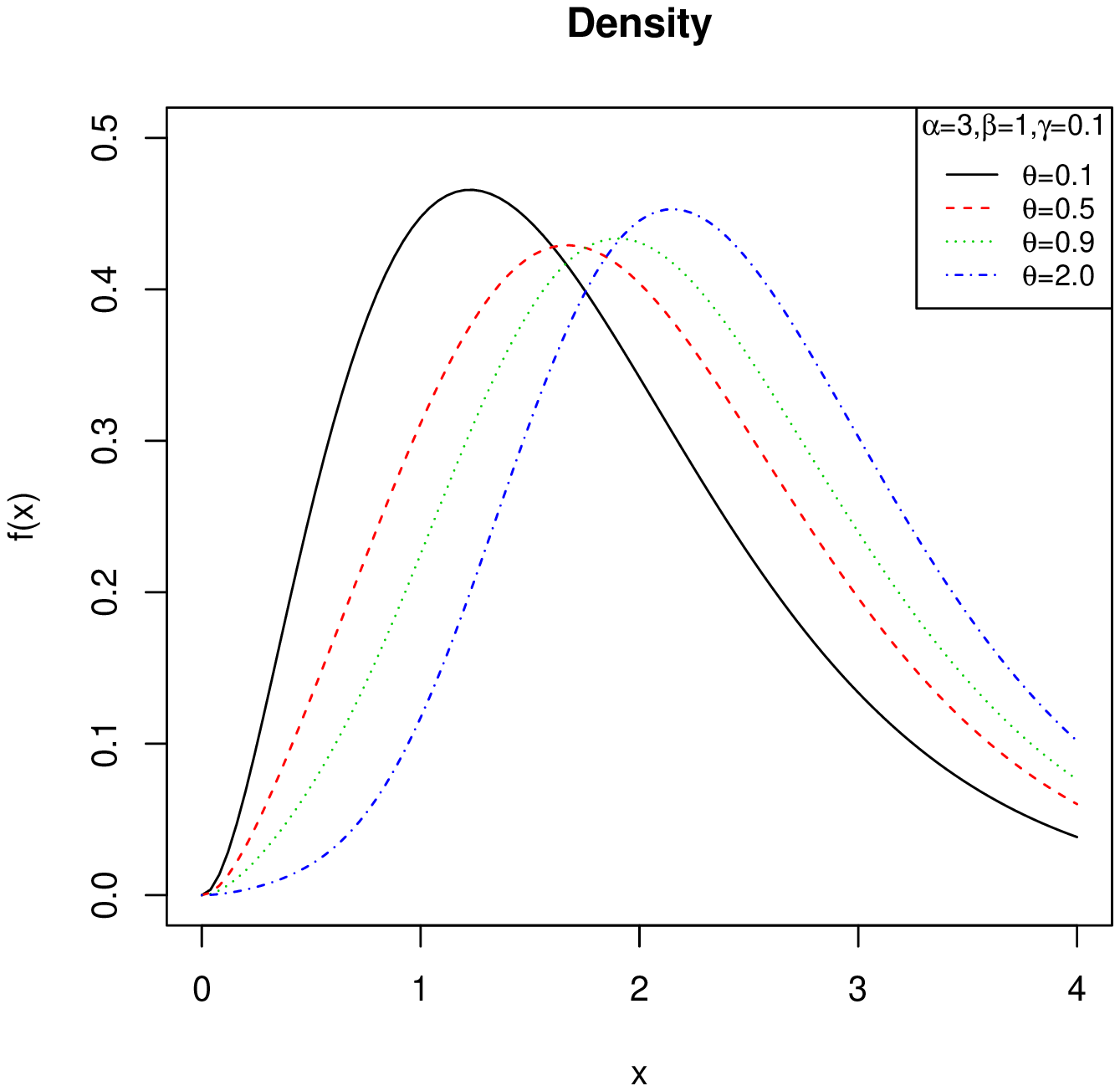}
\includegraphics[scale=0.35]{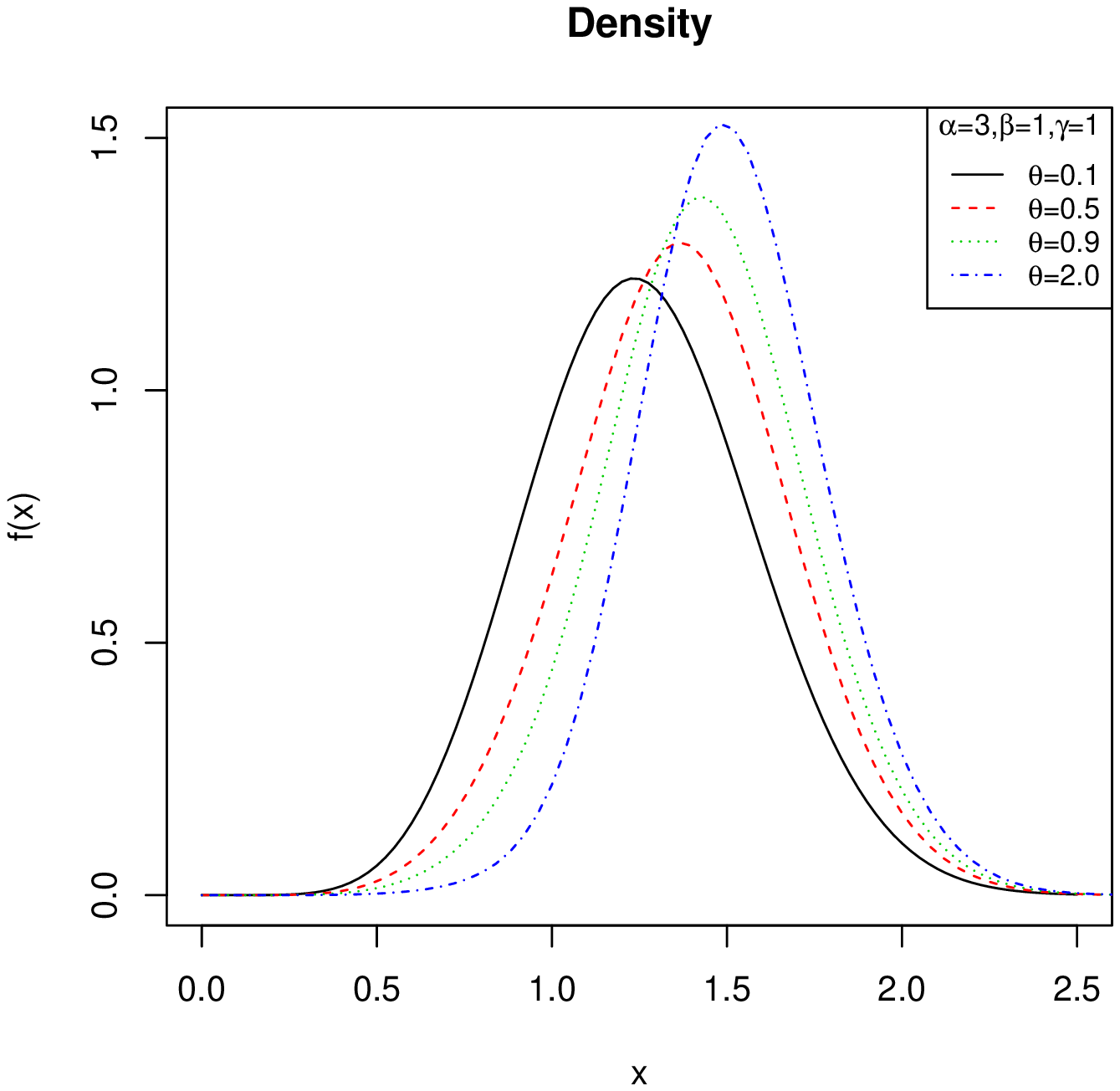}
\includegraphics[scale=0.35]{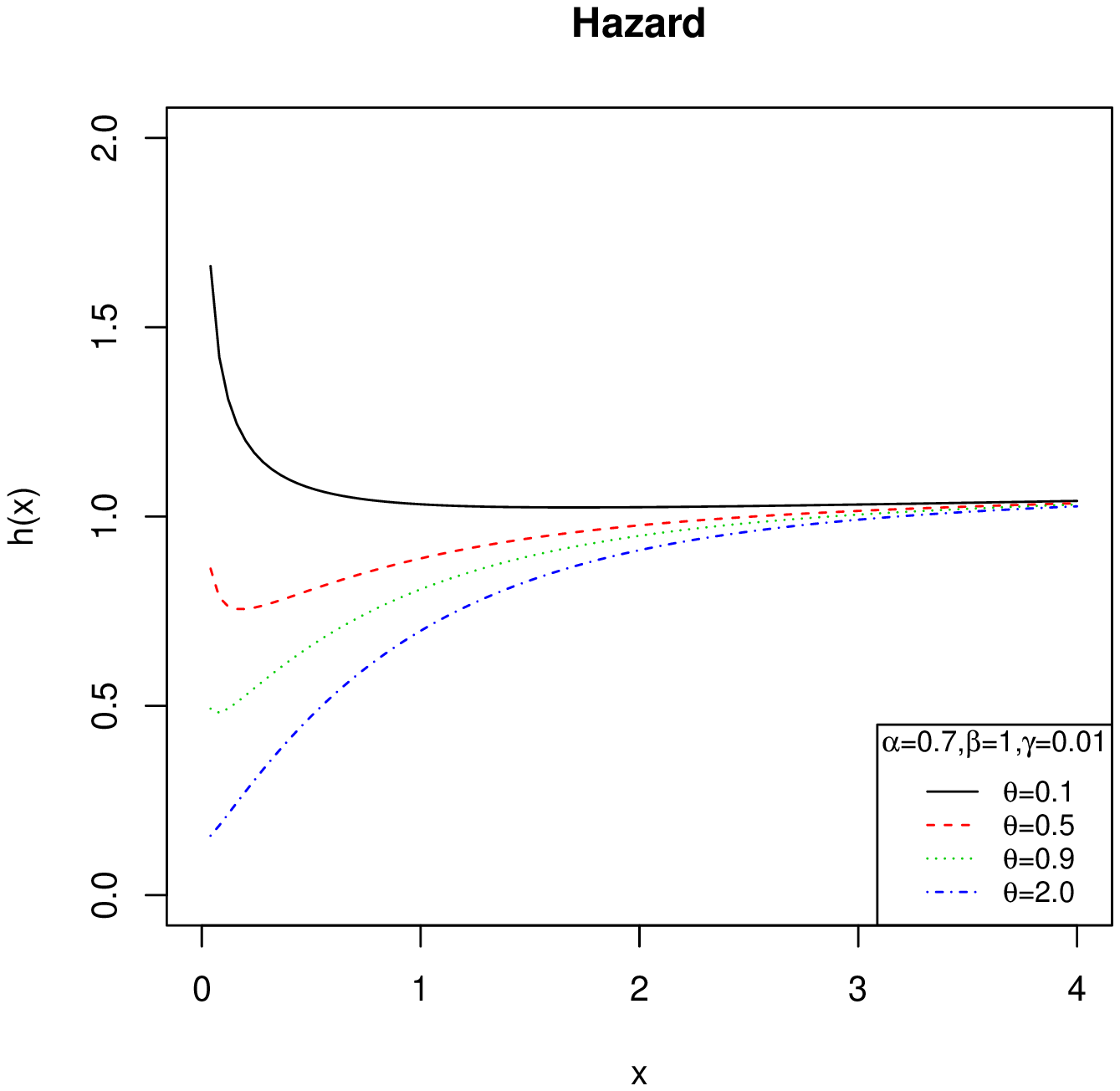}
\includegraphics[scale=0.35]{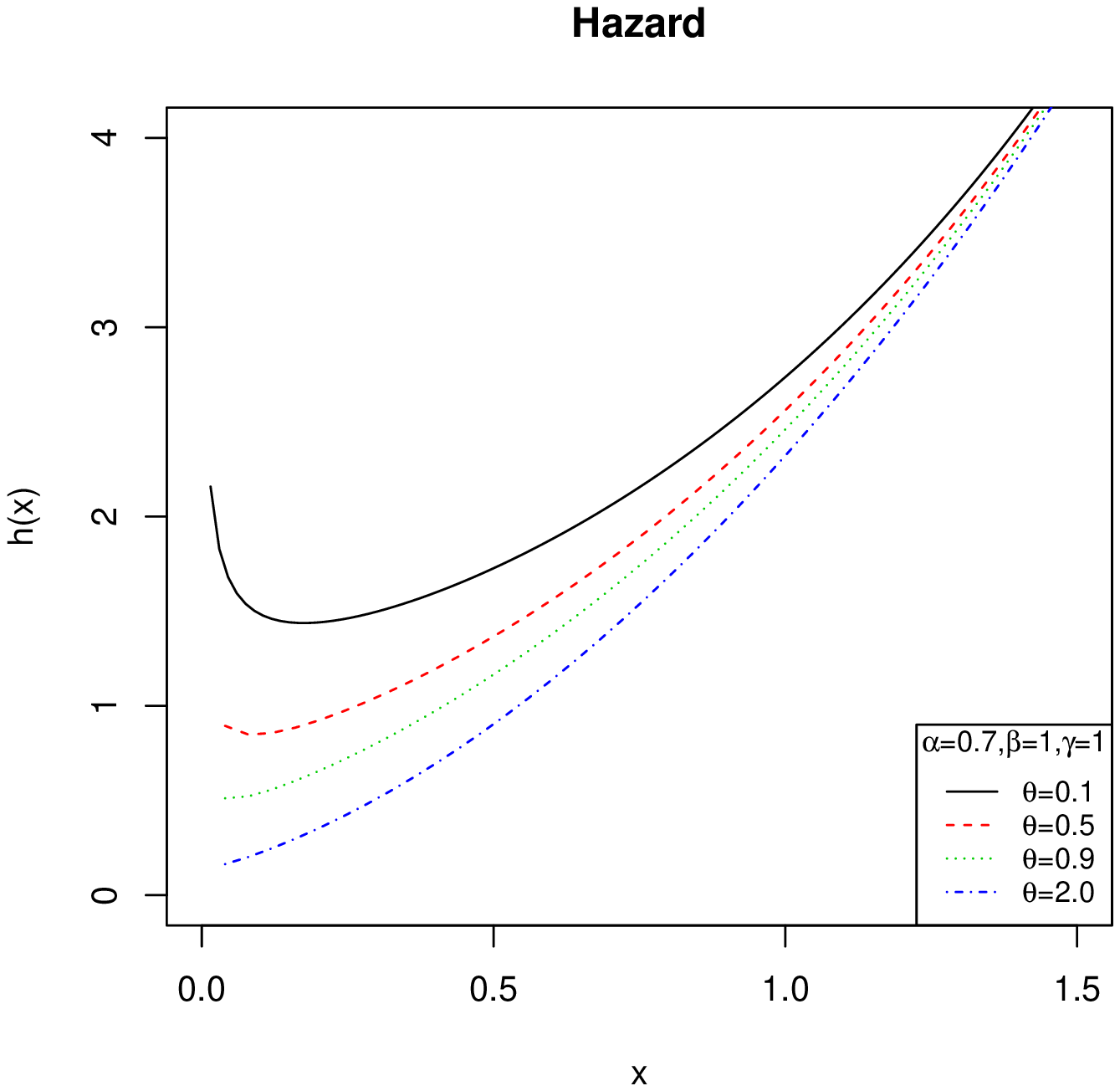}
\includegraphics[scale=0.35]{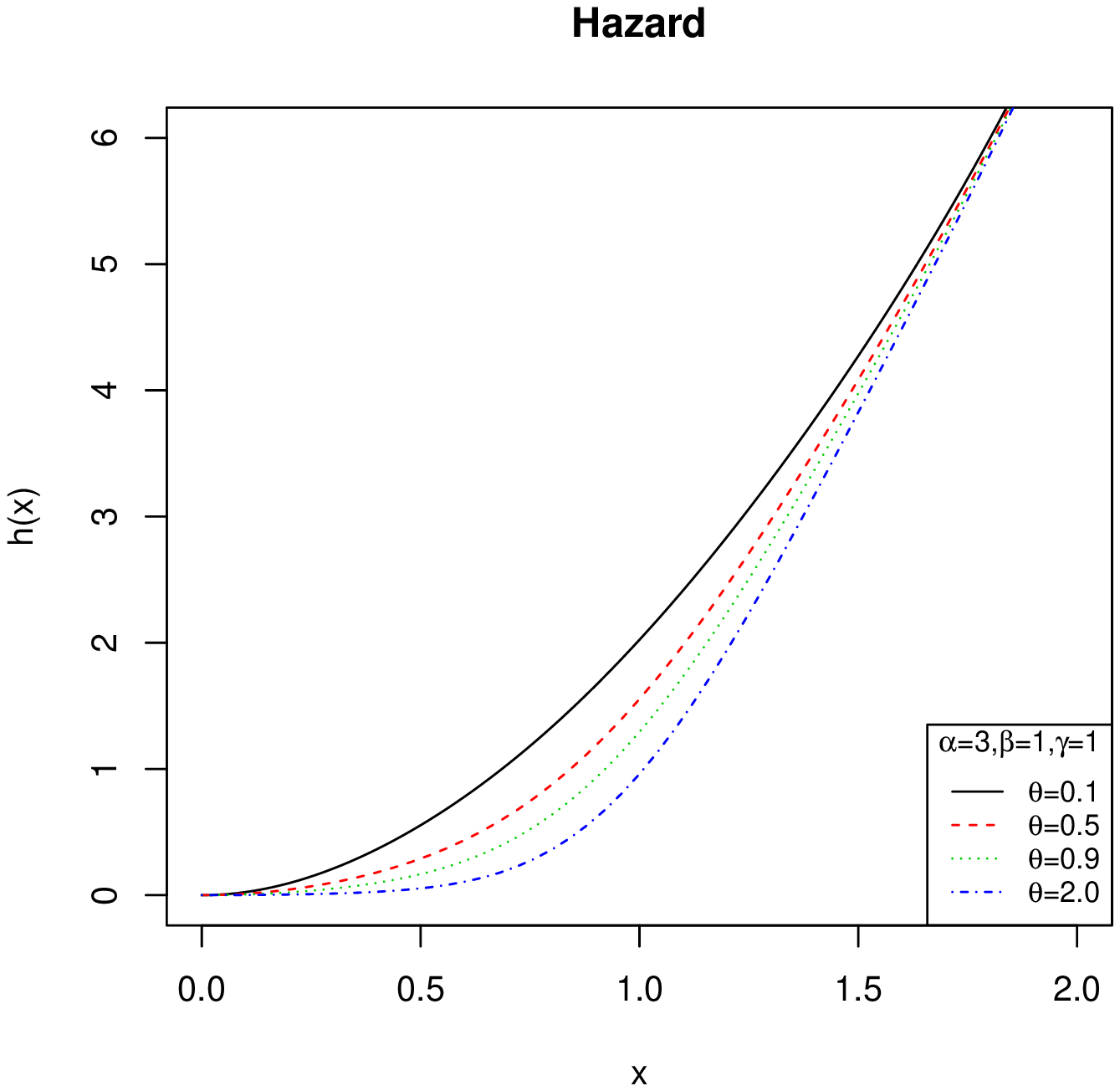}
\vspace{-0.8cm}
\caption{Plots of pdf and hazard rate function of GGB for $m=5$, and  different values $\alpha$, $\beta $, $\gamma$ and $\theta$.}\label{fig.GB}
\end{figure}

\subsection{Generalized  Gompertz-logarithmic distribution}
The logarithmic distribution (truncated at zero) is also a special case of power series distributions with $a_{n}=\frac{1}{n}$ and $C(\theta)=-\log(1-\theta) \ (0<\theta<1)$. The  pdf and hazard rate function of generalized Gompertz-logarithmic (GGL) distribution are given respectively by
\begin{eqnarray}
f(x)&=& \frac{\theta \alpha \beta e^{\gamma x}(1-t)t^{\alpha-1}}{(\theta t^{\alpha}-1)\log(1-\theta)},\ \ \ x>0,\\
\label{eq.hgl}
h(x)&=&  \frac{\theta \alpha \beta e^{\gamma x}(1-t)t^{\alpha-1}}{(\theta t^{\alpha}-1)\log(\frac{1-\theta}{1-\theta t^{\alpha}})},\ \ \ x>0.
\end{eqnarray}

The plots of pdf and hazard rate function of GGL for different values of $\alpha$, $\beta$, $\gamma$ and $\theta$  are given in Figure \ref{fig.GL}.

\begin{theorem}  Consider the GGL hazard function in (\ref{eq.hgl}). Then, for $\alpha\geq1$, the hazard function is increasing and for $0<\alpha<1$, is
decreasing and bathtub shaped.
\end{theorem}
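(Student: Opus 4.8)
The plan is to argue exactly as in the proof of Theorem~\ref{thm.hgp}, i.e.\ to apply Glaser's criterion for the shape of a failure rate to $\eta(x)=-f'(x)/f(x)$, where $f$ is the GGL density in \eqref{fGP} with $a_n=1/n$, $C(\theta)=-\log(1-\theta)$, hence $C'(u)=(1-u)^{-1}$ and $C''(u)=(1-u)^{-2}$. Recall that $\eta$ increasing on $(0,\infty)$ forces the hazard rate to be increasing, while $\eta$ bathtub-shaped forces it to be decreasing or bathtub-shaped provided $\lim_{x\to0^{+}}f(x)=\infty$; by the proposition computing $\lim_{x\to0^{+}}f(x)$, this last condition holds precisely when $0<\alpha<1$, whereas $\alpha=1$ gives $\lim_{x\to0^{+}}f(x)=\theta\beta/(-\log(1-\theta))\in(0,\infty)$ and $\alpha>1$ gives $\lim_{x\to0^{+}}f(x)=0$, both consistent with an increasing hazard rate. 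Thus it is enough to identify the monotonicity type of $\eta$.

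Writing $t=t(x)=1-e^{-\frac{\beta}{\gamma}(e^{\gamma x}-1)}$ (a strictly increasing bijection $(0,\infty)\to(0,1)$ with $t'=\beta e^{\gamma x}(1-t)$) and $\log f(x)=\mathrm{const}+\gamma x-\tfrac{\beta}{\gamma}(e^{\gamma x}-1)+(\alpha-1)\log t-\log(1-\theta t^{\alpha})$, differentiation gives $\eta(x)=\beta e^{\gamma x}-\gamma-(\alpha-1)\beta e^{\gamma x}(1-t)/t-\theta\alpha\beta e^{\gamma x}(1-t)t^{\alpha-1}/(1-\theta t^{\alpha})$. Using $\beta e^{\gamma x}=\beta-\gamma\log(1-t)=:P(t)$, a positive strictly increasing function of $t$, this takes the compact form
\[
\eta(x)=P(t)\,R(t)-\gamma,\qquad R(t)=1-(\alpha-1)\frac{1-t}{t}-\frac{\theta\alpha(1-t)t^{\alpha-1}}{1-\theta t^{\alpha}} .
\]
Since $x\mapsto t(x)$ is increasing, the monotonicity type of $\eta$ on $(0,\infty)$ coincides with that of the one-variable function $t\mapsto P(t)R(t)$ on $(0,1)$, so the whole problem reduces to a sign analysis of $(PR)'=P'R+PR'$.

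For $\alpha=1$ this is immediate: $R(t)=1-\theta(1-t)/(1-\theta t)$ is positive with $R'(t)=\theta(1-\theta)/(1-\theta t)^{2}>0$, so $PR$ is a product of positive increasing functions and $\eta$ is increasing, giving an increasing hazard rate. For $\alpha>1$ one has $R(0^{+})=-\infty$, $R(1^{-})=1$, and $PR\to-\infty$ as $t\to0^{+}$, $PR\to+\infty$ as $t\to1^{-}$; I would prove $(PR)'>0$ on all of $(0,1)$ by checking that near $t=0$ the leading term $PR'\sim\beta(\alpha-1)t^{-2}$ dominates $P'R\sim-\gamma(\alpha-1)t^{-1}$, together with a direct estimate of $(PR)'$ on each interval $[\delta,1)$. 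For $0<\alpha<1$ one has $R(0^{+})=+\infty$, $R(1^{-})=1$, so $PR\to+\infty$ at both ends of $(0,1)$ and hence possesses an interior minimum; the aim is to show that minimum is the only turning point, i.e.\ that $(PR)'$ changes sign exactly once, from $-$ to $+$. Then $\eta$ is bathtub-shaped, and because $\lim_{x\to0^{+}}f(x)=\infty$ in this range, Glaser's criterion yields that the hazard rate in \eqref{eq.hgl} is decreasing or bathtub-shaped.

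The step I expect to be the main obstacle is exactly this global control of the sign of $(PR)'$ on $(0,1)$. The delicate region is $t\to0^{+}$, where the two $\alpha$-dependent summands of $R$, namely $(1-\alpha)\frac{1-t}{t}$ and $\frac{\theta\alpha(1-t)t^{\alpha-1}}{1-\theta t^{\alpha}}$, both blow up with opposite signs, so one must compare their orders --- $t^{-1}$ against $t^{\alpha-1}$ with $-1<\alpha-1<0$ --- and then, having settled the behaviour at the endpoints, rule out any spurious oscillation of $(PR)'$ in the interior. This is the same difficulty met in the proof of Theorem~\ref{thm.hgp}: the only structural change from the Gompertz--Poisson case is that the last summand of $R$ now carries the extra factor $(1-\theta t^{\alpha})^{-1}$, which for $0<\theta<1$ is bounded, positive and increasing in $t$ (it lies in $(1,(1-\theta)^{-1})$), so it does not alter the sign pattern, and the estimates of that proof go through after the substitution $C'(u)=(1-u)^{-1}$.
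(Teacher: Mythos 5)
Your setup is sound and follows the same route as the paper's: Glaser's criterion applied to $\eta=-f'/f$, a change of variable to $t\in(0,1)$, and the value of $\lim_{x\to0^{+}}f(x)$ to separate the increasing case from the bathtub case. Your computation $\eta(x)=P(t)R(t)-\gamma$ is correct, and the $\alpha=1$ case is completely and correctly disposed of. Your formulation is in fact more careful than the paper's own Appendix A (to which the GGL proof is deferred), which works with $\tau$ of the \emph{transformed} variable $u$ and thereby suppresses the contributions of $\gamma x$, $\log(1-t)$ and the Jacobian $t'(x)=\beta e^{\gamma x}(1-t)$ --- precisely the terms that in your version become the unbounded factor $P(t)=\beta-\gamma\log(1-t)$ and the constant $-\gamma$.

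The proposal is nevertheless incomplete, and you say so yourself. The entire content of the theorem is the global sign pattern of $(PR)'$ on $(0,1)$ --- positive throughout when $\alpha>1$, and changing sign exactly once, from negative to positive, when $0<\alpha<1$ --- and in both cases you only describe what you ``would prove'' or what ``the aim is to show.'' The endpoint asymptotics you supply (e.g.\ $PR'\sim\beta(\alpha-1)t^{-2}$ dominating $P'R\sim-\gamma(\alpha-1)t^{-1}$ near $t=0$, and the limits of $R$ at $0^{+}$ and $1^{-}$) control the behaviour near the ends of $(0,1)$ but exclude nothing in the interior, and ``a direct estimate of $(PR)'$ on each interval $[\delta,1)$'' is a placeholder, not an argument. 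The device the paper uses to close exactly this gap in the GGG/GGP cases is to show that the relevant $\tau'$ is itself monotone (via $\tau''>0$), so that it can vanish at most once; you would need an analogous convexity or single-crossing argument for $t\mapsto P(t)R(t)$, and this is genuinely harder in your parametrization because $P(t)\to\infty$ as $t\to1^{-}$, so a termwise check is unavailable. As it stands, the decisive step is asserted rather than proved, and the theorem does not follow from what is written.
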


\begin{proof} The proof is omitted, since $0<\theta<1$ and therefore the proof is similar to the proof of Theorem 1.
\end{proof}

The first and second non-central moments of GGL are
\begin{eqnarray*}
&&E(X)=\frac{\alpha\beta}{-\log(1-\theta)}\sum_{n=1}^{\infty}\theta^{n}
  \sum_{k=0}^{\infty}\sum_{j=0}^{\infty}\frac{
  A_j(-1)^{j+k}e^{\frac{\beta}{\gamma}(j+1)}(\frac{\beta}{\gamma}(j+1))^{k}}{\Gamma(k+1)(\gamma+\gamma k)^{2}},\\
&&E(X^{2})= \frac{2\alpha\beta}{-\log(1-\theta)}\sum_{n=1}^{\infty}
 \theta^{n}\sum_{k=0}^{\infty}\sum_{j=0}^{\infty}\frac{
  A_j(-1)^{j+k+3}e^{\frac{\beta}{\gamma}(j+1)}(\frac{\beta}{\gamma}(j+1))^{k}}{\Gamma(k+1)(\gamma+\gamma k)^{3}}.
\end{eqnarray*}

\begin{figure}[t]
\centering
\includegraphics[scale=0.35]{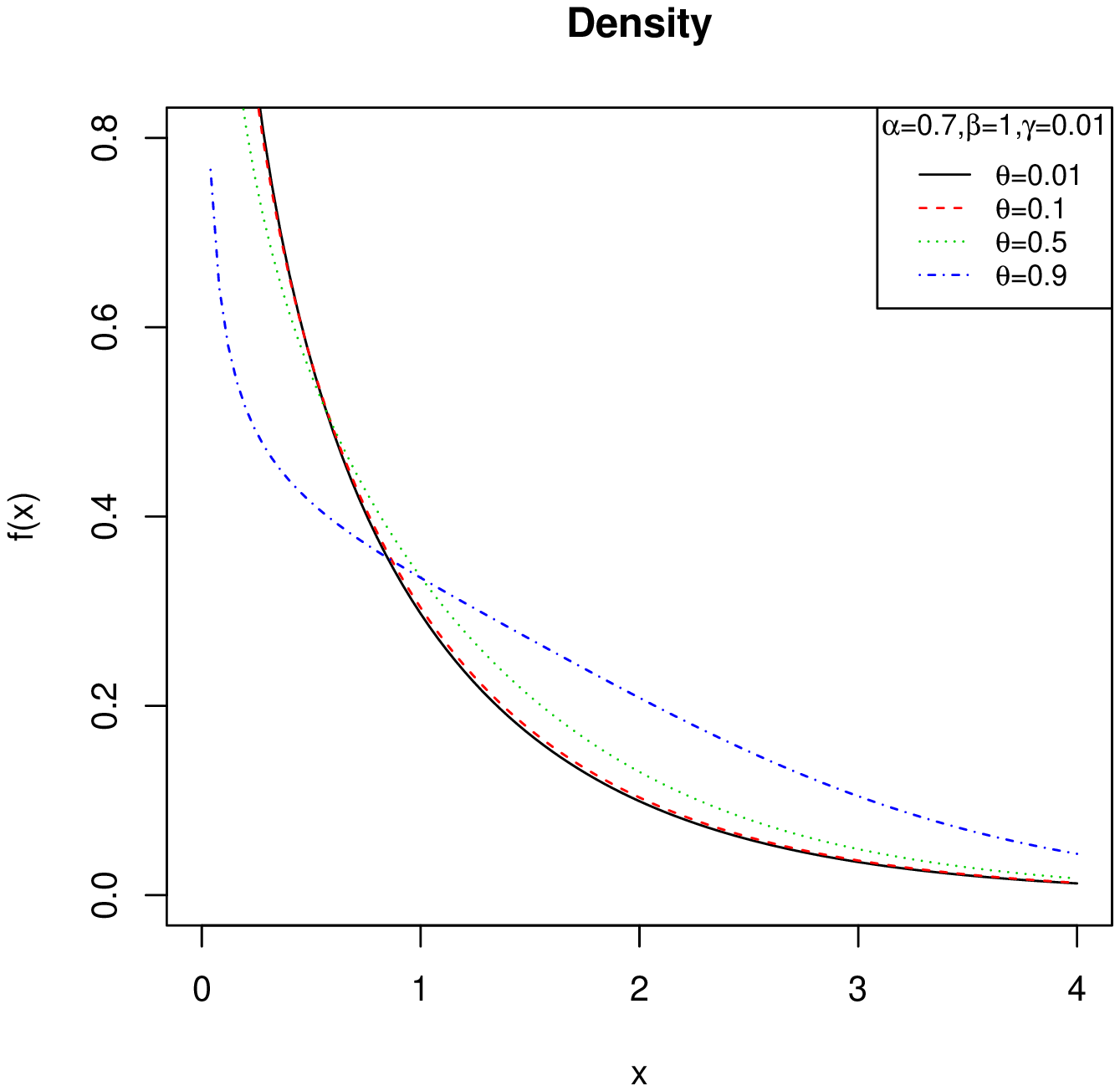}
\includegraphics[scale=0.35]{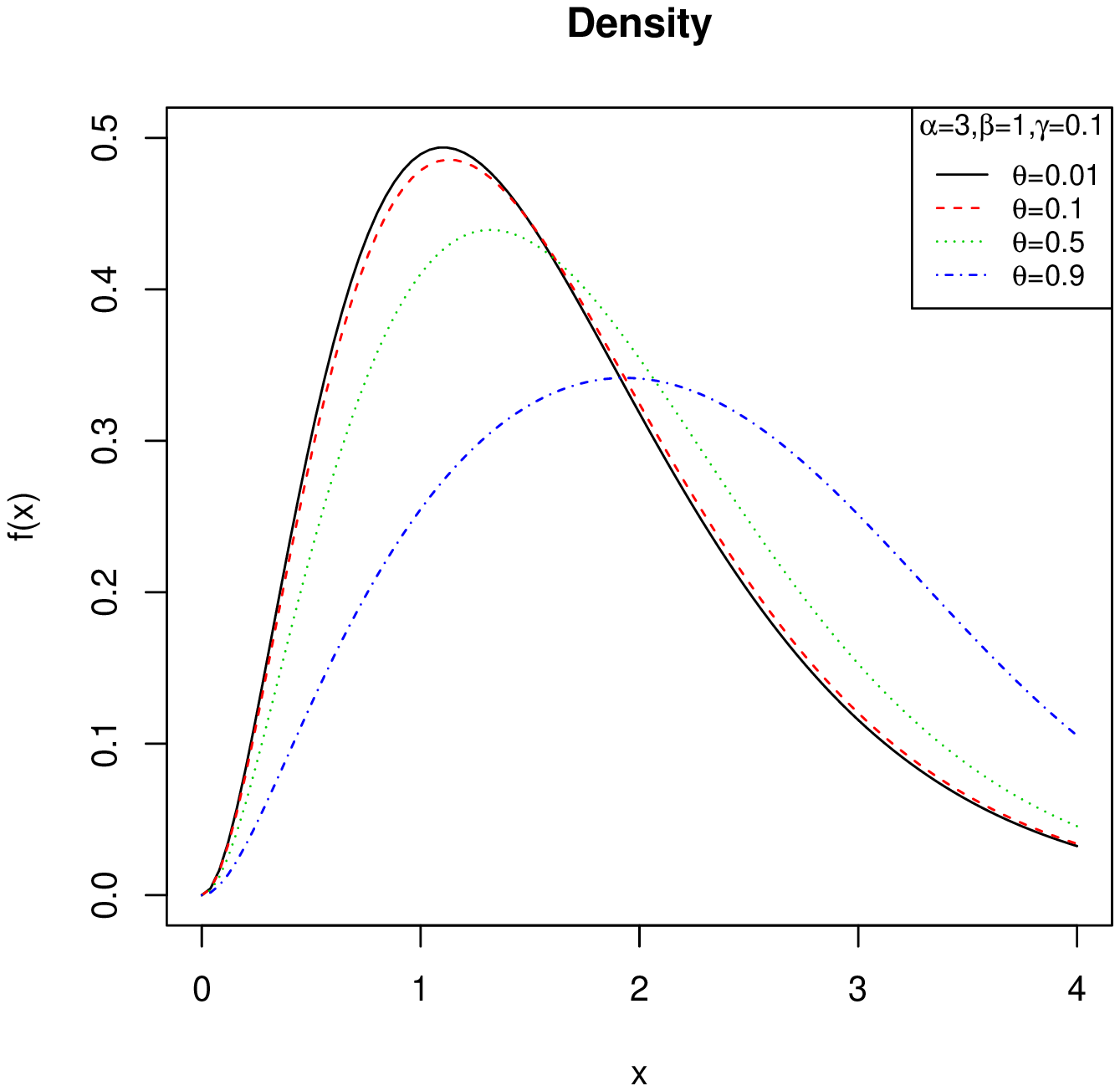}
\includegraphics[scale=0.35]{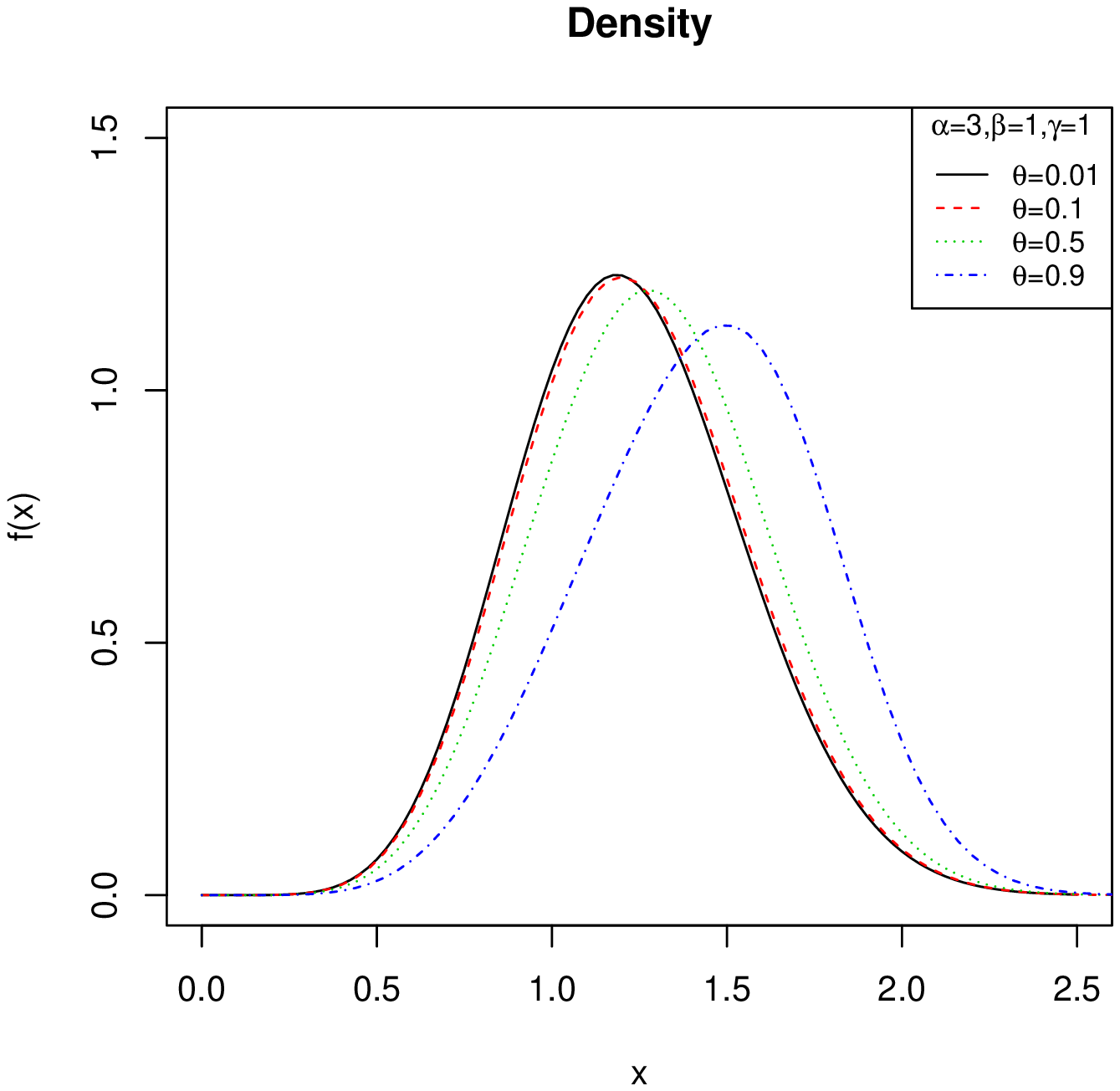}
\includegraphics[scale=0.35]{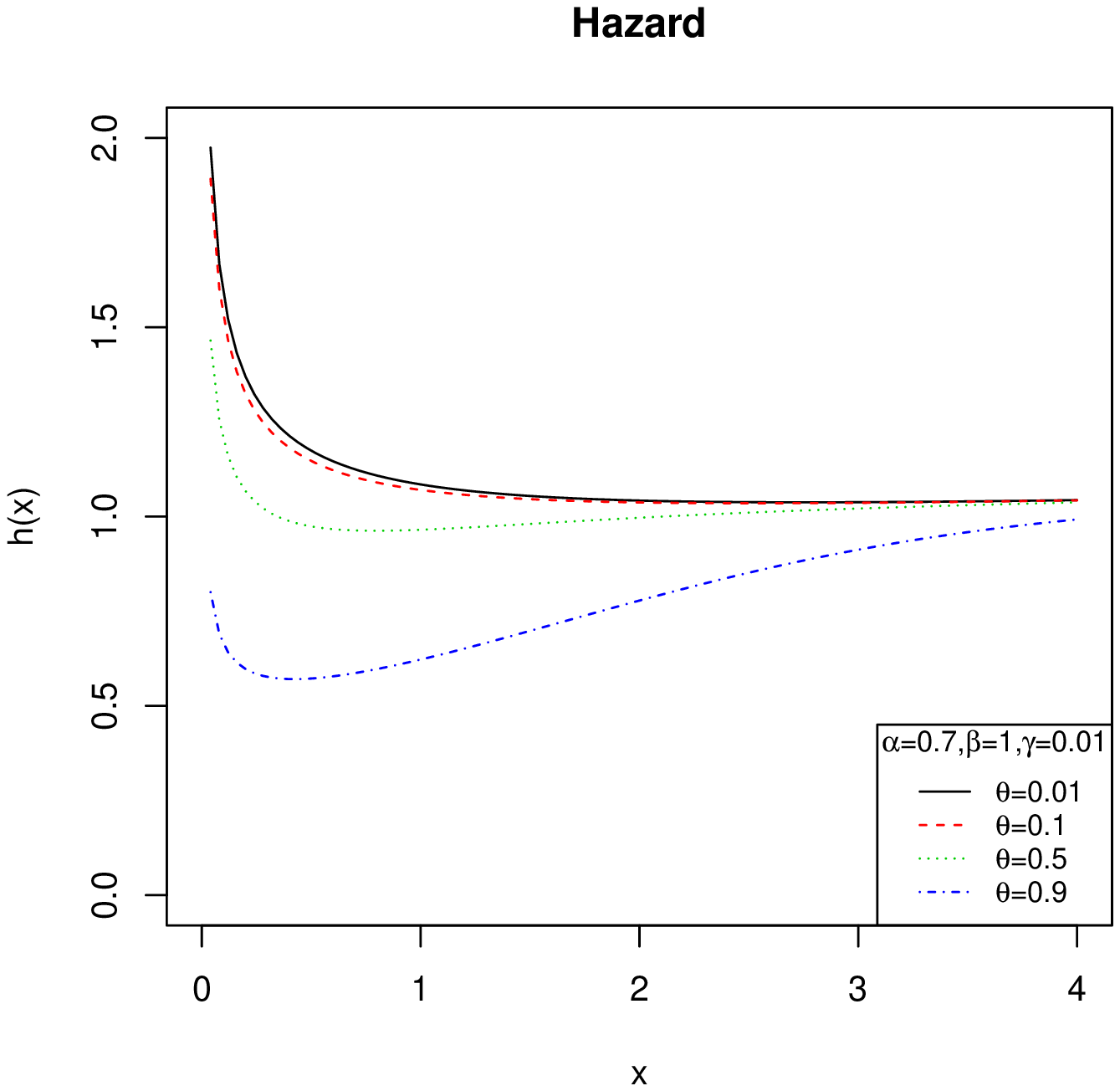}
\includegraphics[scale=0.35]{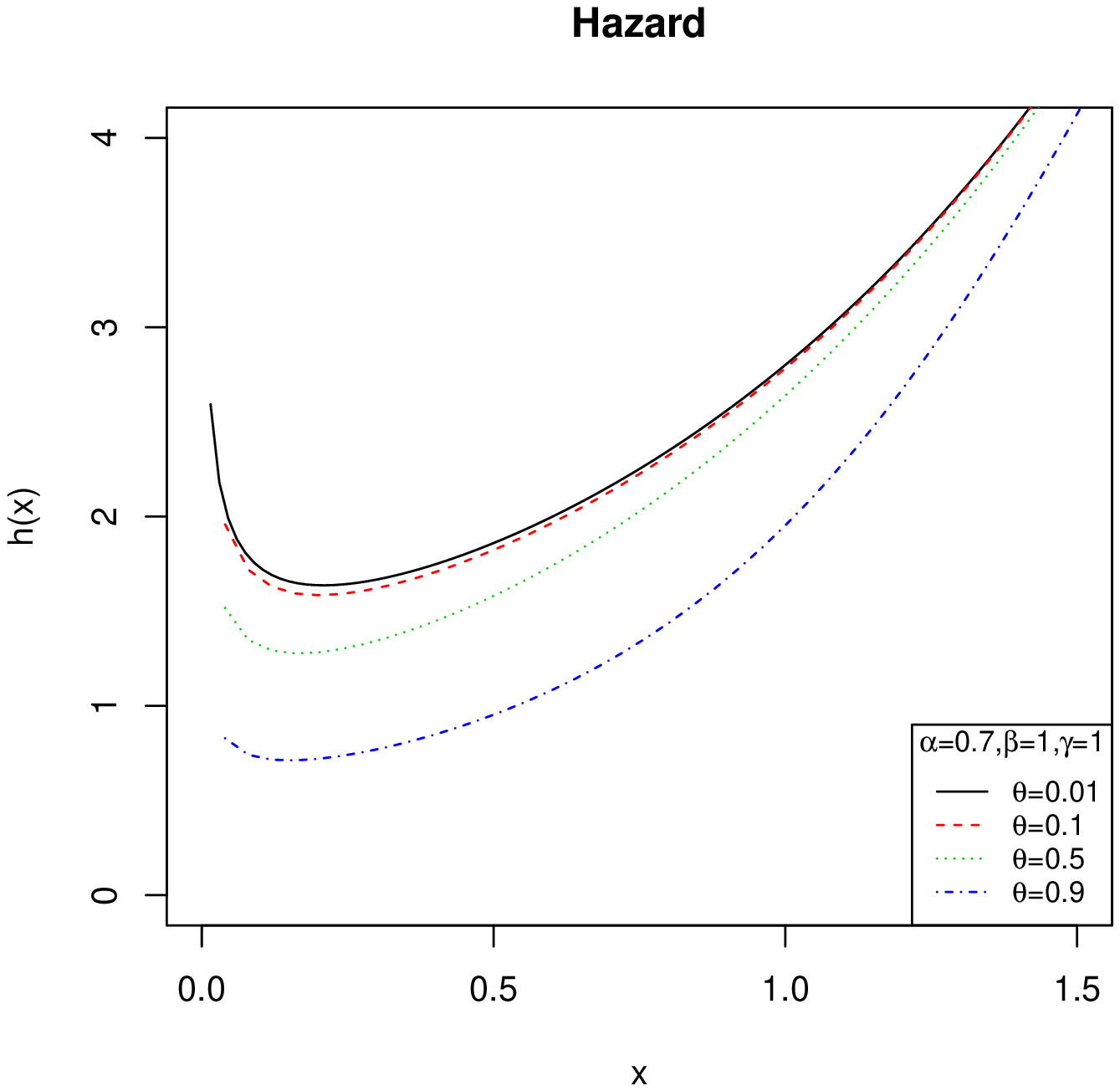}
\includegraphics[scale=0.35]{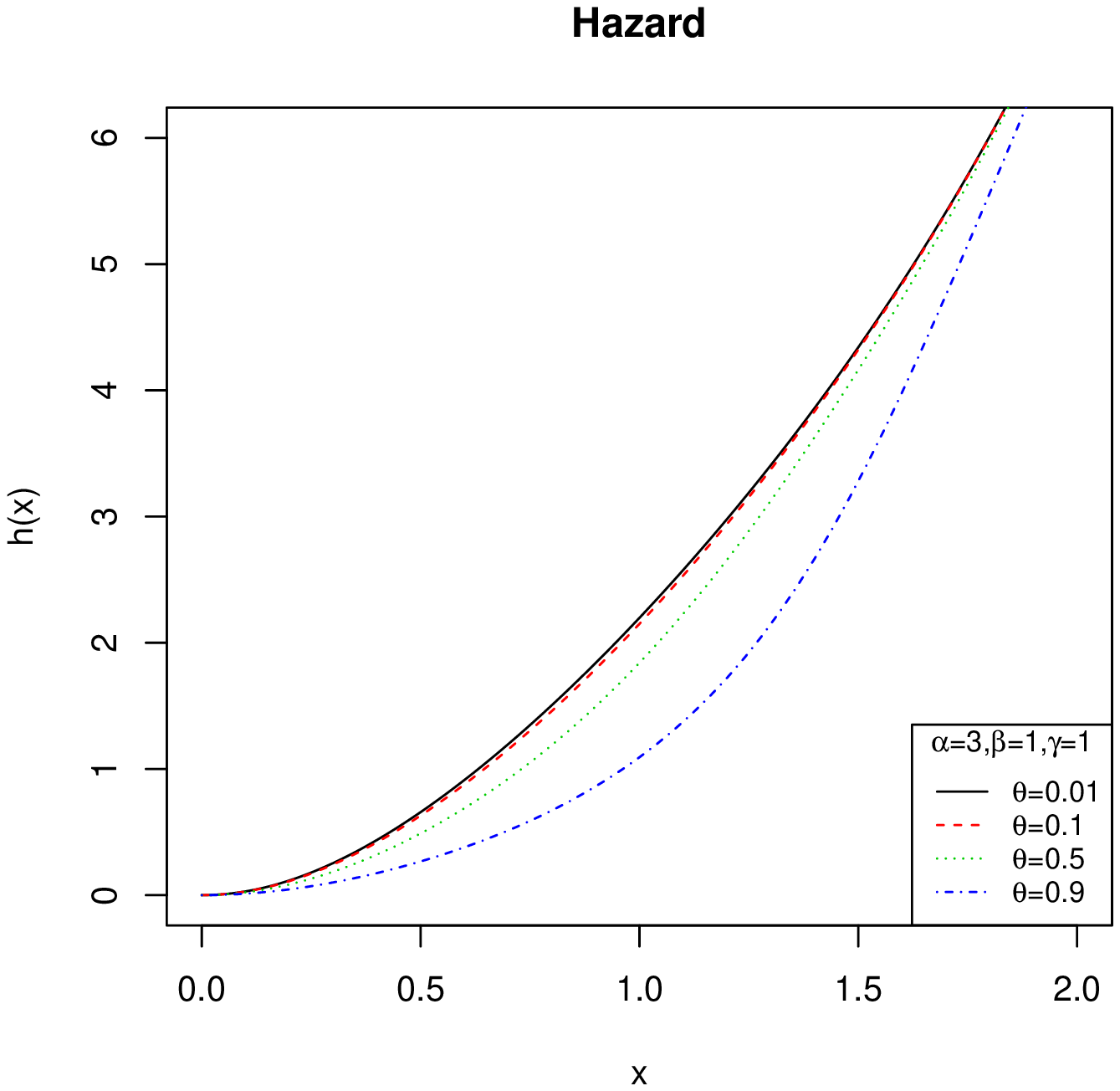}

\vspace{-0.8cm}
\caption{Plots of pdf and hazard rate function of GGL for different values $\alpha$, $\beta $, $\gamma$ and $\theta$.}\label{fig.GL}
\end{figure}

\section{Estimation and inference}
\label{sec.est}
In this section, we will derive the maximum likelihood estimators (MLE) of the unknown parameters  ${\boldsymbol \Theta}=(\alpha,\beta, \gamma, \theta)^T$ of the ${\rm GGPS}(\alpha,\beta, \gamma, \theta)$. Also, asymptotic confidence intervals of these parameters will be derived based on the Fisher information. At the end, we proposed an Expectation-Maximization (EM) algorithm for estimating the parameters.

\subsection{MLE for parameters}
 Let $X_{1},\dots, X_{n}$ be an independent random sample, with observed values  $x_{1},\dots,x_{n}$ from \break  ${\rm GGPS}(\alpha,\beta,\gamma,\theta)$ and
 $ {\boldsymbol\Theta}=(\alpha,\beta,\gamma,\theta)^{T}$  be a                                                                                                                                                                    parameter vector. The log-likelihood function is given by
 \begin{eqnarray*}
 l_{n} = l_{n}(\boldsymbol \Theta; \boldsymbol x)&=& n\log(\theta)
 +n\log(\alpha\beta)
+n\gamma \bar{x}
+ \sum_{i=1}^{n} \log(1-t_i)
+(\alpha-1)\sum_{i=1}^{n}\log(t_i)\nonumber\\
&&+ \sum_{i=1}^{n}\log(C'(\theta t_i^{\alpha})) -n\log(C(\theta)),
 \end{eqnarray*}
where $t_{i}=1-e^{-\frac{\beta}{\gamma}(e^{\gamma x_{i}}-1)}$. Therefore, the  score function is given by $U({\boldsymbol\Theta};{\boldsymbol x})=(\frac{\partial l_{n}}{\partial\alpha},\frac{\partial l_{n}}{\partial\beta},\frac{\partial l_{n}}{\partial\gamma},$ $\frac{\partial l_{n}}{\partial\theta})^{T}$, where
\begin{eqnarray}
\frac{\partial l_{n}}{\partial \alpha}&=&\frac{n}{\alpha}+\sum_{i=1}^{n}\log(t_{i})+\sum_{i=1}^{n}\frac{\theta t_{i}^{\alpha}\log(t_{i})C''(\theta t_{i}^{\alpha})}{C'(\theta t_{i}^{\alpha})}, \label{eq.la}\\
\frac{\partial l_{n}}{\partial \beta}&=&\frac{n}{\beta}-\frac{1}{\gamma}(\sum_{i=1}^{n}e^{\gamma x_{i}}-n)+(\alpha-1)\sum_{i=1}^{n}\frac{\frac{\partial t_{i}}{\partial \beta}}{t_{i}}+\sum_{i=1}^{n}\frac{\theta\frac{\partial(t_{i}^{\alpha})}{\partial \beta}C''(\theta t_{i}^{\alpha})}{C'(\theta t_{i}^{\alpha})},                                                                                                                                                          \label{eq.lb}
\\
 \frac{\partial l_{n}}{\partial \gamma}&=&n\bar{x}+\frac{\beta}{\gamma^{2}}(\sum_{i=1}^{n}e^{\gamma x_{i}}-n)-\frac{\beta}{\gamma}(\sum_{i=1}^{n}x_{i}e^{\gamma x_{i}})\nonumber\\
 &&+ (\alpha-1)\sum_{i=1}^{n}\frac{\frac{\partial t_{i}}{\partial\gamma}}{t_{i}}+\sum_{i=1}^{n}\frac{\theta\frac{\partial(t_{i}^{\alpha})}{\partial \gamma}C''(\theta t_{i}^{\alpha})}{C'(\theta t_{i}^{\alpha})},
 \label{eq.lg}\\
 \frac{\partial l_{n}}{\partial \theta}&=&\frac{n}{\theta}+\sum_{i=1}^{n}\frac{
 t_{i}^{\alpha}C''(\theta t_{i}^{\alpha})}{C'(\theta t_{i}^{\alpha})}-\frac{nC'(\theta)}{C(\theta)}.
 \label{eq.lt}
  \end{eqnarray}
The MLE of ${\boldsymbol \Theta}$, say $\hat{\boldsymbol\Theta}$, is obtained by solving the nonlinear system $U({\boldsymbol\Theta};{\boldsymbol x})={\boldsymbol 0}$. We cannot get an explicit form for this nonlinear system of equations and they can be calculated by using a numerical method, like the Newton method or the bisection method.

For each element of the power series distributions (geometric, Poisson, logarithmic and binomial), we have the following theorems for the MLE of parameters:
\begin{theorem}\label{th.lb}
Let ${\rm g}_{1}(\alpha;\beta,\gamma,\theta,x)$ denote the function on RHS of the expression in \eqref{eq.la}, where $\beta$, $\gamma$ and $\theta$ are the true values of the parameters. Then, for a given $\beta>0$, $\gamma>0$ and $\theta>0$, the roots of ${\rm g}_1(\alpha,\beta; \gamma,\theta, {\boldsymbol x}) =0$, lies in the interval
\[\left(\frac{-n}{\frac{\theta C''(\theta)}{C'(\theta)}+1}(\sum_{i=1}^{n}\log(t_i))^{-1},-n(\sum_{i=1}^{n}\log(t_i))^{-1})\right),\]
\end{theorem}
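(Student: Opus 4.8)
The plan is to work directly with the equation ${\rm g}_1(\alpha;\beta,\gamma,\theta,\boldsymbol x)=0$, isolate $n/\alpha$, and then sandwich the resulting expression between the two claimed endpoints. First I would record the signs that drive everything: for every data point $x_i>0$ one has $t_i=1-e^{-\frac{\beta}{\gamma}(e^{\gamma x_i}-1)}\in(0,1)$, so $\log t_i<0$ and $\sum_{i=1}^n\log t_i<0$; and since $C(\theta)=\sum_{n\ge 1}a_n\theta^n$ has nonnegative coefficients, $C'$ and $C''$ are positive on $(0,s)$, while the arguments $\theta t_i^{\alpha}$ all lie in $(0,\theta)\subset(0,s)$. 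At any root of ${\rm g}_1$ the expression in \eqref{eq.la} gives the identity
\[
\frac{n}{\alpha}=-\sum_{i=1}^n\log t_i-\sum_{i=1}^n\frac{\theta t_i^{\alpha}\log(t_i)\,C''(\theta t_i^{\alpha})}{C'(\theta t_i^{\alpha})}=-\sum_{i=1}^n\log t_i+\sum_{i=1}^n\psi(\theta t_i^{\alpha})\,(-\log t_i),
\]
where I abbreviate $\psi(u):=uC''(u)/C'(u)\ge 0$.

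The upper endpoint drops out at once: the last sum is a sum of nonnegative terms, so $n/\alpha\ge -\sum_{i=1}^n\log t_i>0$, which rearranges to $\alpha\le -n\bigl(\sum_{i=1}^n\log t_i\bigr)^{-1}$. For the lower endpoint the key lemma is that $\psi$ is nondecreasing on $(0,s)$. I would prove this by writing $C'(u)=\sum_{j\ge 0}c_ju^j$ with $c_j=(j+1)a_{j+1}\ge 0$, noting $uC''(u)=\sum_{j\ge 0}jc_ju^j$, so that $\psi(u)$ is the mean of $j$ under the probability weights $w_j(u)\propto c_ju^j$; the standard computation then gives $u\,\psi'(u)=\mathrm{Var}_{w(u)}(j)\ge 0$ (equivalently, a Cauchy--Schwarz inequality among $\sum j^2c_ju^j$, $\sum c_ju^j$ and $\sum jc_ju^j$). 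Since $0<\theta t_i^{\alpha}\le\theta$, monotonicity yields $\psi(\theta t_i^{\alpha})\le\psi(\theta)=\theta C''(\theta)/C'(\theta)$; plugging this into the identity above gives
\[
\frac{n}{\alpha}\le -\sum_{i=1}^n\log t_i+\frac{\theta C''(\theta)}{C'(\theta)}\Bigl(-\sum_{i=1}^n\log t_i\Bigr)=\Bigl(\frac{\theta C''(\theta)}{C'(\theta)}+1\Bigr)\Bigl(-\sum_{i=1}^n\log t_i\Bigr),
\]
i.e. $\alpha\ge \frac{-n}{\theta C''(\theta)/C'(\theta)+1}\bigl(\sum_{i=1}^n\log t_i\bigr)^{-1}$. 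Combining the two bounds places every root in the stated interval.

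To make the interval \emph{open} rather than closed I would track strictness: the upper inequality is strict whenever $\psi(\theta t_i^{\alpha})>0$ for some $i$, i.e. whenever $C$ is not linear, and the lower inequality is strict whenever $\psi$ is strictly increasing together with $t_i^{\alpha}<1$ (which always holds here); both conditions are met by every member of Table~\ref{tab.ps} in its non-degenerate range. I expect the only genuinely non-routine ingredient to be the monotonicity of $\psi(u)=uC''(u)/C'(u)$ — the fact that the elasticity of a power series with nonnegative coefficients is nondecreasing — after which the argument is pure sign bookkeeping; the minor care needed is precisely in separating the strict from the weak inequalities so that the endpoints are excluded.
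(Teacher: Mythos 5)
Your proof is correct and rests on the same sandwich as the paper's: the correction term $\sum_{i=1}^{n}\theta t_i^{\alpha}\log(t_i)\,C''(\theta t_i^{\alpha})/C'(\theta t_i^{\alpha})$ is squeezed between $\frac{\theta C''(\theta)}{C'(\theta)}\sum_{i=1}^{n}\log(t_i)$ and $0$, which is exactly the pair of bounds on ${\rm g}_1$ that the paper writes down before reading off the interval. The genuine difference is in how that lower bound is justified. The paper introduces $w_1(\alpha)=\sum_{i=1}^{n}\psi(\theta t_i^{\alpha})\log(t_i)$ with $\psi(u)=uC''(u)/C'(u)$ and verifies $\partial w_1/\partial\alpha>0$ \emph{separately} for each of the four sub-models of Table~\ref{tab.ps} (geometric, Poisson, logarithmic, binomial), then uses the limits of $w_1$ as $\alpha\to0^{+}$ and $\alpha\to\infty$; as written, its argument therefore only covers those four cases even though the theorem is stated for a general power series. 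You instead prove once and for all that $\psi$ is nondecreasing on $(0,s)$ by exhibiting $\psi(u)$ as the mean of the index $j$ under the weights proportional to $c_ju^{j}$ (with $c_j=(j+1)a_{j+1}\ge 0$) and identifying $u\psi'(u)$ with a variance; combined with $\theta t_i^{\alpha}<\theta$ this gives the bound for every member of the class, and your bookkeeping of when the inequalities are strict is what actually justifies the \emph{open} interval (which the paper glosses over). Your framing ``at a root, solve for $n/\alpha$'' is equivalent to the paper's sign conditions on ${\rm g}_1$ outside the interval; the net effect is that your route buys full generality and sharper endpoint control at the cost of one extra, standard lemma on the monotonicity of the elasticity $uC''(u)/C'(u)$.
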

\begin{proof} See Appendix B.1.
\end{proof}

\begin{theorem}\label{th.lg}
Let ${\rm g}_{2}(\beta;\alpha,\gamma,\theta,x)$  denote the function on RHS of the expression in \eqref{eq.lg}, where $\alpha$, $\gamma$ and $\theta$ are the true values of the
parameters. Then, the equation  ${\rm g}_{2}(\beta;\alpha,\gamma,\theta,x)=0$ has at least one root.
\end{theorem}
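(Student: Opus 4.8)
The plan is to prove the existence of a root by the intermediate value theorem, treating $g_2(\beta;\alpha,\gamma,\theta,\boldsymbol x)$ as a function of the single variable $\beta\in(0,\infty)$, with $\alpha,\gamma,\theta>0$ and the data $x_1,\dots,x_n>0$ held fixed at their true values. Abbreviating $w_i=(e^{\gamma x_i}-1)/\gamma>0$ so that $t_i=1-e^{-\beta w_i}$, one computes $\partial t_i/\partial\beta=w_i(1-t_i)$ and $\partial(t_i^{\alpha})/\partial\beta=\alpha t_i^{\alpha-1}w_i(1-t_i)$, whence
\[
g_2(\beta)=\frac{n}{\beta}-\frac{1}{\gamma}\sum_{i=1}^{n}(e^{\gamma x_i}-1)+(\alpha-1)\sum_{i=1}^{n}\frac{w_i(1-t_i)}{t_i}+\sum_{i=1}^{n}\frac{\theta\alpha\, t_i^{\alpha-1}w_i(1-t_i)\,C''(\theta t_i^{\alpha})}{C'(\theta t_i^{\alpha})}.
\]
For every $\beta\in(0,\infty)$ we have $t_i\in(0,1)$, hence $\theta t_i^{\alpha}\in(0,\theta)$ lies strictly inside the interval of convergence of $C$ and $C'(\theta t_i^{\alpha})>0$; therefore each summand, and so $g_2$ itself, is continuous on $(0,\infty)$. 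It remains to show that $g_2$ takes a positive value near $\beta=0$ and a negative value for large $\beta$.

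For $\beta\to0^{+}$ I would argue as follows. The second term is a finite constant, and the last sum is non-negative because $C'\ge0$ and $C''\ge0$ on $(0,s)$; the only subtlety is the interplay of $n/\beta$ with the third sum, which diverges to $-\infty$ when $0<\alpha<1$. Inserting the expansion $t_i=1-e^{-\beta w_i}=\beta w_i-\tfrac{1}{2}\beta^{2}w_i^{2}+O(\beta^{3})$ gives $w_i(1-t_i)/t_i=\beta^{-1}-\tfrac{1}{2}w_i+O(\beta)$, so that
\[
\frac{n}{\beta}+(\alpha-1)\sum_{i=1}^{n}\frac{w_i(1-t_i)}{t_i}=\frac{\alpha n}{\beta}-\frac{\alpha-1}{2}\sum_{i=1}^{n}w_i+O(\beta)\;\longrightarrow\;+\infty
\]
because $\alpha>0$. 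Adding the non-negative last sum and the finite second term, we obtain $g_2(\beta)\to+\infty$ as $\beta\to0^{+}$.

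For $\beta\to\infty$ we have $t_i\to1$ and $1-t_i=e^{-\beta w_i}\to0$ exponentially fast, so $n/\beta\to0$, $t_i^{\alpha-1}\to1$, $C''(\theta t_i^{\alpha})/C'(\theta t_i^{\alpha})\to C''(\theta)/C'(\theta)<\infty$, and the first, third, and fourth terms all tend to $0$; hence $g_2(\beta)\to-\tfrac{1}{\gamma}\sum_{i=1}^{n}(e^{\gamma x_i}-1)<0$, since $x_i>0$ and $\gamma>0$. By continuity of $g_2$ on $(0,\infty)$ together with the sign change just established, the intermediate value theorem yields at least one $\beta\in(0,\infty)$ with $g_2(\beta)=0$, which is the assertion. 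The main obstacle is precisely the $\beta\to0^{+}$ limit in the range $0<\alpha<1$: there $(\alpha-1)\sum w_i(1-t_i)/t_i$ exactly cancels the leading $n/\beta$ to first order, so the second-order term of the expansion of $1-e^{-\beta w_i}$ must be retained in order to see that the surviving $\alpha n/\beta$ still forces $g_2\to+\infty$; all the remaining limits are routine.
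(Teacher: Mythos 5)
Your argument is correct and is essentially the paper's own proof: the paper likewise evaluates $\lim_{\beta\to 0^{+}}g_2=+\infty$ and $\lim_{\beta\to\infty}g_2=-\frac{1}{\gamma}\sum_{i=1}^{n}(e^{\gamma x_i}-1)<0$ and invokes the intermediate value theorem. The only difference is that you supply the details the paper dismisses as "easily shown," in particular the second-order expansion needed to see that for $0<\alpha<1$ the divergent term $(\alpha-1)\sum_i w_i(1-t_i)/t_i$ is dominated so that $n/\beta+(\alpha-1)\sum_i w_i(1-t_i)/t_i\sim \alpha n/\beta\to+\infty$.
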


\begin{proof} See Appendix B.2.
\end{proof}

\begin{theorem}\label{th.lt}
Let ${\rm g}_{3}(\theta;\alpha,\beta,\gamma,x)$ denote the function on RHS of the expression in \eqref{eq.lt} and $\bar{x}=n^{-1}\sum_{i=1}^{n}x_{i}$, where $\alpha$, $\beta$ and $\gamma$ are the true values of the
parameters.\\
a) The equation ${\rm g}_{3}(\theta;\alpha,\beta,\gamma,x)=0$ has at least one root  for all GGG, GGP and GGL  distributions if $\sum_{i=1}^{n}t_{i}^{\alpha}>\frac{n}{2}$.\\
b) If  ${\rm g}_{3}(p;\alpha,\beta,\gamma,x)=\frac{\partial l_{n}}{\partial p}$, where $p=\frac{\theta}{\theta+1}$ and $p\in (0,1)$ then the equation ${\rm g}_{3}(p;\alpha,\beta,\gamma,x)=0$ has at least one root for GGB  distribution if $\sum_{i=1}^{n}t_{i}^{\alpha}>\frac{n}{2}$ and  $\sum_{i=1}^{n}t_{i}^{-\alpha}>\frac{nm}{m-1}$.
\end{theorem}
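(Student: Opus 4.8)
The plan is to establish both parts with the intermediate value theorem: in each case ${\rm g}_3$ (respectively the reparametrized score) is continuous on the relevant open parameter interval, so it suffices to determine the signs of its one-sided limits at the two endpoints and invoke continuity.

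For part (a) I treat the geometric, Poisson and logarithmic cases simultaneously near $\theta=0^+$. For all three $a_1=1$ and $a_2>0$, so $C(\theta)=\theta+a_2\theta^2+O(\theta^3)$, whence $C'(\theta)/C(\theta)=\theta^{-1}+a_2+O(\theta)$ and $t_i^\alpha C''(\theta t_i^\alpha)/C'(\theta t_i^\alpha)=2a_2 t_i^\alpha+O(\theta)$. Substituting into \eqref{eq.lt}, the $n/\theta$ terms cancel and $\lim_{\theta\to0^+}{\rm g}_3(\theta)=2a_2\big(\sum_{i=1}^n t_i^\alpha-\tfrac n2\big)$, which is strictly positive exactly under the hypothesis $\sum t_i^\alpha>n/2$ (the sample comes from a law supported on $x>0$, so each $t_i\in(0,1)$). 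At the other endpoint: for the geometric and logarithmic families ($\theta\to1^-$) the terms $n/\theta$ and $\sum_i t_i^\alpha C''(\theta t_i^\alpha)/C'(\theta t_i^\alpha)$ stay bounded, since $\theta t_i^\alpha<\theta<1$, while $nC'(\theta)/C(\theta)\to+\infty$, so ${\rm g}_3\to-\infty$; for the Poisson family ($\theta\to\infty$) the summand equals $t_i^\alpha$, $n/\theta\to0$ and $nC'(\theta)/C(\theta)\to n$, hence ${\rm g}_3\to\sum t_i^\alpha-n<0$. In each case ${\rm g}_3$ changes sign on the parameter interval and therefore vanishes there.

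For part (b) I first reparametrize by $p=\theta/(\theta+1)\in(0,1)$, for which $d\theta/dp=(1-p)^{-2}>0$. Substituting $C(\theta)=(1+\theta)^m-1$ and simplifying, one finds $\partial l_n/\partial p=(1-p)^{-1}G(p)$ with $G(p)=\tfrac np+\sum_{i=1}^n\frac{(m-1)t_i^\alpha}{1-p(1-t_i^\alpha)}-\frac{nm}{1-(1-p)^m}$, so (as $(1-p)^{-1}>0$) the zeros of ${\rm g}_3$ coincide with those of the continuous function $G$ on $(0,1)$. As $p\to0^+$, expanding $1-(1-p)^m=mp\big(1-\tfrac{m-1}{2}p+O(p^2)\big)$ yields $\lim_{p\to0^+}G(p)=(m-1)\big(\sum t_i^\alpha-\tfrac n2\big)>0$ under the first hypothesis (here $m\ge2$). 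As $p\to1^-$ the leading constants cancel: writing $q=1-p$ one has $\tfrac np=n+nq+O(q^2)$, $\sum_i\frac{(m-1)t_i^\alpha}{t_i^\alpha+q(1-t_i^\alpha)}=n(m-1)-(m-1)q\sum_i(t_i^{-\alpha}-1)+O(q^2)$, and $\frac{nm}{1-q^m}=nm+o(q)$, so $G(p)=q\big[n-(m-1)\sum_i(t_i^{-\alpha}-1)\big]+o(q)$, which is negative for $p$ near $1$ precisely when $\sum_i t_i^{-\alpha}>\tfrac{nm}{m-1}$. Hence under both hypotheses $G$ changes sign on $(0,1)$, so it has a root, and this root solves ${\rm g}_3(p;\alpha,\beta,\gamma,x)=0$.

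The algebraic manipulations of the limits are routine; the genuinely delicate step is the behavior of $G$ as $p\to1^-$ in part (b), where the order-zero limit of $G$ is exactly $0$, which forces one to carry the expansion in $q=1-p$ to first order — and it is precisely there that the second hypothesis $\sum t_i^{-\alpha}>nm/(m-1)$ enters. One must also keep track throughout that $t_i\in(0,1)$, which is what makes $\sum t_i^\alpha<n$ in part (a) and keeps the denominators $1-p(1-t_i^\alpha)$ and $1-(1-p)^m$ positive in part (b).
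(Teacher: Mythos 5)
Your proof is correct and follows essentially the same route as the paper's Appendix B.3: apply the intermediate value theorem to the $\theta$-score (reparametrized by $p=\theta/(\theta+1)$ for GGB) after computing its sign at the two endpoints of the parameter interval, with the two hypotheses entering exactly where they do in the paper. Your write-up is in fact slightly more careful in a few spots — the unified small-$\theta$ expansion of $C$, the correct finite limit $\sum_i t_i^{\alpha}-n$ for the Poisson case as $\theta\to\infty$ (the paper writes $-\infty$ there, which is inaccurate but harmless), and the explicit first-order expansion in $q=1-p$ that makes transparent why the second hypothesis is needed for GGB.
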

\begin{proof} See Appendix B.3.
\end{proof}

Now, we derive  asymptotic confidence intervals for the parameters of GGPS distribution. It is well-known that under regularity conditions \citep[see][Section 10]{ca-be-01}, the asymptotic distribution of
$\sqrt{n}(\hat{\boldsymbol\Theta}-{\boldsymbol\Theta})$ is multivariate normal with mean ${\boldsymbol 0}$ and variance-covariance matrix $J_n^{-1}({\boldsymbol\Theta})$, 
where $J_n({\boldsymbol\Theta})=\lim_{n\rightarrow \infty} I_n({\boldsymbol\Theta})$, and $I_n({\boldsymbol\Theta})$ is the $4\times4$  observed information matrix, i.e.
\[I_n\left(\Theta \right)=-\left[ \begin{array}{cccc}
I_{\alpha \alpha } & I_{\alpha \beta } & I_{\alpha \gamma } & I_{\alpha \theta } \\
I_{\beta \alpha } & I_{\beta \beta } &I_{\beta \gamma } &I_{\beta \theta } \\
I_{\gamma \alpha } & I_{\gamma \beta } &I_{\gamma \gamma } &I_{\gamma \theta } \\
I_{\theta \alpha } & I_{\theta \beta } & I_{\theta \gamma} & I_{\theta \theta}\\
\end{array} \right],\]
whose elements are given in Appendix C. Therefore, an $100(1-\eta)$ asymptotic confidence interval for each parameter, ${\boldsymbol\Theta}_{r}$, is given by
$$
ACI_{r}=(\hat{\boldsymbol\Theta}_{r} -Z_{\eta/2}\sqrt{\hat{I}_{rr}}, \hat{\boldsymbol\Theta}_{r}+Z_{\frac{\eta}{2}}\sqrt{\hat{I}_{rr}}),
$$
 where $ \hat{I}_{rr}$ is the $(r,r)$  diagonal element of $I_{n}^{-1}(\hat{\boldsymbol\Theta})$ for $r=1,2,3,4$ and $Z_{\eta/2}$ is the quantile
$\frac{\eta}{2}$ of the standard normal distribution.

\subsection{ EM-algorithm}



The traditional methods to obtain the MLE of parameters are numerical methods by solving the equations \eqref{eq.la}-\eqref{eq.lt}, and sensitive to the initial values. Therefore, we develop an  Expectation-Maximization (EM) algorithm to obtain the MLE of parameters. It is an iterative method, and is a very powerful tool in handling the incomplete data problem
\citep{de-la-ru-77}.

We define a hypothetical complete-data distribution with a joint pdf in the form
$$
g(x,z;{\boldsymbol \Theta})=\frac{a_{z}\theta^{z}}{C(\theta)}z\alpha \beta e^{\gamma x}(1-t)t^{z\alpha-1},
$$
where $t=1-e^{\frac{-\beta}{\gamma}(e^{\gamma x}-1)}$, and  $\alpha, \beta$, $\gamma$, $\theta>0$, $x>0$ and $z\in \mathbb{N}$. Suppose ${\boldsymbol\Theta}^{(r)}=(\alpha^{(r)},\beta^{(r)},\gamma^{(r)},$ $\theta^{(r)})$ is the current estimate (in the $r$th iteration) of ${\boldsymbol\Theta}$. Then, the E-step of an EM cycle requires the expectation of $(Z|X;{\boldsymbol\Theta}^{(r)})$. The pdf of $Z$ given $X=x$ is given by
$$
g(z|x)=\frac{a_{z}\theta^{z-1}z t^{z\alpha-\alpha}}{C'(\theta t^{\alpha})},
$$
and since
\begin{eqnarray*}
C'(\theta)+\theta C''(\theta)=\sum_{z=1}^\infty  a_z z \theta ^{z-1}+\theta \sum_{z=1}^\infty a_z z(z-1) \theta ^{z-2} =\sum_{z=1}^{\infty}z^{2}a_{z}\theta^{z-1},
\end{eqnarray*}
 the expected value of $Z|X=x$ is obtained as
\begin{eqnarray}
E(Z|X=x)=
1+\frac{\theta t^{\alpha}C''(\theta t^{\alpha})}{C'(\theta t^{\alpha})}.
\end{eqnarray}

By using the MLE over $\boldsymbol\Theta$, with the missing $Z$'s replaced by their conditional expectations given
above, the M-step of EM cycle is completed.  Therefore, the log-likelihood for the complete-data is
\begin{eqnarray} \label{eq.ls}
l^{\ast}_n(\boldsymbol y,\boldsymbol \Theta)&\propto& \sum_{i=1}^{n}z_{i}\log(\theta)
+n\log(\alpha\beta)
+n\gamma \bar{x}
+\sum_{i=1}^{n}\log(1-t_i)\nonumber\\
&&+\sum_{i=1}^{n}(z_{i}\alpha-1)\log(t_i)
-n\log(C(\theta)),
\end{eqnarray}
where $\boldsymbol y= (\boldsymbol x;\boldsymbol z)$, $\boldsymbol x=(x_1,\dots,x_n)$ and ${\boldsymbol z}=(z_1,\dots,z_n)$. On differentiation of (\ref{eq.ls}) with respect to parameters $\alpha$, $\beta$, $\gamma$ and $\theta$, we obtain the components of the score function, $U({\boldsymbol y};\boldsymbol\Theta)=(\frac{\partial l^{\ast}_{n}}{\partial\alpha},\frac{\partial l^{\ast}_{n}}{\partial\beta},\frac{\partial l^{\ast}_{n}}{\partial\gamma},\frac{\partial l^{\ast}_{n}}{\partial\theta})^T$, as
\begin{eqnarray*}
\frac{\partial l^{\ast}_{n}}{\partial\alpha}&=&\frac{n}{\alpha}+\sum_{i=1}^{n}z_{i}\log[1-e^{\frac{-\beta}{\gamma}(e^{\gamma x_{i}}-1)}],\\
\frac{\partial l^{\ast}_{n}}{\partial\beta}&=&\frac{n}{\beta}-\frac{1}{\gamma}(\sum_{i=1}^{n}e^{\gamma x_{i}}-n)+\sum_{i=1}^{n}(z_{i}\alpha-1)\frac{\frac{1}{\gamma}(e^{\gamma x_{i}}-1)}{[e^{\frac{\beta}{\gamma}(e^{\gamma x_{i}}-1)}-1]},\\
\frac{\partial l^{\ast}_{n}}{\partial\gamma}&=&n\bar{x}+\frac{\beta}{\gamma^{2}}(\sum_{i=1}^{n}e^{\gamma x_{i}}-n)-\frac{\beta}{\gamma}(\sum_{i=1}^{n} x_{i}e^{\gamma x_{i}})+\sum_{i=1}^{n}(z_{i}\alpha-1)\frac{\frac{-\beta}{\gamma^{2}}(e^{\gamma x_{i}}-1)+\frac{\beta x_{i}e^{\gamma x_{i}}}{\gamma}}{[e^{\frac{\beta}{\gamma}(e^{\gamma x_{i}}-1)}-1]},\\
\frac{\partial l^{\ast}_{n}}{\partial\theta}&=&\sum_{i=1}^{n}\frac{z_{i}}{\theta}-n\frac{C'(\theta)}{C(\theta)}.
\end{eqnarray*}
From a nonlinear system of equations $U(\boldsymbol y;\boldsymbol \Theta)=0$, we obtain the iterative procedure of the EM-algorithm as
\begin{eqnarray*}
&&\hat{\alpha}^{(j+1)}=\frac{-n}{\sum_{i=1}^{n}\hat{z_{i}}^{(j)}\log[1-e^{\frac{-\hat{\beta}^{(j)}}{\hat{\gamma}^{(j)}}(e^{\hat{\gamma }^{(j)} x_{i}}-1)}]}, \qquad \hat{\theta}^{(j+1)}-\frac{C(\hat{\theta}^{(j+1)})}{nC'(\hat{\theta}^{(j+1)})}\sum_{i=1}^{n}\hat{z}_{i}^{(j)}=0,\\
&&\frac{n}{\hat{\beta}^{(j+1)}}-\frac{1}{\hat{\gamma}^{(j)}}(\sum_{i=1}^{n}e^{\hat{\gamma}^{(j)} x_{i}}-n)+\sum_{i=1}^{n}({\hat z}_{i}\hat{\alpha}^{(j)}-1)\frac{\frac{1}{\hat{\gamma}^{(j)}}(e^{\hat{\gamma}^{(j)} x_{i}}-1)}{[e^{\frac{\hat{\beta}^{(j+1)}}{\hat{\gamma}^{(j)}}(e^{\hat{\gamma}^{(j)}x_{i}}-1)}-1]}=0,
\\
&&n\bar{x}+\frac{\hat{\beta}^{(j)}}{[\hat{\gamma}^{(j+1)}]^{2}}(\sum_{i=1}^{n}e^{\hat{\gamma}^{(j+1)} x_{i}}-n)-\frac{\hat{\beta}^{(j)}}{\hat{\gamma}^{(j+1)}}(\sum_{i=1}^{n} x_{i}e^{\hat{\gamma}^{(j+1)} x_{i}})
\\
&&\quad \ \ +\sum_{i=1}^{n}({\hat z}_{i}\hat{\alpha}^{(j)}-1)\frac{\frac{-\hat{\beta}^{(j)}}{[\hat{\gamma}^{(j+1)}]^{2}}(e^{\hat{\gamma}^{(j+1)} x_{i}}-1)+\frac{\hat{\beta}^{(j)} x_{i}e^{\hat{\gamma}^{(j+1)} x_{i}}}{\hat{\gamma}^{(j+1)}}}{[e^{\frac{\hat{\beta}^{(j)}}{\hat{\gamma}^{(j+1)}}(e^{\hat{\gamma}^{(j+1)} x_{i}}-1)}-1]}=0,
\end{eqnarray*}
where $\hat{\theta}^{(j+1)}$, $\hat{\beta}^{(j+1)}$ and $\hat{\gamma}^{(j+1)}$ are found numerically. Here, for $i=1,2,\dots,n$, we have that
$$
\hat{z}_{i}^{(j)}=1+\frac{\theta^{*(j)}C''(\theta^{*(j)})}{C'(\theta^{*(j)})},
$$
where
$\theta^{*(j)}=\hat{\theta}^{(j)}[1-e^{-\frac{\hat{\beta}^{(j)}}{\hat{\gamma}^{(j)}}(e^{\hat{\gamma}^{(j)} x_{i}}-1)}]^{\hat{\alpha}^{(j)}}$.

%

We can use the results of
\cite{lou-82}
to obtain the standard errors of the estimators from the EM-algorithm. Consider
$l_{c}({\boldsymbol\Theta};{\boldsymbol x})=E(I_{c}({\boldsymbol\Theta};{\boldsymbol y})|{\boldsymbol x})$,
 where
$I_{c}({\boldsymbol\Theta};{\boldsymbol y})=-[\frac{\partial U({\boldsymbol y};{\boldsymbol\Theta})}{\partial{\boldsymbol\Theta}}]$
is the $4\times 4$ observed information matrix.If
$
l_{m}({\boldsymbol\Theta};{\boldsymbol x})=Var[U({\boldsymbol y};{\boldsymbol\Theta})|{\boldsymbol x}]$, then, we obtain the observed information as
$$
I(\hat{{\boldsymbol\Theta}};{\boldsymbol x})=l_{c}(\hat{\boldsymbol\Theta};{\boldsymbol x})-l_{m}(\hat{\boldsymbol\Theta};{\boldsymbol x}).
$$
The standard errors of the MLEs of the EM-algorithm are the square root of the diagonal elements of the $I(\hat{\boldsymbol\Theta};{\boldsymbol x})$. The computation of these matrices are too long and tedious. Therefore, we did not present the details. Reader can see \cite{ma-ja-12} how to calculate these values.

\section{Simulation study}

We performed a simulation  in order to investigate the proposed estimator of
$\alpha$, $\beta$, $\gamma$ and $\theta$ of the proposed EM-scheme. We generated 1000 samples of size $n$
from the GGG distribution with $\beta=1$ and $\gamma =0.1$. Then,
the averages of estimators (AE), standard error of estimators (SEE), and averages of standard errors (ASE) of MLEs of the EM-algorithm determined though the Fisher information matrix are calculated. The results are given in Table \ref{tab.sim}. We can find that

\noindent (i) convergence has been achieved in all cases and this emphasizes the numerical stability of the EM-algorithm,

\noindent (ii) the differences between the average estimates and the true values are almost small,

\noindent (iii) the standard errors of the MLEs decrease when the sample size increases.

\label{sec.sim}

\begin{table}[ht]
\begin{center}

\caption{The average MLEs, standard error of estimators and averages of standard errors for the GGG distribution.} \label{tab.sim}

\begin{tabular}{|c|cc|cccc|cccc|cccc|} \hline
 & \multicolumn{2}{|c|}{parameter} & \multicolumn{4}{|c|}{AE} & \multicolumn{4}{|c|}{SEE} & \multicolumn{4}{|c|}{ASE} \\ \hline
$n$ & $\alpha $ & $\theta $ & $\hat\alpha $ & $\hat\beta $ & $\hat\gamma $ & $\hat\theta $ & $\hat\alpha $ & $\hat\beta $ & $\hat\gamma $ & $\hat\theta $ & $\hat\alpha $ & $\hat\beta $ & $\hat\gamma $ & $\hat\theta $ \\ \hline
50 & 0.5 & 0.2 & 0.491 & 0.961 & 0.149 & 0.204 & 0.114 & 0.338 & 0.265 & 0.195 & 0.173 & 0.731 & 0.437 & 0.782 \\
 & 0.5 & 0.5 & 0.540 & 0.831 & 0.182 & 0.389 & 0.160 & 0.337 & 0.260 & 0.263 & 0.210 & 0.689 & 0.421 & 0.817 \\
 & 0.5 & 0.8 & 0.652 & 0.735 & 0.154 & 0.684 & 0.304 & 0.377 & 0.273 & 0.335 & 0.309 & 0.671 & 0.422 & 0.896 \\
 & 1.0 & 0.2 & 0.988 & 0.972 & 0.129 & 0.206 & 0.275 & 0.319 & 0.191 & 0.209 & 0.356 & 0.925 & 0.436 & 0.939 \\
 & 1.0 & 0.5 & 1.027 & 0.852 & 0.147 & 0.402 & 0.345 & 0.352 & 0.226 & 0.283 & 0.408 & 0.873 & 0.430 & 0.902 \\
 & 1.0 & 0.8 & 1.210 & 0.711 & 0.178 & 0.745 & 0.553 & 0.365 & 0.230 & 0.342 & 0.568 & 0.799 & 0.433 & 0.898 \\
 & 2.0 & 0.2 & 1.969 & 0.990 & 0.084 & 0.216 & 0.545 & 0.305 & 0.151 & 0.228 & 0.766 & 1.135 & 0.422 & 0.902 \\
 & 2.0 & 0.5 & 1.957 & 0.842 & 0.113 & 0.487 & 0.608 & 0.334 & 0.192 & 0.277 & 0.820 & 1.061 & 0.431 & 0.963 \\
 & 2.0 & 0.8 & 2.024 & 0.713 & 0.161 & 0.756 & 0.715 & 0.396 & 0.202 & 0.353 & 1.143 & 0.873 & 0.402 & 0.973 \\ \hline

100 & 0.5 & 0.2 & 0.491 & 0.977 & 0.081 & 0.212 & 0.084 & 0.252 & 0.171 & 0.179 & 0.125 & 0.514 & 0.283 & 0.561 \\
 & 0.5 & 0.5 & 0.528 & 0.883 & 0.109 & 0.549 & 0.124 & 0.275 & 0.178 & 0.247 & 0.155 & 0.504 & 0.275 & 0.567 \\
 & 0.5 & 0.8 & 0.602 & 0.793 & 0.136 & 0.769 & 0.215 & 0.323 & 0.194 & 0.299 & 0.220 & 0.466 & 0.259 & 0.522 \\
 & 1.0 & 0.2 & 0.974 & 0.997 & 0.102 & 0.226 & 0.195 & 0.242 & 0.129 & 0.206 & 0.251 & 0.645 & 0.280 & 0.767 \\
 & 1.0 & 0.5 & 1.030 & 0.875 & 0.113 & 0.517 & 0.262 & 0.291 & 0.155 & 0.270 & 0.298 & 0.651 & 0.295 & 0.843 \\
 & 1.0 & 0.8 & 1.113 & 0.899 & 0.117 & 0.846 & 0.412 & 0.342 & 0.177 & 0.331 & 0.400 & 0.600 & 0.287 & 0.781 \\
 & 2.0 & 0.2 & 1.952 & 0.995 & 0.138 & 0.221 & 0.424 & 0.237 & 0.117 & 0.209 & 0.524 & 0.922 & 0.321 & 0.992 \\
 & 2.0 & 0.5 & 2.004 & 0.885 & 0.110 & 0.518 & 0.493 & 0.283 & 0.131 & 0.274 & 0.601 & 0.873 & 0.321 & 0.966 \\
 & 2.0 & 0.8 & 2.028 & 0.981 & 0.104 & 0.819 & 0.605 & 0.350 & 0.155 & 0.339 & 0.816 & 0.717 & 0.289 & 0.946 \\ \hline
\end{tabular}

\end{center}
\end{table}

\section{Real examples}
\label{sec.exa}
In this Section, we consider two real data sets and  fit the Gompertz, GGG, GGP, GGB (with $m=5$), and GGL distributions. The first data set is negatively skewed,  and the second data set is positively skewed, and we show that the proposed distributions  fit both positively skewed and  negatively skewed data well. For each data, the MLE of parameters (with standard deviations) for the distributions are obtained. To test the goodness-of-fit of the distributions, we calculated the maximized log-likelihood, the Kolmogorov-Smirnov (K-S) statistic with its respective p-value, the AIC (Akaike Information Criterion), AICC (AIC with correction), BIC (Bayesian Information Criterion), CM (Cramer-von Mises statistic) and AD (Anderson-Darling statistic) for the six distributions. Here, the significance level is 0.10. To show that the likelihood equations have a unique solution in the parameters,
we plot the profile log-likelihood functions of $\beta$, $\gamma$, $\alpha$ and $\theta$ for the six distributions.

First, we consider the data  consisting of the strengths of 1.5 cm glass fibers  given in
\cite{sm-na-87}
 and measured at the National Physical Laboratory, England. This data is also studied by \cite{ba-sa-co-10}
and is given in Table \ref{tab.data1}.

The results are given in Table \ref{table.EX1} and show that the GGG distribution yields the best fit among the GGP, GGB, GGL, GG and Gompertz distributions. Also, the GGG, GGP, and GGB distribution are better than GG distribution. The plots of the pdfs (together with the data histogram)  and cdfs in Figure \ref{plot.EX} confirm this conclusion. Figures \ref{plot.pro1}  show the profile log-likelihood functions of $\beta$, $\gamma$, $\alpha$ and $\theta$ for the six distributions.

\begin{table}
\begin{center}
\caption{The strengths of glass fibers.}\label{tab.data1}
\begin{tabular}{l} \hline
            0.55, 0.93, 1.25, 1.36, 1.49, 1.52, 1.58, 1.61, 1.64, 1.68, 1.73, 1.81, 2.00, 0.74, 1.04, 1.27,\\

            1.39, 1.49, 1.53, 1.59, 1.61, 1.66, 1.68, 1.76, 1.82, 2.01, 0.77, 1.11, 1.28, 1.42, 1.50, 1.54,\\

            1.60, 1.62, 1.66, 1.69, 1.76, 1.84, 2.24, 0.81, 1.13, 1.29, 1.48, 1.50, 1.55, 1.61, 1.62, 1.66,\\

            1.70, 1.77, 1.84, 0.84, 1.24, 1.30, 1.48, 1.51, 1.55, 1.61, 1.63, 1.67, 1.70, 1.78, 1.89 \\ \hline
\end{tabular}

\caption{The phosphorus concentration in the leaves.}\label{tab.data2}
\begin{tabular}{l} \hline
0.22, 0.17, 0.11,  0.10, 0.15, 0.06, 0.05, 0.07, 0.12, 0.09, 0.23, 0.25, 0.23, 0.24, 0.20, 0.08\\

0.11, 0.12, 0.10,  0.06, 0.20, 0.17, 0.20, 0.11, 0.16, 0.09, 0.10, 0.12, 0.12, 0.10, 0.09, 0.17\\

0.19, 0.21, 0.18, 0.26, 0.19,  0.17, 0.18, 0.20, 0.24, 0.19, 0.21, 0.22, 0.17, 0.08, 0.08, 0.06\\

0.09, 0.22, 0.23, 0.22, 0.19,  0.27, 0.16, 0.28, 0.11, 0.10, 0.20, 0.12, 0.15, 0.08, 0.12, 0.09\\

0.14, 0.07, 0.09, 0.05, 0.06,  0.11, 0.16, 0.20, 0.25, 0.16, 0.13, 0.11, 0.11, 0.11, 0.08, 0.22\\

0.11, 0.13, 0.12, 0.15, 0.12,  0.11, 0.11, 0.15, 0.10, 0.15, 0.17, 0.14, 0.12, 0.18, 0.14, 0.18\\

0.13, 0.12, 0.14, 0.09, 0.10,  0.13, 0.09, 0.11, 0.11, 0.14, 0.07, 0.07, 0.19, 0.17, 0.18, 0.16\\

0.19, 0.15, 0.07, 0.09, 0.17,  0.10, 0.08, 0.15, 0.21, 0.16, 0.08, 0.10, 0.06, 0.08, 0.12, 0.13\\ \hline
\end{tabular}

\end{center}
\end{table}

As a second example, we consider a data set from
\cite{fo-fr-07},
 who studied the soil fertility influence and the characterization
of the biologic fixation of ${\text N}_{2}$ for the {\it Dimorphandra wilsonii rizz growth}. For 128 plants, they made measures of the
phosphorus concentration in the leaves. This data is also studied by \cite{si-bo-di-co-13} and is given in Table \ref{tab.data2}.
Figures \ref{plot.pro2}  show the profile log-likelihood functions of $\beta$, $\gamma$, $\alpha$ and $\theta$ for the six distributions.

The results are given in Table \ref{table.EX2}. Since the estimation of parameter $\theta$ for GGP, GGB, and GGL is close to zero, the estimations of parameters for these distributions are equal to  the estimations of parameters for GG distribution. In fact, The limiting distribution of GGPS when $\theta\rightarrow 0^{+}$ is a GG distribution (see Proposition \ref{prop.1}). Therefore, the value of maximized log-likelihood, $\log(L)$, are equal for these four distributions.   The plots of the pdfs (together with the data histogram)  and cdfs in Figure \ref{plot.EX2} confirm these conclusions. Note that the estimations of parameters for GGG distribution are not equal to  the estimations of parameters for GG distribution. But the $\log(L)$'s are equal for these distributions. However, from Table \ref{table.EX2} also we can conclude that the GG distribution is simpler than other distribution because it has three parameter but GGG, GGP, GGB, and GGL have four parameter. Note that GG is a special case of GGPS family.

\begin{table}
\begin{center}
\caption{Parameter estimates (with std.), K-S statistic,
\textit{p}-value, AIC, AICC and BIC for the first data set.}\label{table.EX1}

\begin{tabular}{|l|cccccc|} \hline
Distribution & Gompertz & GG & GGG & GGP & GGB & GGL \\ \hline
$\hat{\beta }$  & 0.0088  & 0.0356  & 0.7320  & 0.1404  & 0.1032  & 0.1705  \\
$s.e.(\hat{\beta })$ & 0.0043 & 0.0402 & 0.2484 & 0.1368 & 0.1039 & 0.2571 \\ \hline
$\hat{\gamma }$ & 3.6474  & 2.8834 & 1.3499 & 2.1928 & 2.3489 & 2.1502 \\
$s.e.(\hat{\gamma })$ & 0.2992 & 0.6346 & 0.3290 & 0.5867 & 0.6010 & 0.7667 \\ \hline
$\hat{\alpha }$ & --- & 1.6059 & 2.1853 & 1.6205 & 1.5999 & 2.2177 \\
$s.e.(\hat{\alpha })$ & --- & 0.6540 & 1.2470 & 0.9998 & 0.9081 & 1.3905 \\ \hline
$\hat{\theta }$ & --- & --- & 0.9546 & 2.6078 & 0.6558 & 0.8890 \\
$s.e.(\hat{\theta })$ & --- & --- & 0.0556 & 1.6313 & 0.5689 & 0.2467 \\ \hline
$-{\log  (L)\ }$ & 14.8081 & 14.1452 & 12.0529 & 13.0486 & 13.2670 & 13.6398 \\
K-S & 0.1268 & 0.1318 & 0.0993 & 0.1131 & 0.1167 & 0.1353 \\
p-value & 0.2636 & 0.2239 & 0.5629 & 0.3961 & 0.3570 & 0.1992 \\
AIC & 33.6162 & 34.2904 & 32.1059 & 34.0971 & 34.5340 & 35.2796 \\
AICC & 33.8162 & 34.6972 & 32.7956 & 34.78678 & 35.2236 & 35.9692 \\
BIC & 37.9025 & 40.7198 & 40.6784 & 42.6696 & 43.1065 & 43.8521 \\
CM & 0.1616 & 0.1564 & 0.0792 & 0.1088 & 0.1172 & 0.1542 \\
AD & 0.9062 & 0.8864 & 0.5103 & 0.6605 & 0.7012 & 0.8331 \\ \hline

\end{tabular}
\end{center}
%
\begin{center}
\caption{Parameter estimates (with std.), K-S statistic,
\textit{p}-value, AIC, AICC and BIC for the second data set.}\label{table.EX2}

\begin{tabular}{|l|cccccc|} \hline
Distribution & Gompertz & GG & GGG & GGP & GGB & GGL \\ \hline
 $\hat{\beta }$  & 1.3231 & 13.3618 & 10.8956 & 13.3618 & 13.3618 & 13.3618 \\
$s.e.(\hat{\beta })$  & 0.2797 & 4.5733 & 8.4255 & 5.8585 & 6.3389 & 7.3125 \\ \hline
 $\hat{\gamma }$  & 15.3586 & 3.1500 & 4.0158 & 3.1500 & 3.1500 & 3.1500 \\
$s.e.(\hat{\gamma })$  & 1.3642 & 2.1865 & 3.6448 & 2.4884 & 2.6095 & 2.5024 \\ \hline
 $\hat{\alpha }$  &  ---  & 6.0906 & 5.4236 & 6.0906 & 6.0906 & 6.0905 \\
$s.e.(\hat{\alpha })$  &  ---  & 2.4312 & 2.8804 & 2.6246 & 2.7055 & 2.8251 \\ \hline
 $\hat{\theta }$  &  ---  &  ---  & -0.3429 & $1.0\times {10}^{-8}$ & $1.0\times {10}^{-8}$ & $1.0\times {10}^{-8}$ \\
$s.e.(\hat{\theta })$  &  ---  &  ---  & 1.2797 & 0.8151 & 0.2441 & 0.6333 \\ \hline
 $-\log (L)\; $  & -184.5971 & -197.1326 & -197.1811 & -197.1326 & -197.1326 & -197.1326 \\
K-S  & 0.1169 & 0.0923 & 0.0898 & 0.0923 & 0.0923 & 0.0923 \\
p-value  & 0.06022 & 0.2259 & 0.2523 & 0.2259 & 0.2259 & 0.2259 \\
AIC  & -365.1943 & -388.2653 & -386.3623 & -386.2653 & -386.2653 & -386.2653 \\
AICC  & -365.0983 & -388.0717 & -386.0371 & -385.9401 & -385.9401 & -385.9401 \\
BIC  & -359.4902 & -379.7092 & -374.9542 & -374.8571 & -374.8571 & -374.8571 \\
CM & 0.3343 & 0.1379 & 0.1356 & 0.1379 & 0.1379 & 0.1379 \\
AD & 2.3291 & 0.7730 & 0.7646 & 0.7730 & 0.7730 & 0.7730 \\ \hline

\end{tabular}

\end{center}
\end{table}

\begin{figure}[th]
\centering
\includegraphics[width=7.75cm,height=7.75cm]{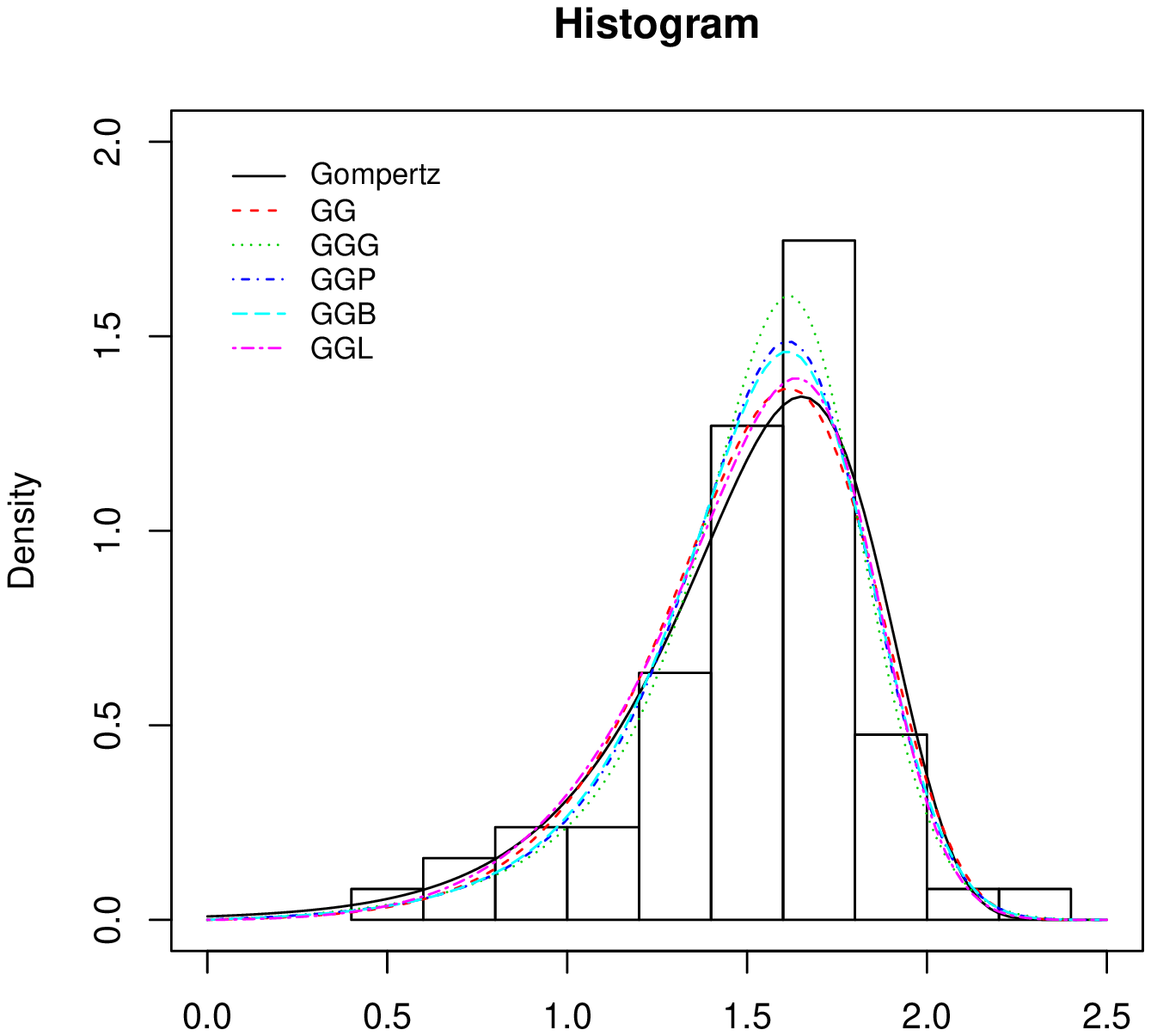}
\includegraphics[width=7.75cm,height=7.75cm]{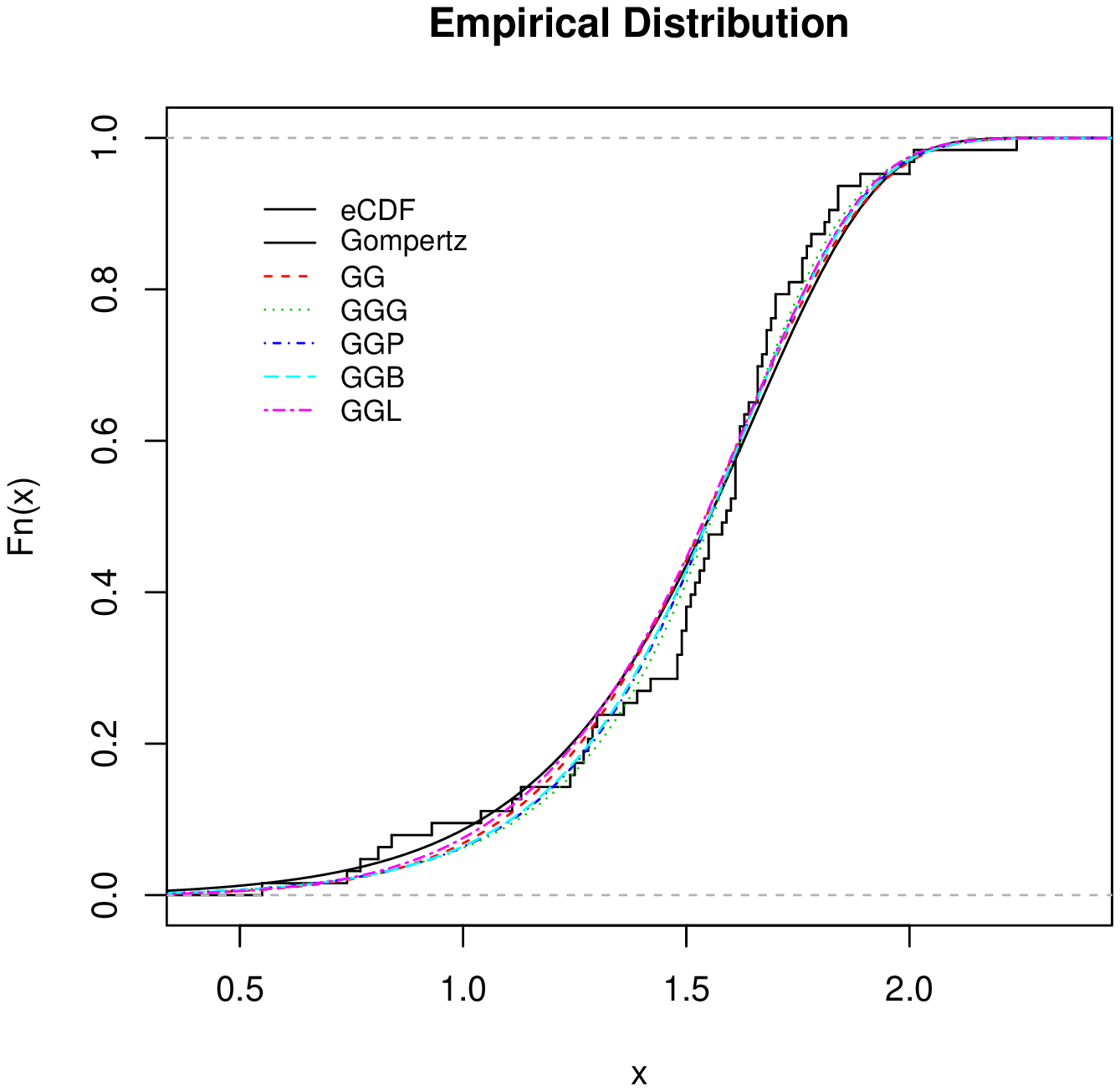}
\vspace{-0.8cm}
\caption{Plots (pdf and cdf) of  fitted Gompertz, generalized  Gompertz, GGG, GGP, GGB and GGL distributions for the first data set.}\label{plot.EX}
%
\includegraphics[width=7.75cm,height=7.75cm]{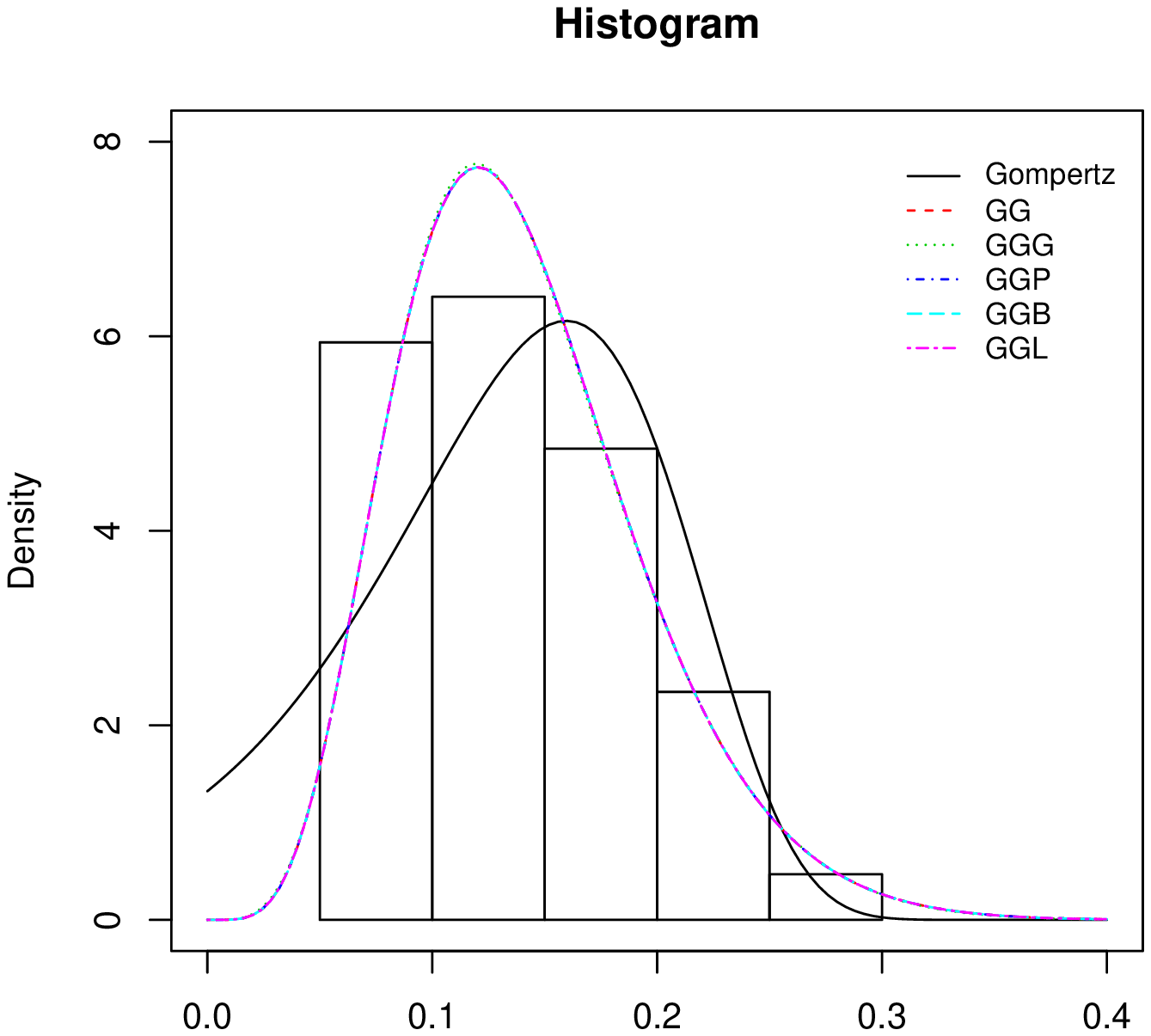}
\includegraphics[width=7.75cm,height=7.75cm]{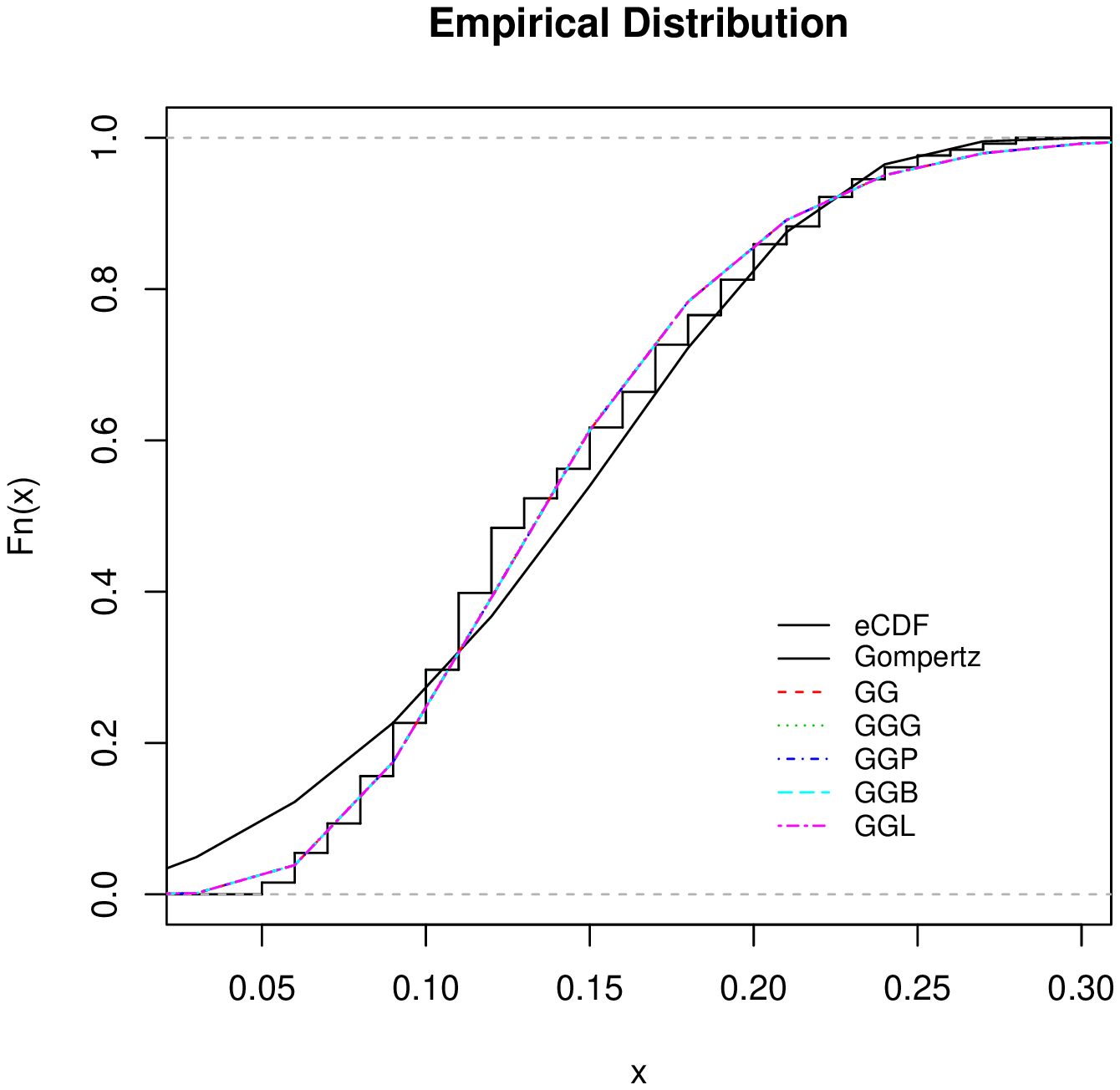}
\vspace{-0.8cm}
\caption{Plots (pdf and cdf) of  fitted Gompertz, generalized  Gompertz, GGG, GGP, GGB and GGL distributions for the second data set.}\label{plot.EX2}
\end{figure}

\begin{figure}
\centering
\includegraphics[width=7.75cm,height=7.75cm]{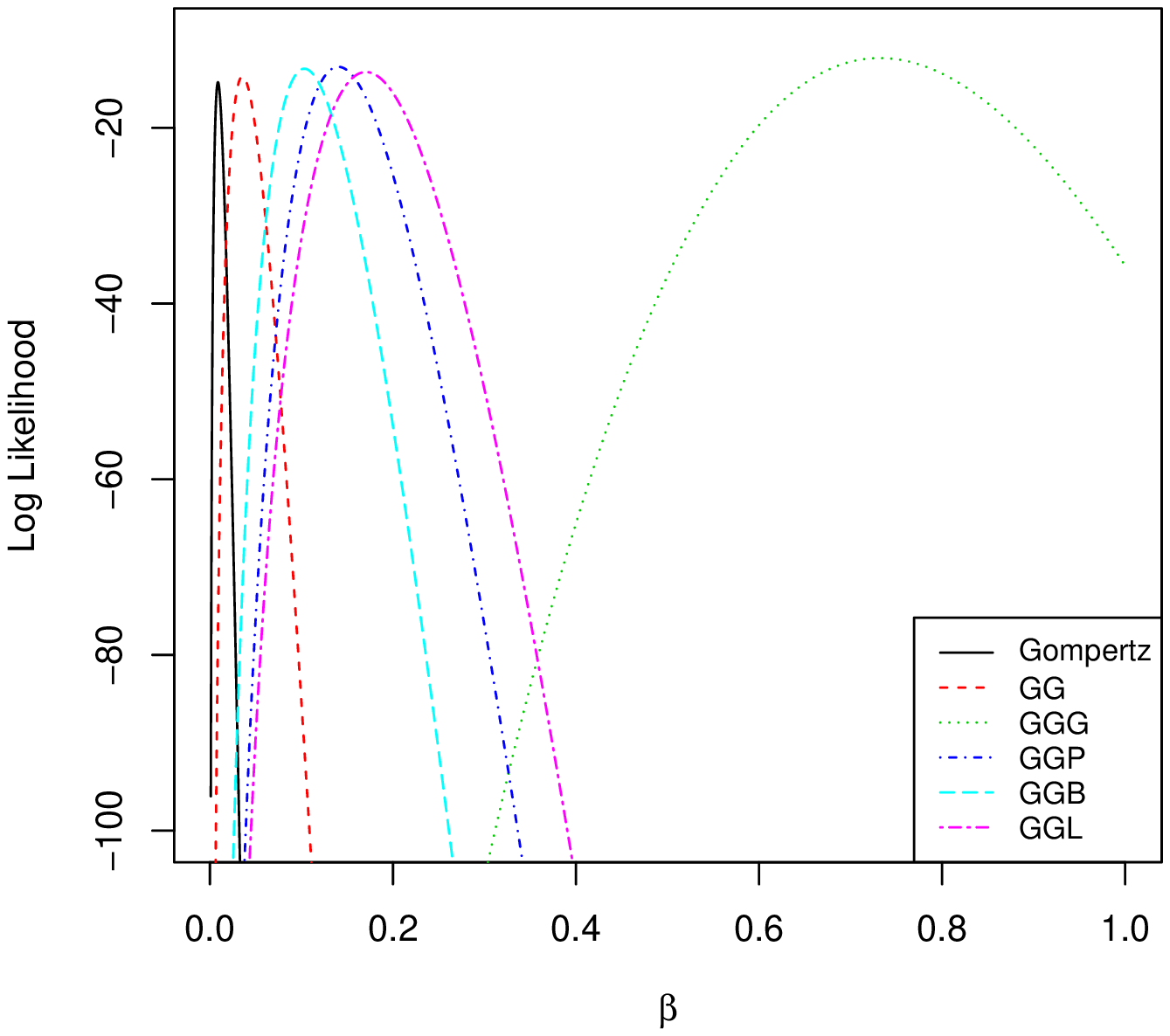}
\includegraphics[width=7.75cm,height=7.75cm]{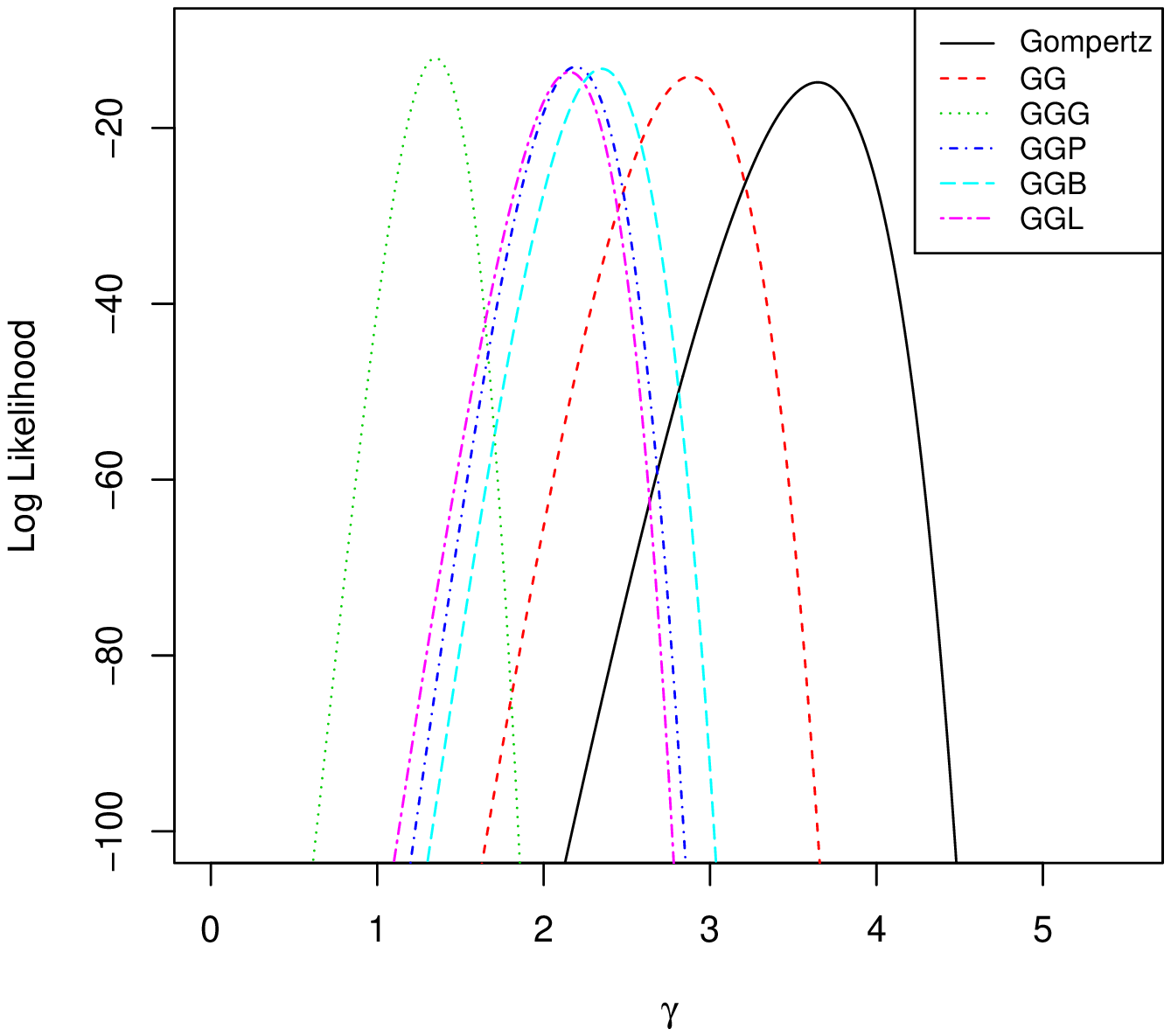}
\includegraphics[width=7.75cm,height=7.75cm]{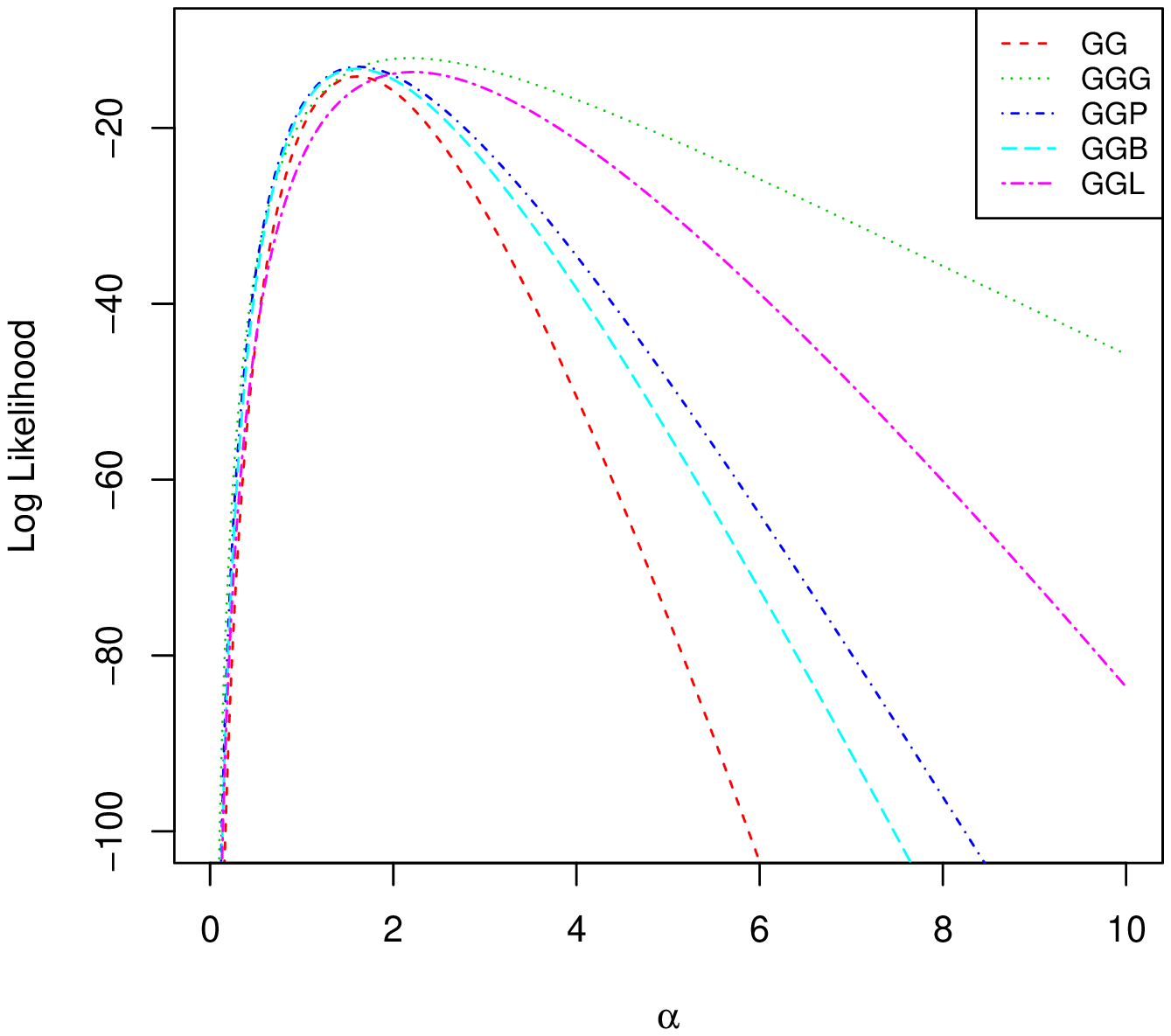}
\includegraphics[width=7.75cm,height=7.75cm]{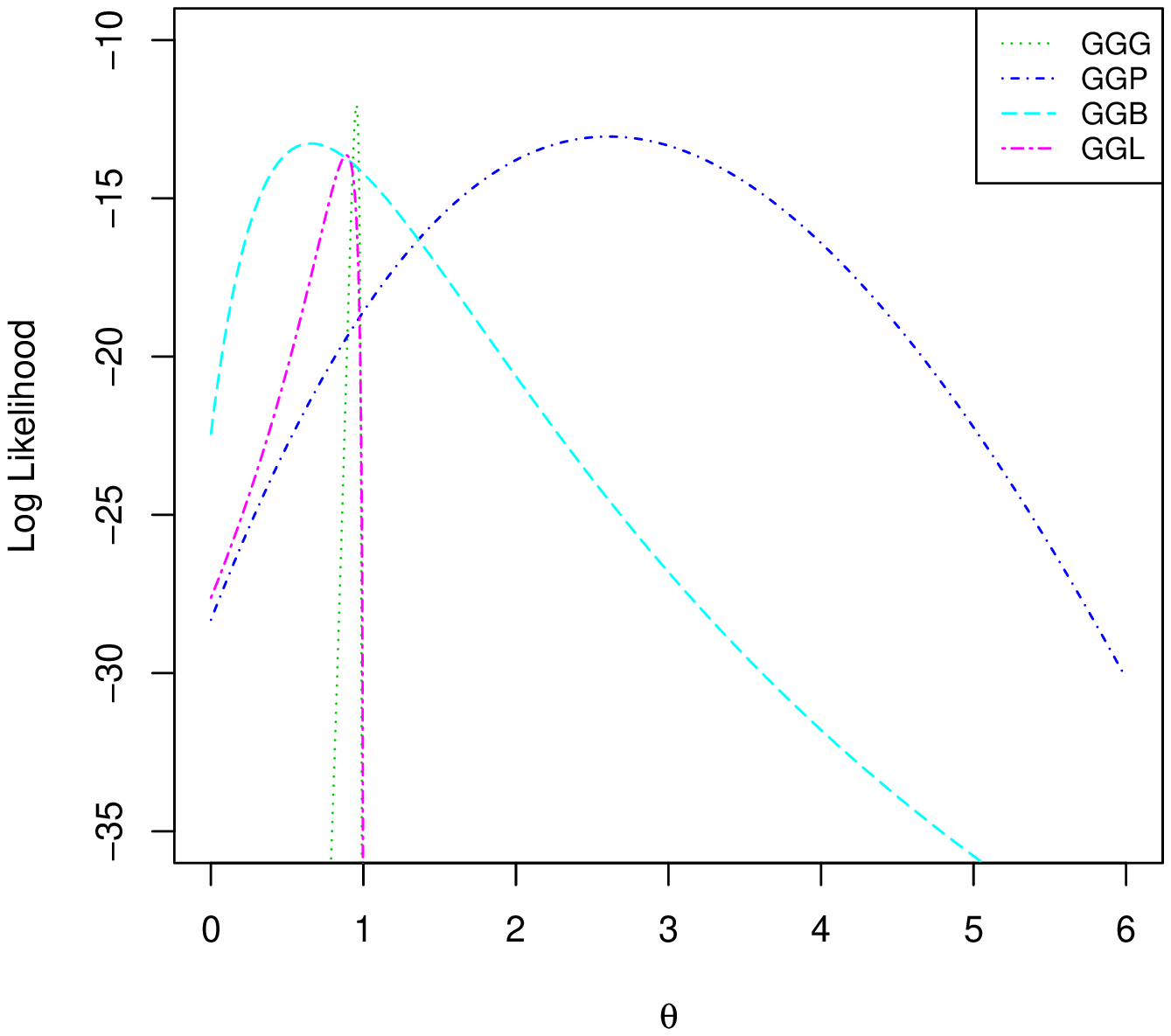}
\vspace{-0.8cm}
\caption{The profile log-likelihood functions for  Gompertz, generalized  Gompertz, GGG, GGP, GGB and GGL distributions for the first data set. }\label{plot.pro1}
\end{figure}

\begin{figure}
\centering
\includegraphics[width=7.75cm,height=7.75cm]{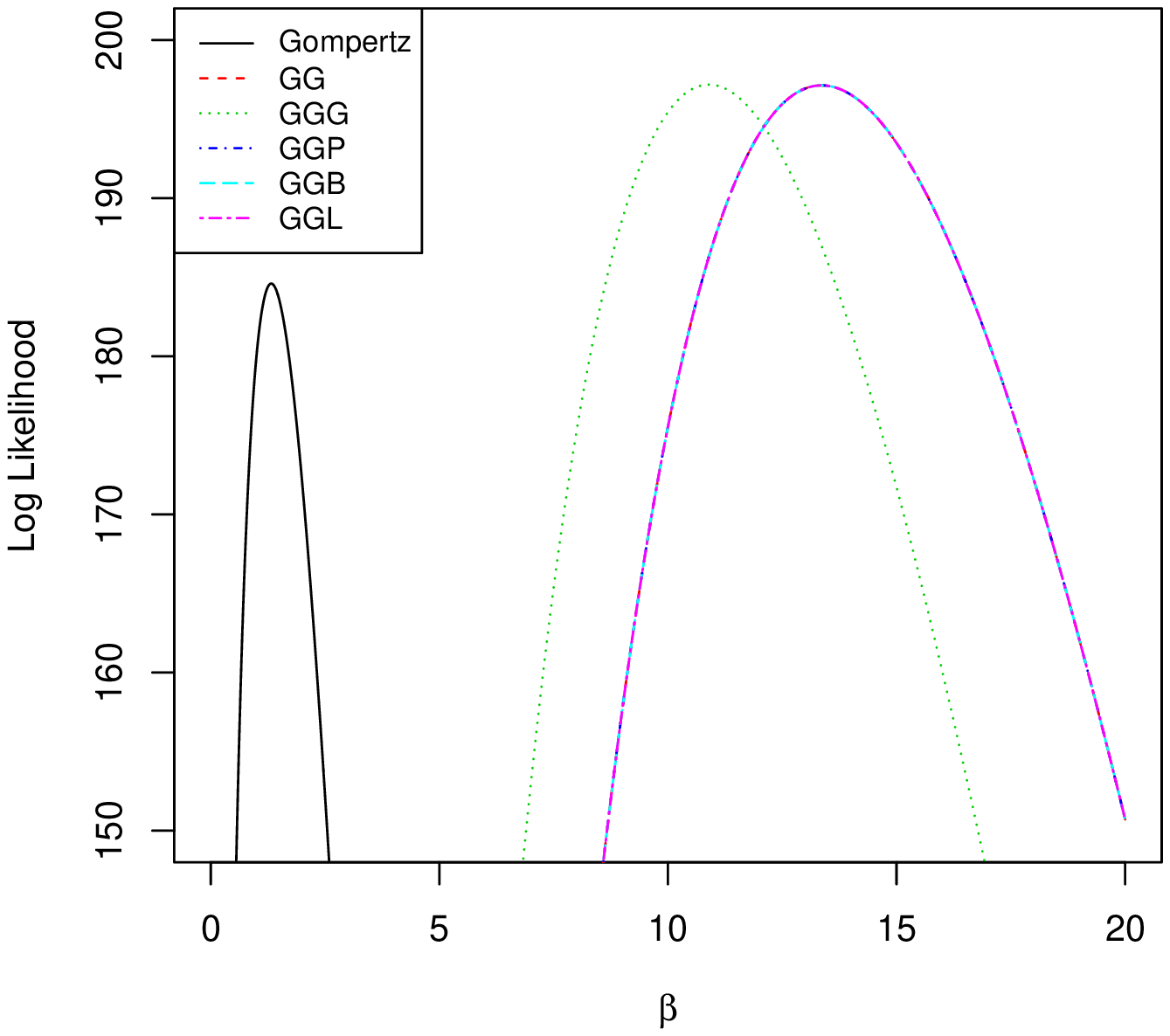}
\includegraphics[width=7.75cm,height=7.75cm]{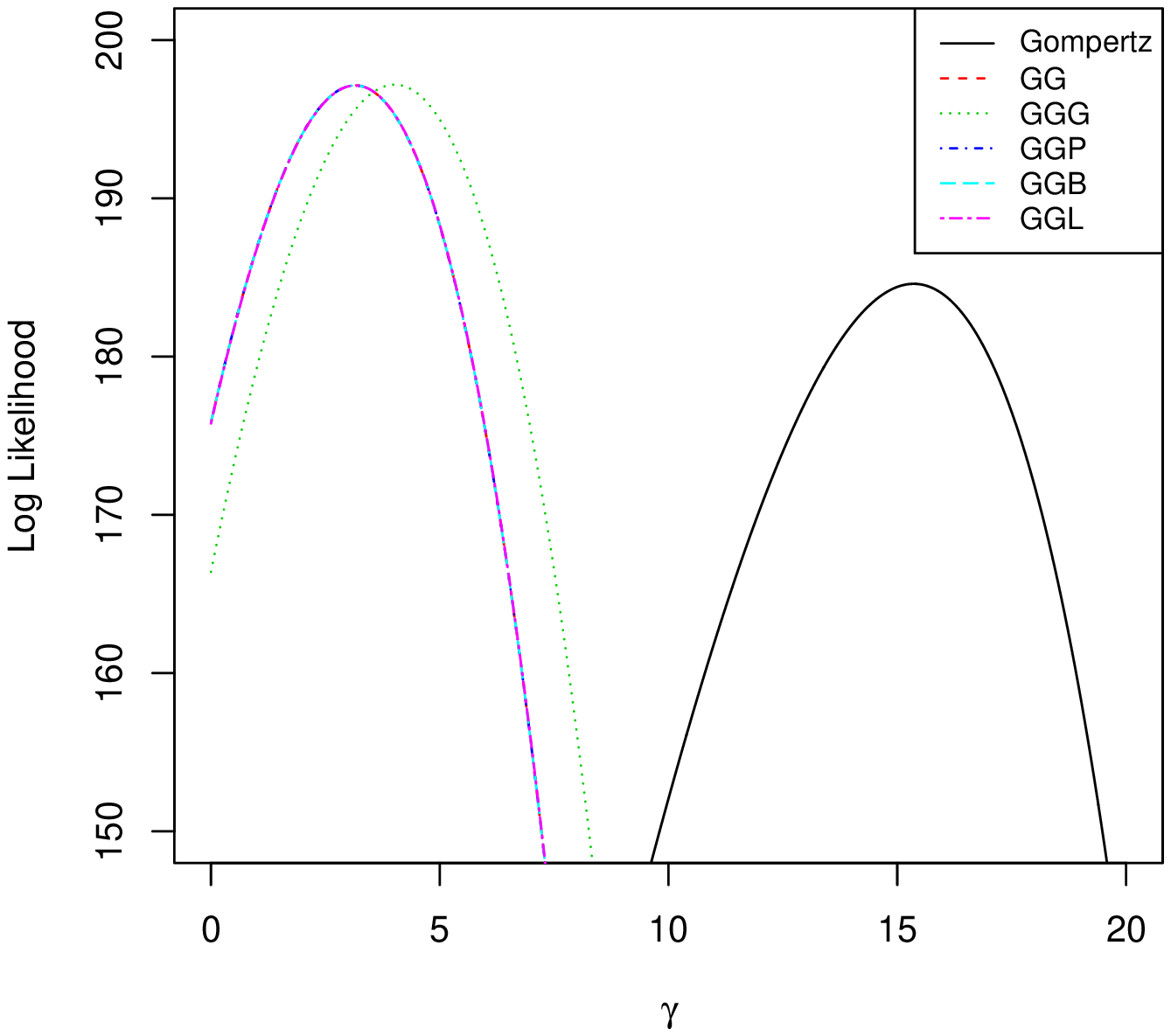}
\includegraphics[width=7.75cm,height=7.75cm]{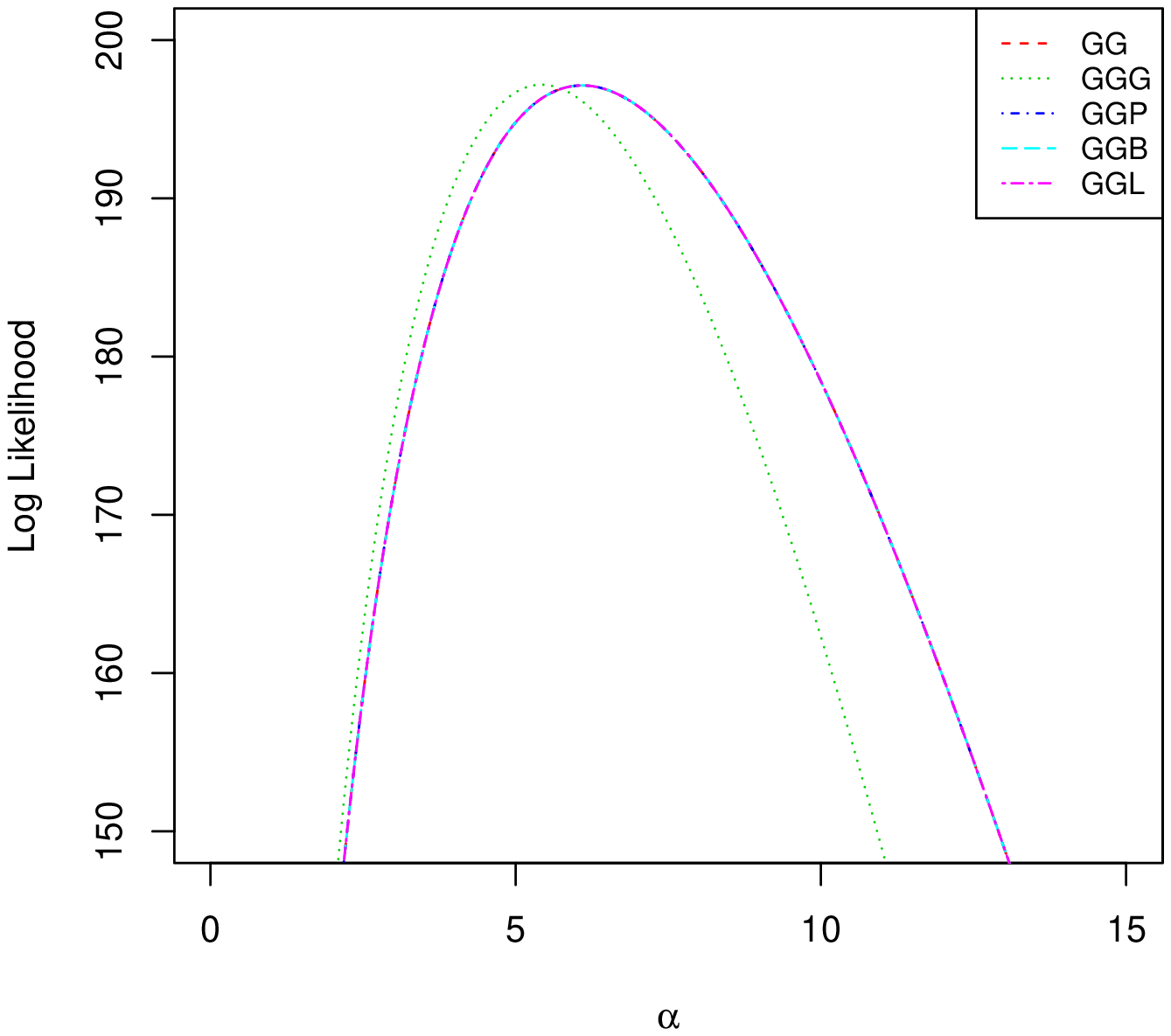}
\includegraphics[width=7.75cm,height=7.75cm]{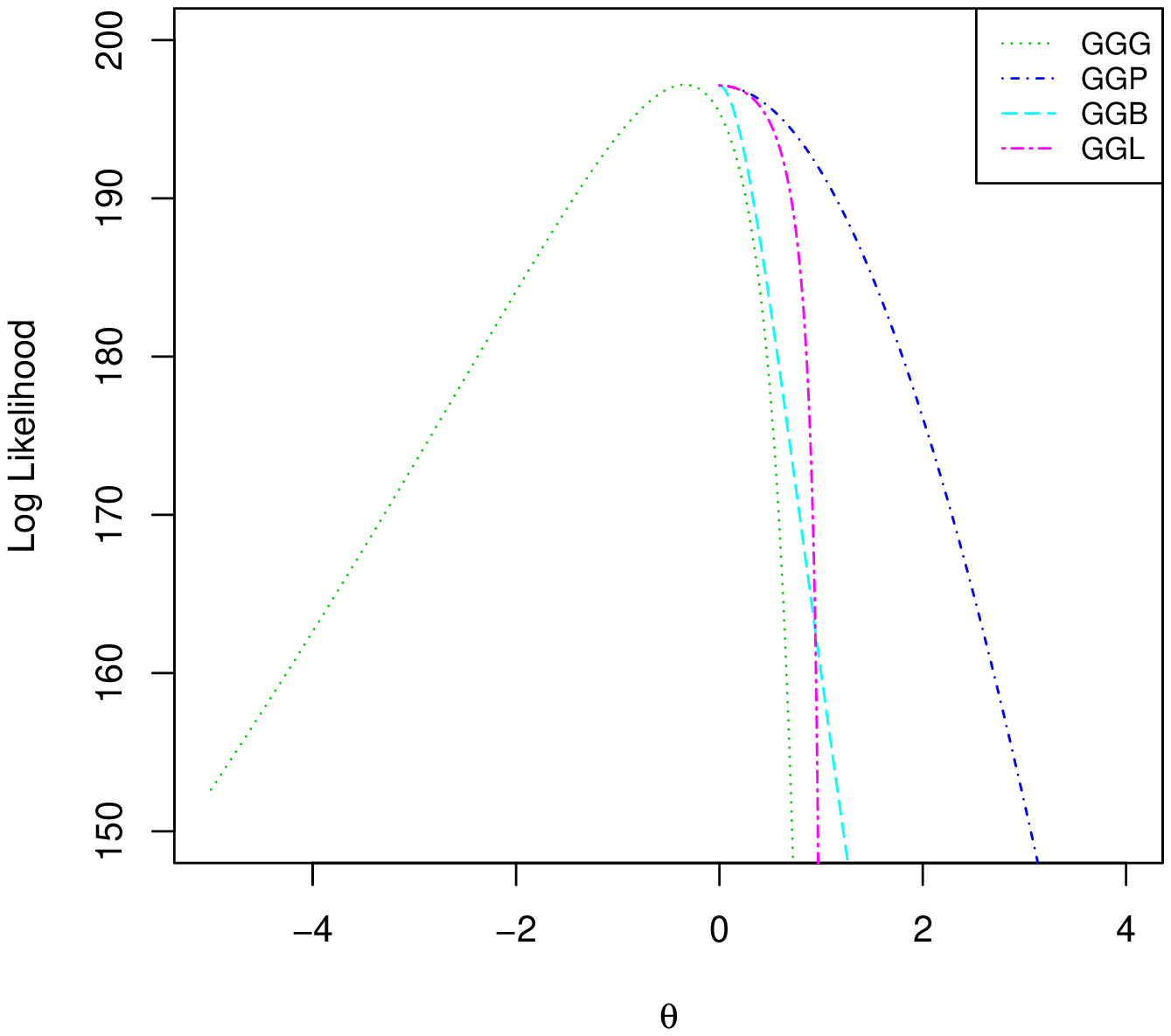}
\vspace{-0.8cm}
\caption{The profile log-likelihood functions for  Gompertz, generalized  Gompertz, GGG, GGP, GGB and GGL distributions for the second data set.}\label{plot.pro2}
\end{figure}

\newpage

\section*{ Appendix }

\subsection*{A.}
We demonstrate those parameter intervals for which the hazard function is decreasing, increasing and bathtub
shaped, and in order to do so, we follow closely a theorem given by
\cite{glaser-80}.
 Define the function $\tau(x)=\frac{-f'(x)}{f(x)}$  where $f'(x)$
 denotes the first derivative of $f(x)$ in \eqref{fGP}. To simplify, we consider $u=1-\exp(\frac{-\theta}{\gamma}(e^{\gamma x}-1))$.
\subsubsection*{A.1}
Consider the GGG hazard function in \eqref{eq.hGG}, then we define
$$
\tau(u)=\frac{-f'(u)}{f(u)}=\frac{1-\alpha}{u}+\frac{2\alpha\theta u^{\alpha-1}}{1-\theta u^{\alpha}}.
$$
If $\alpha\geq 1$, then $\tau'(u)>0$, and $h(.)$ is an increasing function.
If $0<\alpha< 1$, then
$$\lim_{u\rightarrow 0}\tau'(u)=-\infty, \;\;
\lim_{u\rightarrow 1}\tau'(u)=\frac{2\alpha \theta^{2}}{(1-\theta)^{2}}+(\alpha-1)(1-\frac{1}{(1-\theta)^{2}})>0.
$$
Since the limits have different signs, the equation
$\tau'(u)=0$ has at least one root.
Also, we can show that $\tau''(u)>0$.
Therefore, the equation $\tau'(u)=0$ has one root. Thus the hazard function is decreasing and bathtub shaped in this case.

\subsubsection*{A.2}
The  GGP hazard rate is given by
$
h(u)={\theta\alpha\beta u^{\alpha-1}e^{\theta u^{\alpha}}}/({e^{\theta}-e^{\theta u^{\alpha}}}).
$
We define $\eta(u)=\log[h(u)]$. 
Then, its first derivative is
$$
\eta'(u)=\frac{\alpha-1}{u}+\alpha\theta e^{\theta}\frac{u^{\alpha-1}}{e^{\theta}-e^{\theta u^{\alpha}}}.
$$
It is clearly for $\alpha\geq 1$, $\eta'(u)>0$ and  $h(u)$ is increasing function.
If  $0<\alpha<1$, then
$$\lim_{u\rightarrow 0}\eta'(u)=-\infty,\;\; \lim_{u\rightarrow 1}\eta'(u)=0,$$
So the equation  $\tau'(u)=0$ has at least one root. Also, we can show that $\tau''(u)>0$.
It implies that equation $\eta'(u)=0$ has a one root and the hazard rate increase and bathtub shaped.

\subsection*{B.}
\subsubsection*{B.1}
Let $w_1(\alpha)=\sum_{i=1}^{n}\frac{\theta t_i^{\alpha} \log(t_i)C''(\theta t_i^{\alpha})}{C'(\theta t_i^{\alpha})}=\frac{\partial}{\partial\alpha}\sum_{i=1}^{n}\log(C'(\theta t_i^{\alpha}))$.
For GGG,
$$
w_{1}(\alpha)=2\theta \sum_{i=1}^{n}\frac{t_{i}^{\alpha}\log t_{i} }{1-\theta t_{i}^{\alpha}},\ \ \ \    \frac{\partial w_{1}(\alpha)}{\partial \alpha}=2\theta \sum_{i=1}^{n}t_{i}^{\alpha}[\frac{\log t_{i} }{1-\theta t_{i}^{\alpha}}]^{2}>0.
$$
For GGP,
$$
w_{1}(\alpha)=\theta \sum_{i=1}^{n}t_{i}^{\alpha}\log t_{i}, \ \ \ \ \frac{\partial w_{1}(\alpha)}{\partial \alpha}=\theta \sum_{i=1}^{n}t_{i}^{\alpha}[\log t_{i} ]^{2}>0.
$$
For GGL,
$$
w_{1}(\alpha)=\theta \sum_{i=1}^{n}\frac{t_{i}^{\alpha}\log t_{i} }{1-\theta t_{i}^{\alpha}},\ \ \ \ \frac{\partial w_{1}(\alpha)}{\partial \alpha}=\theta \sum_{i=1}^{n}t_{i}^{\alpha}[\frac{\log t_{i} }{1-\theta t_{i}^{\alpha}}]^{2}>0.
$$
For GGB,
$$
w_{1}(\alpha)=(m-1)\theta\sum_{i=1}^{n}\frac{t_{i}^{\alpha}\log t_{i} }{1+\theta t_{i}^{\alpha}},\ \ \ \ \frac{\partial w_{1}(\alpha)}{\partial \alpha}=(m-1)\theta \sum_{i=1}^{n}t_{i}^{\alpha}[\frac{\log t_{i} }{1+\theta t_{i}^{\alpha}}]^{2}>0.
$$
Therefore, $w_1(\alpha)$ is strictly increasing in $\alpha$ and
$$\lim_{\alpha\rightarrow0^+} {\rm g}_1(\alpha;\beta,\gamma,\theta,x)=\infty, \qquad
\lim_{\alpha\rightarrow\infty} {\rm g}_1(\alpha;\beta,\gamma,\theta,x)=\sum_{i=1}^{n}\log(t_i).$$
Also,
$${\rm g}_1(\alpha;\beta,\gamma,\theta,x)<\frac{n}{\alpha}+\sum_{i=1}^{n}\log(t_i),\qquad
{\rm g}_1(\alpha;\beta,\gamma,\theta,x)>\frac{n}{\alpha}+(\frac{\theta C''(\theta)}{C'(\theta)}+1)\sum_{i=1}^{n}\log(t_i).$$
Hence, ${\rm g}_1(\alpha;\beta,\gamma,\theta,x)<0$ when $\frac{n}{\alpha}+\sum_{i=1}^{n}\log(t_i)<0$,
and ${\rm g}_1(\alpha;\beta,\gamma,\theta,x)>0$ when $\frac{n}{\alpha}+(\frac{\theta C''(\theta)}{C'(\theta)}+1)\sum_{i=1}^{n}\log(t_i)>0$. The proof is completed.

\subsubsection*{B.2}
It can be easily shown that
$$
\lim_{\beta\rightarrow0^+} {\rm g}_2(\beta;\alpha,\gamma,\theta,x)=\infty,      \ \ \ \ \ \       \lim_{\beta\rightarrow\infty} {\rm g}_2(\beta;\alpha,\gamma,\theta,x)=\frac{-1}{\gamma}\sum_{i=1}^{n}(e^{\gamma x_{i}}-1).
$$
Since the limits have different signs, the equation ${\rm g}_2(\beta;\alpha,\gamma,\theta,x)=0$ has at least one root with respect to $\beta$ for fixed values $\alpha$, $\gamma$ and $\theta$.  The proof is completed.

\subsubsection*{B.3}

\noindent a) For GGP, it is clear that
\[ \mathop{\lim }_{\theta \rightarrow 0}  {\rm g}_{3}(\theta;\alpha,\beta,\gamma,x)=\sum_{i=1}^{n}t_{i}^{\alpha}-\frac{n}{2}, \ \ \ \ \
\mathop{\lim }_{\theta \rightarrow \infty} {\rm g}_{3}(\theta;\alpha,\beta,\gamma,x)=-\infty.\]
Therefore, the equation ${\rm g}_{3}(\theta;\alpha,\beta,\gamma,x)=0$ has at least one root for $\theta>0$, if $\sum_{i=1}^{n}t_{i}^{\alpha}-\frac{n}{2}>0$  or $ \sum_{i=1}^{n}t_{i}^{\alpha}>\frac{n}{2}$.

\noindent b) For GGG, it is clear that
\[ \mathop{\lim }_{\theta \rightarrow \infty} {\rm g}_{3}(\theta;\alpha,\beta,\gamma,x)=-\infty, \ \ \ \ \
 \mathop{\lim }_{\theta \rightarrow 0^+} {\rm g}_{3}(\theta;\alpha,\beta,\gamma,x)=-n+2\sum_{i=1}^{n}t_{i}^{\alpha}.\]
Therefore, the
equation  ${\rm g}_{3}(\theta,\beta,\gamma,x)=0$ has at least one root for $0<\theta<1$, if  $-n+2\sum_{i=1}^{n}t_{i}^{\alpha} >0$ or $\sum_{i=1}^{n}t_{i}^{\alpha} >\frac{n}{2}$.

\noindent For GGL, it is clear that
 \[ \mathop{\lim }_{\theta \rightarrow 0} {\rm g}_{3}(\theta;\alpha,\beta,\gamma,x)=\sum_{i=1}^{n}t_{i}^{\alpha}-\frac{n}{2},  \ \ \ \ \
 \mathop{\lim }_{\theta \rightarrow 1} {\rm g}_{3}(\theta;\alpha,\beta,\gamma,x)=-\infty.\]
 Therefore, the equation ${\rm g}_{3}(\theta;\alpha,\beta,\gamma,x)=0$ has at least one root for $0<\theta<1$, if $\sum_{i=1}^{n}t_{i}^{\alpha}-\frac{n}{2}>0$  or $ \sum_{i=1}^{n}t_{i}^{\alpha}>\frac{n}{2}$.

\noindent For GGB, it is clear that
\[ \mathop{\lim }_{p \rightarrow 0} {\rm g}_{3}(p;\alpha,\beta,\gamma,x)=\sum_{i=1}^{n}t_{i}^{\alpha}(m-1)-\frac{n(m-1)}{2}, \ \ \
 \mathop{\lim }_{p \rightarrow 0} {\rm g}_{3}(p;\alpha,\beta,\gamma,x)=\sum_{i=1}^{n}\frac{-m+1+m t_{i}^{\alpha}}{t_{i}},\]
 Therefore, the equation ${\rm g}_{3}(p;\alpha,\beta,\gamma,x)=0$ has at least one root for $0<p<1$, if $\sum_{i=1}^{n}t_{i}^{\alpha}(m-1)-\frac{n(m-1)}{2}>0$  and  $\sum_{i=1}^{n}\frac{-m+1+m t_{i}^{\alpha}}{t_{i}^{\alpha}}<0$ or $\sum_{i=1}^{n}t_{i}^{\alpha}>\frac{n}{2}$ and $\sum_{i=1}^{n} t_{i}^{-\alpha}>\frac{nm}{1-m}$.

\subsection*{C.}
Consider $t_{i}=1-e^{-\frac{\beta}{\gamma}(e^{\gamma x_{i}}-1)}$. Then, the elements of $4\times 4$  observed information matrix $I_{n}(\Theta)$  are given by
\begin{eqnarray*}
 I_{\alpha\alpha}&=& \frac{\partial^{2}l_{n}}{\partial \alpha^{2}}=\frac{-n}{\alpha^{2}}+\theta\sum_{i=1}^{n}t_{i}^{\alpha} [\log(t_i)]^{2}[\frac{C''(\theta t_{i}^{\alpha})}{C'(\theta t_{i}^{\alpha})}+\theta t_{i}^{\alpha}\frac{C'''(\theta t_{i}^{\alpha})C'(\theta t_{i}^{\alpha})-(C''(\theta t_{i}^{\alpha}))^{2}}{(C'(\theta t_{i}^{\alpha}))^{2}} ],\nonumber
\\
 I_{\alpha\beta}&=& \frac{\partial^{2}l_{n}}{\partial \alpha \partial \beta }=\sum_{i=1}^{n}[\frac{e^{\gamma x_i}-1}{\gamma}]+\frac{\theta}{\gamma}\sum_{i=1}^{n}t_{i}^{\alpha} [e^{\gamma x_i}-1][(\alpha \log(t_i)+1)\frac{C''(\theta t_{i}^{\alpha})}{C'(\theta t_{i}^{\alpha})}
\\
&&+\alpha\theta t_{i}^{\alpha}\log(t_i)\frac{C'''(\theta t_{i}^{\alpha})C'(\theta t_{i}^{\alpha})-(C''(\theta t_{i}^{\alpha}))^{2}}{(C'(\theta t_{i}^{\alpha}))^{2}} ],
\\
 I_{\alpha\gamma}&=& \frac{\partial^{2}l_{n}}{\partial \alpha \partial \gamma }=\beta \sum_{i=1}^{n}[\frac{e^{\gamma x_i}(\gamma x_i -1)+1}{\gamma^{2}}]+\frac{\theta\beta}{\gamma^{2}}\sum_{i=1}^{n} [e^{\gamma x_i(\gamma x_i -1)}+1][(\alpha \log(t_i)+1)\frac{C''(\theta t_{i}^{\alpha})}{C'(\theta t_{i}^{\alpha})}
\\
&&+\alpha\theta t_{i}^{\alpha}\log(t_i)\frac{C'''(\theta t_{i}^{\alpha})C'(\theta t_{i}^{\alpha})-(C''(\theta t_{i}^{\alpha}))^{2}}{(C'(\theta t_{i}^{\alpha}))^{2}} ],
\\
 I_{\alpha\theta}&=& \frac{\partial^{2}l_{n}}{\partial \alpha \partial \theta }= \sum_{i=1}^{n}t_{i}^{\alpha} \log(t_i)[\frac{C''(\theta t_{i}^{\alpha})}{C'(\theta t_{i}^{\alpha})}+\theta t_{i}^{\alpha}\frac{C'''(\theta t_{i}^{\alpha})C'(\theta t_{i}^{\alpha})-(C''(\theta t_{i}^{\alpha}))^{2}}{(C'(\theta t_{i}^{\alpha}))^{2}} ],
\\
 I_{\beta\beta}&=& \frac{\partial^{2}l_{n}}{\partial \beta^{2}}=\frac{-n}{\beta^{2}}+\theta \alpha^{2}\sum_{i=1}^{n}t_{i}^{\alpha} [\frac{e^{\gamma x_i}-1}{\gamma}]^{2}[\frac{C''(\theta t_{i}^{\alpha})}{C'(\theta t_{i}^{\alpha})}+\theta t_{i}^{\alpha}\frac{C'''(\theta t_{i}^{\alpha})C'(\theta t_{i}^{\alpha})-(C''(\theta t_{i}^{\alpha}))^{2}}{(C'(\theta t_{i}^{\alpha}))^{2}}],
 \\
  I_{\beta\gamma}&=& \frac{\partial^{2}l_{n}}{\partial \beta \partial \gamma }=\frac{(\alpha-2)}{\gamma^{2}}\sum_{i=1}^{n}(e^{\gamma x_i}(\gamma x_i -1)+1)+\alpha\theta \sum_{i=1}^{n}\frac{t_i}{\gamma^{2}}(e^{\gamma x_i}(\gamma x_i -1)+1)[\frac{C''(\theta t_{i}^{\alpha})}{C'(\theta t_{i}^{\alpha})}
\\
 &&+\frac{\beta^{2}}{\gamma}(e^{\gamma x_i}-1)\frac{C'''(\theta t_{i}^{\alpha})C'(\theta t_{i}^{\alpha})-(C''(\theta t_{i}^{\alpha}))^{2}}{(C'(\theta t_{i}^{\alpha})))^{2}}]
\\
 I_{\beta\theta}&=& \frac{\partial^{2}l_{n}}{\partial \beta \partial \theta}=\sum_{i=1}^{n}t_{i}^{2\alpha}[\frac{C'''(\theta t_{i}^{\alpha})C'(\theta t_{i}^{\alpha})-(C''(\theta t_{i}^{\alpha}))^{2}}{(C'(\theta t_{i}^{\alpha}))^{2}} ],
\\
 I_{\gamma\gamma}&=&  \frac{\partial^{2}l_{n}}{\partial \gamma^{2}}=\frac{2\beta}{\gamma^{3}}\sum_{i=1}^{n}[e^{\gamma x_i}(\gamma x_i -1)+1]
 +(\alpha-1)\beta \sum_{i=1}^{n}[\frac{-2}{\gamma^{3}}(e^{\gamma x_i}(\gamma x_i -1)+1)+\frac{x_i^{2}e^{\gamma x_i}}{\gamma^{3}}]
 \\
&&+\alpha\beta\theta\sum_{i=1}^{n}[\frac{-2}{\gamma^{3}}(e^{\gamma x_i}(\gamma x_i -1)+1)t_{i}^{\alpha}\frac{C''(\theta t_{i}^{\alpha})}{C'(\theta t_{i}^{\alpha})}+\frac{t_{i}^{\alpha}x_i^{2}e^{\gamma x_i}}{\gamma}\frac{C''(\theta t_{i}^{\alpha})}{C'(\theta t_{i}^{\alpha})}
\\
&&+\frac{\alpha\beta t_{i}^{\alpha}}{\gamma^{4}}(e^{\gamma x_i}(\gamma x_i -1)+1)^{2}\frac{C''(\theta t_{i}^{\alpha})}{C'(\theta t_{i}^{\alpha})}
\\
&&+\frac{\alpha\beta t_{i}^{2\alpha}}{\gamma^{4}}(e^{\gamma x_i}(\gamma x_i -1)+1)^{2}\frac{C'''(\theta t_{i}^{\alpha})C'(\theta t_{i}^{\alpha})-(C''(\theta t_{i}^{\alpha}))^{2}}{(C'(\theta t_{i}^{\alpha})))^{2}}],
\\
 I_{\theta\gamma}&=& \frac{\partial^{2}l_{n}}{\partial \theta \partial \gamma }= \alpha\beta\sum_{i=1}^{n}\frac{t_{i}^{\alpha}}{\gamma^{2}}[e^{\gamma x_i}(\gamma x_i -1)+1][\frac{C''(\theta t_{i}^{\alpha})}{C'(\theta t_{i}^{\alpha})}+\theta t_{i}^{\alpha}\frac{C'''(\theta t_{i}^{\alpha})C'(\theta t_{i}^{\alpha})-(C''(\theta t_{i}^{\alpha}))^{2}}{(C'(\theta t_{i}^{\alpha}))^{2}}],
\\
 I_{\theta\theta}&=& \frac{\partial^{2}l_{n}}{\partial \theta^{2}}=\frac{-n}{\theta^{2}}+\sum_{i=1}^{n}t_{i}^{2\alpha}[\frac{C'''(\theta t_{i}^{\alpha})C'(\theta t_{i}^{\alpha})-(C''(\theta t_{i}^{\alpha}))^{2}}{(C'(\theta t_{i}^{\alpha}))^{2}} ]-n[\frac{C''(\theta)C'(\theta)-(C'(\theta))^{2}}{(C'(\theta ))^{2}} ],
 \end{eqnarray*}

\section*{Acknowledgements}
The authors would like to thank the anonymous  referees  for many helpful comments and  suggestions.

\bibliographystyle{apa}

\end{document}